\documentclass[11pt]{article}

\usepackage[margin=1in]{geometry}
\usepackage{amsmath,amssymb,amsthm,mathtools}
\usepackage{enumitem}
\usepackage{xcolor}
\usepackage{hyperref}
\usepackage{color}
\usepackage{biblatex}
\usepackage[normalem]{ulem} 
 
\hypersetup{colorlinks=true,linkcolor=blue!60!black,citecolor=blue!60!black,urlcolor=blue!60!black}

\newtheorem{theorem}{Theorem}[section]
\newtheorem{proposition}[theorem]{Proposition}
\newtheorem{lemma}[theorem]{Lemma}
\newtheorem{corollary}[theorem]{Corollary}
\newtheorem{assumption}[theorem]{Assumption}
\newtheorem{remark}[theorem]{Remark}
\newtheorem{definition}[theorem]{Definition}

\newcommand{\X}{X}
\newcommand{\A}{A}
\newcommand{\Pcal}{\mathcal P}
\newcommand{\R}{\mathbb R}
\newcommand{\dd}{\,\mathrm d}
\newcommand{\norm}[1]{\left\lVert #1 \right\rVert}

\newcommand{\spn}{\operatorname{span}}
\newcommand{\esssup}{\operatorname*{ess\,sup}}

\bibliography{references}

\author{Junji Yan$^{\dagger}$ \and U\u{g}ur Ayd{\i}n$^{\dagger}$ \and Tamer Ba\c{s}ar\thanks{Authors are with the Department of Electrical and Computer Engineering, and the Coordinated Science Laboratory, University of Illinois Urbana-Champaign, Urbana, IL, 61801
        {\tt\small \{junjiy2,uaydin2,basar1\}@illinois.edu}
         The first two authors contributed equally to this work. }%
}
\title{Horizon-Independent Contraction for Continuous-Time Discounted Regularized Mean-Field Games\thanks{Research of the first authors was supported in part by the IBM-ILLINOIS Discovery Accelerator Institute (IIDAI) under Grant \#122955. Research of the second and third authors was supported in part by the AFOSR Grant FA9550-24-1-0152.}}
\date{}
\begin{document}
\maketitle

\begin{abstract}
    We study contraction properties of non-stationary continuous-time mean-field games (MFGs) under discounting and entropy regularization. The state of the representative agent evolves according to a controlled continuous-time Markov chain, and both the state and action spaces are finite. 
    In contrast to the undiscounted case, we show that, under a sufficiently large discount rate, finite-horizon MFGs admit a horizon-independent contraction condition, which also coincides with the corresponding infinite-horizon non-stationary contraction condition. As a byproduct, we obtain an explicit convergence rate between finite- and infinite-horizon mean-field equilibria.  For each finite horizon, we further derive a refined contraction criterion from the spectral radius of a positive operator that majorizes the propagation of policy errors, and show that its large-horizon limit agrees with the horizon-independent contraction factor. Finally, we provide an explicit error bound between discounted and undiscounted finite-horizon regularized equilibria.
\end{abstract}

\section{Introduction}
\subsection{Background and Motivation}
Mean-field games (MFGs) provide a powerful framework for approximating solutions to multi-agent problems with identical agents. The theory of MFGs was introduced independently by Lasry and Lions in \cite{LaLi07}, who coined the term ``mean-field games'', and by Huang, Malham\'{e}, and Caines in \cite{HuMaCa06}, who formulated the problem from the viewpoint of stochastic dynamic games. These foundational works study continuous-time noncooperative differential games with a large but finite number of agents, where the influence of each individual agent on the population becomes asymptotically negligible as the number of agents grows. In this paper, we focus on finite-state MFGs in which the state of the
representative agent evolves according to a continuous-time Markov process.

MFGs reduce multi-agent problems with identical agents to effective single-agent problems and provide approximate solutions to general Markov games, for which the existence of a solution is not always guaranteed \cite{SaBaRaSIAM}. Motivated by the growing interest in multi-agent systems, computational methods for MFGs have also attracted significant attention \cite{guo2019learning}. However, MFG dynamics generally do not satisfy any natural contraction property. In fact, computing equilibria in discrete-time MFGs under mere Lipschitz conditions is PPAD-complete \cite{yardim2024mean}, suggesting that the problem is computationally intractable in general \cite{papadimitriou1994complexity}. This makes it important to identify structural conditions under which mean-field equilibria (MFE) can be computed efficiently.

There are two main setups used for computational methods in the current literature of MFGs: contraction \cite{eich2026approximately} and monotonicity \cite{dayanikli2025machine}. Under monotonicity assumptions, it is possible to obtain tractable iterative algorithms without imposing any restriction on the horizon length or the size of Lipschitz parameters \cite{carmona2018probabilistic, dayanikli2025machine}, and under further Lipschitz parameter independent structural restrictions, it is possible to obtain uniqueness of the MFE  \cite{carmona2015probabilistic} \cite{saldi2023partially}. In contrast, unlike Markov decision processes, contraction arguments mainly rely on small Lipschitz constant conditions \cite{anahtarci2023q} \cite{cui2021approximately}, which is an artifact of the forward-backward dynamics that governs the MFG. In the continuous-time case, the contraction condition also depends on the horizon length \cite{carmona2015probabilistic}.

To develop tractable algorithms for continuous-time MFGs within a learning framework, in this work we will focus on contraction-based arguments. A major challenge, however, is that in continuous time the contraction condition typically depends on the time horizon, which limits its applicability for long-horizon problems \cite{eich2026approximately} \cite{carmona2015probabilistic}. For this purpose, we study discounted MFGs instead. Besides being a well-posedness tool, we note that discounting is also commonly used in finite-horizon stochastic control problem models \cite{huangoptimal}. In this work, we revisit the dependence of contraction condition on the horizon length and show that, under mild regularity assumptions on the system components, regularized continuous-time MFGs admit a horizon-independent contraction condition whenever the discount rate is sufficiently large.

As a byproduct of our horizon-independent contraction conditions, we also obtain finite-time error bounds between infinite-horizon MFE and finite-horizon MFE for continuous-time MFGs. In general, to obtain an infinite-horizon non-stationary MFE, one has to solve an infinite-horizon dynamic programming problem, which is generally computationally intractable. Our finite-to-infinite horizon error bound therefore provides a tractable way to approximate infinite-horizon MFE. Furthermore, in practice one often seeks solutions to finite-horizon MFGs with a long but finite horizon, while instead using infinite-horizon solutions because they are structurally simpler. One example is the renewable energy model formulated as a mean-field type game in \cite[Subsection 15.7.1]{bacsar2026mean}. Our results also quantify when an infinite-horizon equilibrium provides an accurate approximation of the corresponding finite-horizon MFE with sufficiently long horizon.

\subsection{Literature Review}
\subsubsection{Computational Methods for MFE}

Computation of MFE has been studied through several complementary approaches, including convex optimization, regularization, learning-based methods, and structured control models. In discrete time, semi-definite programming and linear programming formulations have been developed to compute or approximate MFE under suitable structural assumptions \cite{guo2024mf,saldi2026linear}. Another line of work studies monotone MFGs with regularization, where entropy or other regularizers improve stability and enable convergence guarantees for iterative equilibrium computation \cite{zhang2023learning,dong2025last}. Related contraction-based ideas have also appeared in continuous-time regularized MFGs \cite{eich2026approximately}. In parallel, model-based learning methods for discrete-time MFGs have provided sample-complexity and statistical guarantees for computing equilibria when the model is unknown or estimated from data \cite{huang2024statistical,pasztor2021efficient}. For continuous-time structured models, especially linear-quadratic MFGs (LQMFGs), equilibrium computation can often be reduced to solving Riccati-type equations or entropy-regularized control problems \cite{guo2022entropy}. Our work complements previous studies on continuous-time discounted MFGs by establishing contraction and finite-to-infinite horizon convergence rates, providing theoretical guarantees that can support the computation and approximation of long-horizon MFE.

\subsubsection{Contraction Conditions for MFGs}
Contraction conditions for MFGs often require small Lipschitz coefficients for the system components as well as access to a Lipschitz continuous optimal policy \cite{guo2019learning,subramanian2019reinforcement,anahtarci2020value, eich2026approximately}. In the current literature, the most common way to satisfy these restrictions in finite-state and finite-action models is to  perturb the system objective with a \emph{regularizer} \cite{anahtarci2023q,cui2021approximately}, which causes a deviation from the true equilibria to obtain a Lipschitz continuous optimal policy. Despite the available prior work on stationary and finite-horizon non-stationary MFE, there is no fixed-point iteration algorithm studied in the case of infinite-horizon non-stationary MFG \cite{lauriere2022learning}. In the continuous-time case, the contraction conditions for the probabilistic formulation of MFGs are also known to depend on the horizon length \cite{angiuli2019cemracs}. In contrast, we show that in the presence of a sufficiently large discount rate, there exists a horizon-independent contraction condition for finite-horizon MFGs. However, the restriction on the discount rate is an artifact of the continuous-time setting and does not arise in the discrete-time case studied in \cite{aydin2025approximation}.
\subsubsection{Robustness and Stability of MFE}

Robustness and stability have become central themes in MFGs. In continuous time, robustness has been studied through PDE-based formulations of the coupled Hamilton--Jacobi--Bellman (HJB) and Fokker--Planck system \cite{bauso2016robust}, while related stability questions have been analyzed in structured models such as LQMFGs \cite{zaman2024robust} and Stackelberg-type MFGs \cite{guo2022optimization}. In discrete time, recent work examined the finite-horizon approximation of infinite-horizon equilibria \cite{aydin2025approximation2,aydin2025approximation}, and robust Markovian MFGs under ambiguity or model uncertainty \cite{langner2024markov,liang2026mean,lauriere2025robust}. It is shown in \cite{aydin2025approximation2,aydin2025approximation} that one needs stronger conditions than horizon-independent contraction condition to establish a rate of convergence between finite-horizon and infinite-horizon MFE in discrete time. In contrast, we show that no stronger condition is needed in continuous time, i.e., the horizon-independent contraction found for finite-horizon MFGs matches the corresponding infinite-horizon
non-stationary MFGs.

\subsection{Contributions}
\begin{enumerate}
    \item In Section \ref{sect:contraction}, we obtain a horizon-independent contraction condition for finite-horizon continuous-time MFGs whose state evolves according to a continuous-time Markov kernel (Theorem \ref{thrm:main}). For any given horizon length \(T\), this result amounts to representing MFE as a fixed point of a dynamical system and finding a positive operator (that depends on \(T\)) that can be used as a majorization (Lemma \ref{lem:finite-horizon-majorization}). By using the theory of positive operators on Banach spaces, we then find tight estimates for the spectral radius of this positive 
    operator as the time horizon \(T \to \infty\) (Theorem \ref{thrm:sauce2}). The proof of Theorem \ref{thrm:sauce2} also provides a quantitative method to calculate these spectral radii (Corollary \ref{cor:3.13}), which amounts to \(T\)-dependent contraction condition. Finally, we show that Bielecki metrics (exponentially weighted metrics) can be used to obtain contraction under the obtained asymptotes.

    \item In Section \ref{sect:infinite-error}, we establish contraction conditions for infinite-horizon non-stationary MFGs (Theorem \ref{thm:infty-contraction}). Through these results we demonstrate that, the horizon-independent contraction condition we have for finite-horizon MFGs in Theorem \ref{thrm:main} is the same as that for infinite-horizon MFGs. Under the contraction of the finite-horizon system, we also study the convergence rate of finite-horizon MFE to infinite-horizon non-stationary MFE.
    
    \item In Section \ref{sect:undisc}, we establish error bounds between finite-horizon discounted and undiscounted MFE (Proposition \ref{prop:map-gap}). Since finite-horizon MFGs usually admit horizon-dependent contraction conditions, our results provide insight in obtaining approximate undiscounted MFE by using discounted MFE.
\end{enumerate}

\section{Preliminaries and Notation}\label{sect:prelim}

\subsection{Finite-horizon continuous-time finite-state MFGs}
A finite-horizon continuous-time MFG with finite state and action spaces is specified by the tuple
\((\X,\A,c,\Lambda,g,\mu_0,T),\)
where:
\begin{enumerate}
    \item \(\X\) is a finite state space.
    \item \(\A\) is a finite action space.
    \item \(c:\X\times\A\times\Pcal(\X)\to\R\) is the running cost. 
    \item \(\Lambda:\X\times\X\times\A\times\Pcal(\X)\to \R\) is the controlled transition-rate kernel.
    \item \(g:\X\to\R\) is the terminal cost.
    \item \(\mu_0\in\Pcal(\X)\) is the initial state-measure.
    \item \(T<\infty\) is the horizon length.
\end{enumerate}
Here, for a finite set \(E\), by \( \Pcal(E)\) we denote the unit simplex over \(E\).
In the \(N\)-player game, under a policy flow \(\pmb \pi = (\pi_t)_{t\in [0,T]}\), we suppose that the state process of player \(i\) follows a continuous-time Markov chain of the form
\[\mathbb P(x^i_{t+h}=x'|x^i_t=x,\mu_t^N=\nu) =\delta_{xx'}+h\sum_{a \in A} \Lambda(x,x',a,\nu)\pi_t(a|x)+o(h).\]
In the mean-field regime (\(N\to \infty\)), the behavior of the population can be characterized from the lens of a single representative agent. In particular, the representative agent's state dynamics evolve according to the master equation
\begin{equation}\label{eq:forward}
    \begin{aligned}
  \dot{\tilde \mu}_t(x)
  &=
  \sum_{z\in\X}\sum_{a\in\A}
  \Lambda(z,x,a,\mu_t)\pi_t(a\mid z)\tilde \mu_t(z)
  \\&=:F^{\pmb \pi}(t,\tilde \mu_t)(x),
  \qquad \tilde \mu_0 = \mu_0\ .
\end{aligned}
\end{equation}

The quantity \(\Lambda(z,x,a,\mu)\) is the instantaneous transition rate of the controlled continuous-time Markov chain from state \(z\) to state \(x\) when the action is \(a\) and the population distribution flow is \(\pmb \mu=(\mu_t)_t\).  The flow \(\pmb \mu\) describes the population state distribution over time and will be referred to as the \emph{mean-field flow}.

For a fixed mean-field flow, the representative-agent problem in a MFG reduces to a single-agent control problem in which the population distribution is treated as an exogenous input. The representative agent's objective is to choose an admissible Markov policy \(\hat{\pmb \pi}\) \cite{eich2026approximately}, where \( \hat\pi_t : X \to \Pcal(\A)\) is a stochastic kernel, such that starting from any \((t,x)\in [0,T] \times X\), the finite-horizon cost
\begin{equation}\label{eq:classical-cost}
\begin{aligned}  
  J^{0,T}_0(t,x;\hat{\pmb \pi},\pmb \mu)
  :=
  \mathbb E_{t,x}^{\hat{\pmb \pi}}
  \left[
      \int_t^T
      \sum_{a\in\A}
      \hat\pi_s(a\mid X_s)c(X_s,a,\mu_s)\dd s
      +
      g(X_T)
  \right]
\end{aligned}
\end{equation}
is minimized over all admissible policies. 

The full MFG problem, however, additionally requires that the prescribed mean-field flow coincides with the one generated by the representative agent's optimal policy through \eqref{eq:forward}. Accordingly, an MFE consists of a pair \((\pmb \pi,\pmb \mu)\) satisfying both individual optimality and population consistency.

We first record the unregularized MFE, which serves as the reference point for the discounted regularized formulation introduced in subsection~\ref{subsec:dr-MFE}.

\begin{definition}\label{def:classical-mfe}
A pair \((\pmb \pi^T,\pmb\mu^T)\) is called a \emph{mean-field
equilibrium} (MFE) for the MFG \((\X,\A,c,\Lambda,g,\mu_0,T),\) if it satisfies the following \emph{optimality} and \emph{consistency} conditions:
\begin{enumerate}
    \item \textbf{Optimality:} for every \(t\in[0,T]\) and \(x\in\X\),
    \(
        J^{0,T}_0(t,x;\pmb \pi^T,\pmb\mu^T)
        =
        \inf_{\hat{\pmb \pi}\in\Pi_T}J^{0,T}_0(t,x;\hat{\pmb \pi},\pmb\mu^T).
    \)

    \item \textbf{Consistency:} \(\pmb \mu^T=(\mu^T_t)_{t\in[0,T]}\) is the mean-field flow induced by
    \(\pmb \pi^T\), i.e.,
    \(
      \dot\mu_t^T(x)
      =
      F^{\pmb \pi^T}(t,\mu^T_t)(x),
    \)
    where
    \(
      \mu_0^T=\mu_0.
    \)
\end{enumerate}
\end{definition}

We next introduce the norms and seminorms used throughout the analysis. For any finite set \(S\) and
any function \(f:S\to\R\), we define
\[
    \|f\|_1:=\sum_{s\in S}|f(s)|,
    \qquad
    \|f\|_\infty:=\max_{s\in S}|f(s)|,
    \qquad
    \spn(f):=\max_{s\in S}f(s)-\min_{s\in S}f(s).
\]

The following assumption ensures that the controlled generator is well defined and that the model coefficients depend Lipschitz continuously on the mean-field argument.
\begin{assumption}\label{ass:basic-reg}
The transition-rate kernel \(\Lambda\) satisfies the generator conditions
\begin{equation}\label{eq:trans}
    \Lambda(x,y,a,\mu)\ge0 \quad (y\ne x),
    \qquad
    \sum_{y\in\X}\Lambda(x,y,a,\mu)=0.
\end{equation}
Throughout, we exclude the degenerate case in which all off-diagonal
jump rates vanish. Moreover, \(c\) and \(\Lambda\) are Lipschitz continuous in the mean-field argument, uniformly in \((x,a)\). That is, there exist constants \(L_c,L_\Lambda<\infty\) such that the following inequalities hold
for all \(x\in\X\), \(a\in\A\), and \(\mu,\nu\in\Pcal(\X)\),
\[
    |c(x,a,\mu)-c(x,a,\nu)|
    \le L_c\norm{\mu-\nu}_1,
\]
and
\[
    \frac12\sum_{y\in\X}
    |\Lambda(x,y,a,\mu)-\Lambda(x,y,a,\nu)|
    \le L_\Lambda\norm{\mu-\nu}_1.
\]
\end{assumption}
Note that \eqref{eq:trans}, together with \(\mu_0\in\Pcal(\X)\), implies that the solution of the master equation \eqref{eq:forward} is indeed a flow of probability measures. Since \(\X\) and \(\A\) are finite and \(\Pcal(\X)\) is a compact subset of the finite-dimensional space \(\mathbb R^{\X}\), the
Lipschitz continuity assumptions above imply that \(c, g,\) and \(\Lambda\) are bounded. In connection with this, we define the following constants:
\begin{equation}\label{eq:constants}
    M_c:=\sup_{x,a,\mu}|c(x,a,\mu)|,
    \qquad
    M_g:=\sup_{x}|g(x)|,
    \qquad
    B_\Lambda:=
    \sup_{x,a,\mu}\sum_{y\in\X}|\Lambda(x,y,a,\mu)|.
\end{equation}
The condition in \eqref{eq:trans}
then implies that \(B_\Lambda>0\).

\subsection{Function spaces and weighted metrics}

We now introduce the spaces on which the fixed-point argument will be carried out. We equip these spaces with exponentially weighted essential-supremum metrics, which will allow us to control the propagation of perturbations over time.

Throughout the paper, we let \(L^\infty([0,T];E)\) denote the Banach space of Lebesgue measurable and essentially bounded functions \(f:[0,T]\to E\), endowed with the essential-supremum norm
\[
    \norm{f}_{L^\infty}
    :=
    \esssup_{0\le t\le T}\norm{f(t)}.
\]

Fix \(T<\infty\) and \(\eta\ge0\). 
We define the admissible policy space by:
\[
\Pi_T
:=
\left\{
\pmb \pi=(\pi_t)_{\in[0,T]}\in L^\infty([0,T];\R^{\X\times\A}):
\pi_t(\cdot\mid x)\in\Pcal(\A)
\text{ for Lebesgue-a.e. }t\in[0,T]\text{ and all }x\in\X
\right\}.
\]
For \(\pmb \pi,\pmb \pi'\in\Pi_T\), we define the point-wise policy distance
\begin{equation}\label{eq:ptswisepolicy}
  \norm{\pi_t-\pi'_t}_{\Pi}
  :=
  \max_{x\in\X}
  \sum_{a\in\A}
  |\pi_t(a\mid x)-\pi'_t(a\mid x)|.
\end{equation}
The Bielecki metric constructed from \eqref{eq:ptswisepolicy} is an exponentially weighted uniform metric on
\(\Pi_T\) given by
\begin{equation}\label{eq:policy-metric}
  d_{\eta,\Pi}^T(\pmb \pi,\pmb \pi')
  :=
  \esssup_{0\le t\le T}
  e^{-\eta t}\norm{\pi_t-\pi'_t}_{\Pi}.
\end{equation}
For \(\eta>0\), this is a Bielecki metric, whereas \(\eta=0\) recovers the usual essential-supremum metric.

The space \((\Pi_T,d_{\eta,\Pi}^T)\) is complete. Indeed, the simplex \(\Pcal(\A)\) is closed, and thus
\(\Pi_T\) is a closed subset of \(L^\infty([0,T];\R^{\X\times\A})\). Moreover, since
\(\X\) and \(\A\) are finite, the row-wise \(\ell^1\) norm
\(\max_{x\in\X}\sum_{a\in\A}|\pi_t(a\mid x)| \) is equivalent to the uniform norm on \(\R^{\X\times\A}\). On the
finite interval \([0,T]\), the weighted metric \(d_{\eta,\Pi}^T\) is therefore strongly equivalent to the standard \(L^\infty\)-metric. Completeness follows from the completeness of \(L^\infty([0,T];\R^{\X\times\A})\).

We next introduce the mean-field flow space:
\[
\mathcal M_T
:=
\left\{
 \pmb \mu=(\mu_t)_{t\in[0,T]}\in L^\infty([0,T];\R^\X):
\mu_t\in\Pcal(\X)\ \text{for Lebesgue-a.e. }t\in[0,T]
\right\}.
\]
Given an initial distribution \(\mu_0\in\Pcal(\X)\), we define the space of admissible mean-field flows by
\[
\mathcal M_T(\mu_0)
:=
\left\{
 \pmb \mu\in\mathcal M_T:
\pmb \mu \text{ is absolutely continuous on } [0,T]
\text{ and }
\mu(0)=\mu_0
\right\}.
\]
Under Assumption~\ref{ass:basic-reg}, for each admissible policy \(\pmb \pi\), the
map \(t\mapsto F^{\pmb \pi}(t,\mu)\) is Lebesgue measurable and the map
\(\mu\mapsto F^{\pmb \pi}(t,\mu)\) is Lipschitz uniformly in \(t\). Hence by the 
Carathéodory ODE theorem, there exists a unique absolutely continuous solution of
\eqref{eq:forward}.  The generator structure preserves the simplex: total mass
is preserved because each generator row sums to zero, and non-negativity is
preserved because \(F^{\pmb \pi}(t,\mu)\) is inward-pointing on the boundary of \(\Pcal(\X)\).

We also define the value-function and \(Q\)-function flow spaces as follows:
\[
\mathcal V_T:=L^\infty([0,T];\R^\X),
\qquad
\mathcal Q_T:=L^\infty([0,T];\R^{\X\times\A}).
\]
For \(\pmb \mu,\pmb \nu\in\mathcal M_T\), \(\pmb V,\pmb W\in\mathcal V_T\), and
\( \pmb Q, \pmb Q'\in\mathcal Q_T\), we define
\begin{align}
  d_{\eta,1}^T(\pmb \mu,\pmb \nu)
  &:=
  \esssup_{0\le t\le T}e^{-\eta t}\norm{\mu_t-\nu_t}_1,
  \label{eq:mu-metric}\\
  d_{\eta,V}^T(\pmb V,\pmb W)
  &:=
  \esssup_{0\le t\le T}e^{-\eta t}\norm{V_t-W_t}_\infty,
  \label{eq:value-metric}\\
  d_{\eta,Q}^T(\pmb Q,\pmb Q')
  &:=
  \esssup_{0\le t\le T}e^{-\eta t}\norm{Q_t-Q'_t}_\infty.
\end{align}
Accordingly, we work with the metric spaces
\(
    (\Pi_T, d_{\eta,\Pi}^T),
\)
\(
    (\mathcal M_T(\mu_0),d_{\eta,1}^T),
\)
\(
    (\mathcal V_T,d_{\eta,V}^T),
\)
\(
    (\mathcal Q_T,d_{\eta,Q}^T).
\)

On a fixed finite interval, the weighted metrics are equivalent to the unweighted metrics and therefore induce the same convergence and completeness properties. The parameter \(\eta\) is introduced to derive Lipschitz and contraction estimates that are uniform in the horizon length.

\subsection{Discounted regularized best response}

Relative to the classical formulation, we introduce two features that are essential for the contraction analysis. First, we incorporate a discount rate \(\delta>0\) into the objective \eqref{eq:classical-cost}. This discounting suppresses the accumulation of future perturbations in the backward HJB equation. Second, we introduce entropy regularization with scaling parameter \(\alpha>0\).  This makes the best response unique and smooth, replacing a possibly set-valued minimizer by a \emph{soft-min} policy at the cost of deviation from the unregularized equilibria \cite{guo2019learning}.
Throughout the discounted regularized analysis, we fix \(\delta>0\) and \(\alpha>0\).
For \(q\in\Pcal(\A)\), we define the regularization function as
\[
    \Omega(q):=\sum_{a\in\A}q(a)\log q(a),
\]
with the convention \(0\log0=0\).  
Since \(\Omega(q)=-H(q)\), where \(H(q)\) is the Shannon entropy,
the term \(\alpha\Omega(q)\) by which the objective function is perturbed (regularized) favors higher-entropy action distributions under the cost-minimization convention. 

Fix \(\pmb \mu\in\mathcal M_T(\mu_0)\). For \((t,x)\in[0,T]\times\X\), the discounted entropy-regularized cost associated with an admissible policy \(\hat {\pmb \pi}\) is defined by

\begin{equation}\label{eq:discounted-cost}
\begin{aligned}
J^{\delta,T}_\alpha
(t,x;\hat{\pmb\pi},\pmb\mu)
:=
\mathbb E^{\hat{\pmb\pi}}_{t,x}\bigg[&
\int_t^T e^{-\delta(s-t)}
\bigg(
\sum_{a\in\A}
\hat\pi_s(a\mid X_s)c(X_s,a,\mu_s)
+
\alpha\Omega\bigl(\hat\pi_s(\cdot\mid X_s)\bigr)
\bigg)\dd s \\
&\qquad\qquad
+
e^{-\delta(T-t)}g(X_T)
\bigg].
\end{aligned}
\end{equation}
Under Assumption~\ref{ass:basic-reg}, the cost
\(J^{\delta,T}_\alpha(t,x;\hat{\pmb \pi},\pmb \mu)\) is finite for every
\((t,x,\hat{\pmb \pi},\pmb \mu)\). The discounted regularized value function is
\begin{equation}\label{eq:discounted-value}
  V_t^{\delta,T,\pmb \mu}(x)
  :=
  \inf_{\hat{\pmb \pi}\in\Pi_T}J^{\delta,T}_\alpha(t,x;\hat{\pmb \pi},\pmb \mu).
\end{equation}

We collect the resulting discounted regularized model components in the following notation.
\begin{definition}
    Given a MFG \((\X,\A,c,\Lambda,g,\mu_0,T)\), we denote the corresponding \(\delta\)-discounted \(\alpha\)-regularized MFG by \(\mathrm{MFG}_{\delta,T,\alpha}=(\X,\A,c,\delta,\alpha,\Omega,\Lambda,g,\mu_0,T)\).
\end{definition}
We next characterize the representative agent's best response to a fixed mean-field flow \(\pmb\mu\). 
For \(v\in\R^\X\), define the Hamiltonian
\begin{equation}\label{eq:ham}
\begin{aligned}
H_t^{\pmb\mu}(x,v)
:=
\inf_{q\in\Pcal(\A)}
\bigg\{
\sum_{a\in\A}q(a)
\left[
c(x,a,\mu_t)
+
\sum_{y\in\X}
\Lambda(x,y,a,\mu_t)v(y)
\right]
+
\alpha\Omega(q)
\bigg\}.
\end{aligned}
\end{equation}
By the dynamic programming principle for finite-state continuous-time Markov decision processes
\cite{shreve1979universally,fleming2006controlled}, the value function flow \(\pmb V^{\delta,T,\pmb\mu} = \bigl(V_t^{\delta,T,\pmb\mu}\bigr)_{t\in[0,T]}\) defined in \eqref{eq:discounted-value}, 
satisfies the following backward HJB equation. For a.e.
\(t\in[0,T]\) and every \(x\in\X\),
\begin{equation}\label{eq:HJB}
  -\dot V_t^{\delta,T,\pmb\mu}(x)
  =
  H_t^{\pmb\mu}
  \bigl(x,V_t^{\delta,T,\pmb\mu}\bigr)
  -
  \delta V_t^{\delta,T,\pmb\mu}(x),
  \qquad
  V_T^{\delta,T,\pmb\mu}(x)=g(x).
\end{equation}

The entropy regularization also makes the pointwise minimization in
\eqref{eq:ham} explicit. 
For fixed \((t,x,\pmb \mu,v)\), collect the
action-dependent terms in the Hamiltonian by setting
\(
    Z_a
    :=
    c(x,a,\mu_t)
    + 
    \sum_{y\in\X} \Lambda(x,y,a,\mu_t) v(y), 
     a \in \A.
\) By the Gibbs variational principle
\cite[Theorem~3.4]{wainwright2008graphical}, the Hamiltonian admits the soft-min representation
\(
  H_t^{\pmb\mu}(x,v)
  =
  -\alpha
  \log\left(
      \sum_{a\in\A}\exp(-Z_a/\alpha)
  \right),
\)
Moreover, the unique minimizer of \eqref{eq:ham} is
\(
  q^*(a;Z)
    =
    \frac{\exp(-Z_a/\alpha)}
         {\sum_{b\in\A}\exp(-Z_b/\alpha)}, \; a\in\A.
\)
Setting \(v=V_t^{\delta,T,\pmb\mu}\), the quantity \(Z_a\) becomes the discounted \(Q\)-function
\begin{equation}\label{eq:Qdelta}
  Q_t^{\delta,T,\pmb \mu}(x,a)
  :=
  c(x,a,\mu_t)
+\sum_{y\in\X}\Lambda(x,y,a,\mu_t)V_t^{\delta,T,\pmb \mu}(y).
\end{equation}
Consequently, the representative agent's regularized best response to \(\pmb\mu\) is
\begin{equation}\label{eq:softmin-Q}
  q_t^{*,T,\pmb\mu}(a\mid x)
  =
  \frac{
      \exp\left(
          -Q_t^{\delta,T,\pmb\mu}(x,a)/\alpha
      \right)}
       {\displaystyle
        \sum_{b\in\A}
        \exp\left(
            -Q_t^{\delta,T,\pmb\mu}(x,b)/\alpha
        \right)}.
\end{equation}

We conclude this subsection by recording the well-posedness and measurability of the preceding constructions. 
Under Assumption~\ref{ass:basic-reg}, the maps
\(\mu\mapsto c(x,a,\mu),\)
\(\mu\mapsto \Lambda(x,y,a,\mu)\)
are Lipschitz and hence Borel measurable.  Therefore, for every
\(\pmb \mu\in\mathcal M_T(\mu_0)\), the maps
\(t\mapsto c(x,a,\mu_t),\) and \(t\mapsto \Lambda(x,y,a,\mu_t)\) are Lebesgue measurable for every \(x,y\in\X\) and \(a\in\A\). 
Since \(\X\) and \(\A\) are finite, the map \(v\mapsto H_t^{\pmb \mu}(x,v)\) is locally
Lipschitz, uniformly in \(t\) on bounded sets.  
 Carathéodory's existence--uniqueness theorem therefore yields a unique absolutely continuous solution of the backward ODE \eqref{eq:HJB}
\cite[Ch.~1, Sec.~1, Thms.~1.1--1.2, pp.~43--47]
{coddington1955theory}.
The dynamic programming principle and the verification argument identify this solution with the value function defined in \eqref{eq:discounted-value} \cite[Sec.~I.4--I.5, pp.~27--28 and Theorem~5.2, pp.~32--33] {fleming2006controlled}.
Since this solution is absolutely continuous on the compact interval \([0,T]\), it is measurable and bounded. Therefore,
\(\pmb V^{\delta,T,\pmb\mu}\in\mathcal V_T.\)
By \eqref{eq:Qdelta} and the measurability and boundedness established above, \( \pmb Q^{\delta,T,\pmb\mu}\in\mathcal Q_T.\)
Finally, the finite-dimensional soft-min map is smooth and maps \(\R^\A\) into \(\Pcal(\A)\). Therefore, \eqref{eq:softmin-Q} defines
a measurable policy flow satisfying \(\pmb q^{*,T,\pmb\mu}\in\Pi_T.\)

\subsection{The MFE fixed-point operator}
\label{subsec:dr-MFE}
We now combine the forward population dynamics with the discounted
regularized best response to formulate the MFE problem as a fixed-point problem. The construction consists of three
maps: a policy flow generates a mean-field flow, the mean-field flow determines a discounted \(Q\)-function, and the \(Q\)-function induces a new policy. 
This reformulation allows us to prove existence and uniqueness of the MFE via a contraction argument. Furthermore, this contraction property yields a convergent iterative method for computing the
MFE.

The first operator maps a policy flow to the mean-field flow it induces.
\begin{definition}\label{def:GammaM}
We define the \emph{forward mean-field operator}
\(\Gamma_M^T:\Pi_T\to\mathcal M_T(\mu_0)\) by setting
\(\Gamma_M^T(\pmb \pi)=\pmb \mu\), where \(\pmb \mu\) is the unique absolutely
continuous solution of the forward equation \eqref{eq:forward}, namely,
\(
  \dot\mu_t(x)
  =
  \sum_{z\in\X}\sum_{a\in\A}
  \Lambda(z,x,a,\mu_t)\pi_t(a\mid z)\mu_t(z), \;
  \mu(0)=\mu_0,
\)
for a.e. \(t\in[0,T]\) and every \(x\in\X\).
\end{definition}
The second operator maps a mean-field flow to the discounted \(Q\)-function flow.

\begin{definition}\label{def:GammaQ}
We define the \emph{discounted \(Q\)-function operator}
\(\Gamma_Q^{\delta , T}:\mathcal M_T(\mu_0)\to\mathcal Q_T\) by setting
\(\Gamma_Q^{\delta , T}(\pmb \mu)=\pmb Q^{\delta,T,\pmb \mu}\), where
, by \eqref{eq:Qdelta},
\(
  Q_t^{\delta,T,\pmb \mu}(x,a)
  =
  c(x,a,\mu_t)
  +
  \sum_{y\in\X}
  \Lambda(x,y,a,\mu_t)V_t^{\delta,T,\pmb \mu}(y), 
\) 
for a.e. \(t\in[0,T]\) and every \((x,a)\in\X\times\A\), with \( \pmb V^{\delta,T,\pmb \mu}\) being the solution of HJB equation \eqref{eq:HJB}.
\end{definition}

The third operator maps a \(Q\)-function flow to the regularized policy flow.
\begin{definition}\label{def:GammaPi}
We define the \emph{soft-min policy operator}
\(\Gamma_\Pi^T:\mathcal Q_T\to\Pi_T\) by the point-wise rule \eqref{eq:softmin-Q}, namely, for each \(\pmb Q\in\mathcal Q_T\),
\(
  \bigl(\Gamma_\Pi^T(\pmb Q)\bigr)_t(a\mid x)
  :=
  \frac{\exp(-Q_t(x,a)/\alpha)}
       {\sum_{b\in\A}\exp(-Q_t(x,b)/\alpha)},
\)
for a.e. \(t\in[0,T]\) and every \((x,a)\in\X\times\A\).
\end{definition}
Composing these three mappings gives the policy update underlying the MFE fixed-point formulation.
\begin{definition}\label{def:PhiDelta}
We define the \emph{finite-horizon MFE operator} of \(\mathrm{MFG}_{\delta,T,\alpha}\) as the composition of the three operators introduced above:
\begin{equation*}\label{eq:Phi-delta}
  \Phi_\delta^T
  :=
  \Gamma_\Pi^T\circ\Gamma_Q^{\delta , T}\circ\Gamma_M^T
  :
  \Pi_T\to\Pi_T.
\end{equation*}
\end{definition}

For any candidate policy flow \(\pmb\pi\), the updated policy
\(\Phi_\delta^T(\pmb\pi)\) is therefore the representative agent's
discounted regularized best response to the mean-field flow generated by
\(\pmb\pi\).

\begin{definition}\label{def:regularized-mfe}
A pair
\(
    (\pmb \pi^{\delta,T},\pmb\mu^{\delta,T})
    \in
    \Pi_T\times\mathcal M_T(\mu_0)
\)
is called a finite-horizon \((\delta,\alpha)\)-regularized MFE of \(\mathrm{MFG}_{\delta,T,\alpha}\) if 
    \(
        \pmb \mu^{\delta,T}
        =
        \Gamma_M^T(\pmb \pi^{\delta,T})
    \)
    and
    \(  \pmb \pi^{\delta,T}=\Gamma_\Pi^T\bigl(\Gamma_Q^{\delta , T}(\pmb \mu^{\delta,T})\bigr).
    \)
Equivalently,
\(
    \pmb \pi^{\delta,T}
    =
    \Phi_\delta^T(\pmb \pi^{\delta,T}),
\)
and
\(
    \pmb \mu^{\delta,T}
    =
    \Gamma_M^T(\pmb \pi^{\delta,T}).
\)
\end{definition}

Thus, the fixed points of \(\Phi_\delta^T\) are in one-to-one
correspondence with the finite-horizon
\((\delta,\alpha)\)-regularized MFEs.
Hence, once \(\Phi_\delta^T\) is shown to be a contraction on a complete policy space, Banach's fixed-point theorem yields existence and uniqueness of the MFE, as well as global convergence of the associated best-response Picard iteration \(\pmb \pi^{k+1}=\Phi_\delta^T(\pmb \pi^k).\)

In Section \ref{sect:contraction}, we derive a finite-horizon contraction condition that is uniform in the horizon length \(T\).

\subsection{Infinite-horizon non-stationary MFGs}
\label{subsec:infinite-ns}
We now extend the preceding formulation to the infinite-horizon
non-stationary setting. We denote the resulting model by
\(
    \mathrm{MFG}_{\delta,\infty,\alpha}
    :=
    (\X,\A,c,\delta,\alpha,\Omega,\Lambda,\mu_0).
\)
It is obtained from \(\mathrm{MFG}_{\delta,T,\alpha}\) by taking \(g \equiv 0\) and \(T = \infty\). Here,
non-stationary means that the policy and mean-field flows may remain time dependent.

Fix \(\pmb\mu\in\mathcal M_\infty(\mu_0)\). For
\((t,x)\in[0,\infty)\times\X\), 
the representative agent minimizes the infinite-horizon discounted
regularized cost, given by
\begin{equation}\label{eq:infinite-discounted-cost}
\begin{aligned}
  J_{\alpha}^{\delta,\infty}(t,x;\hat {\pmb \pi},\pmb \mu)
  :=
  \mathbb E_{t,x}^{\hat{\pmb \pi}}\bigg[
      \int_t^\infty
      e^{-\delta(s-t)}
      \left(
      \sum_{a\in\A}\hat\pi_s(a\mid X_s)c(X_s,a,\mu_s)
      +
      \alpha\Omega(\hat\pi_s(\cdot\mid X_s))
      \right)\dd s
  \bigg].
\end{aligned}
\end{equation}
The corresponding value function
is denoted by \(\pmb V^{\delta,\infty,\pmb \mu} =\inf_{\hat{\pmb\pi}\in\Pi_\infty}
J_{\alpha}^{\delta,\infty}
(t,x;\hat{\pmb\pi},\pmb\mu)\). The Hamiltonian, \(Q\)-function, and
soft-min rule are defined by the same formulas as in the finite-horizon case, with \(T\) replaced by \(\infty\). The essential difference is the HJB equation. By the same
dynamic programming argument, \( \pmb V^{\delta,\infty,\pmb\mu}
    =
    \bigl(V_t^{\delta,\infty,\pmb\mu}\bigr)_{t\ge0}\) satisfies
\begin{equation}\label{eq:HJB-infty}
  -\dot V_t^{\delta,\infty,\pmb\mu}(x)
  =
  H_t^{\pmb\mu}
  \bigl(x,V_t^{\delta,\infty,\pmb\mu}\bigr)
  -
  \delta V_t^{\delta,\infty,\pmb\mu}(x),
  \qquad t\ge0.
\end{equation}
Unlike the finite-horizon HJB equation, \eqref{eq:HJB-infty} has no terminal condition. Instead, the relevant solution is selected by boundedness. Equivalently, it satisfies the integral equation
\begin{equation} \label{eq:HJB-infty-mild}
    V_t^{\delta,\infty,\pmb \mu}(x)
  =
  \int_t^\infty
  e^{-\delta(s-t)}
  H_s^{\pmb \mu}(x,V_s^{\delta,\infty,\pmb \mu})\,\dd s .
\end{equation}

Similar to the finite-horizon case, we define a Banach space
\[
L_\eta^\infty([0,\infty);E)
:=
\left\{
f:[0,\infty)\to E:
f \text{ is Lebesgue measurable and }
\norm{f}_{\eta,\infty}<\infty
\right\},
\]
where
\(
    \norm{f}_{\eta,\infty}
    :=
    \esssup_{t\ge0}e^{-\eta t}\norm{f_t},
\) and \(\eta\ge0\) , \(E\) is a finite-dimensional
Euclidean space. Indeed, the map
\(f\mapsto (t\mapsto e^{-\eta t}f_t)\) is an isometry from
\(L_\eta^\infty([0,\infty);E)\) onto the ordinary space
\(L^\infty([0,\infty);E)\), which is complete.

The spaces
\((\Pi_\infty,d_{\eta,\Pi}^{\infty})\),
\((\mathcal M_\infty(\mu_0),d_{\eta,1}^{\infty})\),
\((\mathcal V_\infty^\eta,d_{\eta,V}^{\infty})\), and
\((\mathcal Q_\infty^\eta,d_{\eta,Q}^{\infty})\) are defined as the
corresponding finite-horizon spaces with \([0,T]\) replaced by
\([0,\infty)\). 
 \((\Pi_\infty, d_{\eta,\Pi}^{\infty})\) is complete. 
Indeed, if \((\pmb \pi^n)_n\) is Cauchy under
\(d_{\eta,\Pi}^{\infty}\), then \(\{(e^{-\eta t}\pi_t^n)_t:n\in \mathbb N\}\) can be viewed as a sequence in \(L^\infty([0,\infty);\R^{\X\times\A})\), which is also Cauchy and hence converges to some
\(\bar{\pmb \pi}\). Since the weighted simplex constraints are closed, the limit satisfies
\(
    \bar\pi_t(a\mid x)\ge0, \;
    \sum_{a\in\A}\bar\pi_t(a\mid x)=e^{-\eta t}
\)
for a.e. \(t\ge0\). Therefore
\(\pi_t(a\mid x):=e^{\eta t}\bar\pi_t(a\mid x)\) defines an element of
\(\Pi_\infty\), and \(\pmb \pi^n\to\pmb \pi\) under
\(d_{\eta,\Pi}^{\infty}\). 

The operators
\(\Gamma_M^\infty,\) \(  \Gamma_{Q}^{\delta,\infty}\), \(\Gamma_\Pi^\infty\)
are defined as in their finite-horizon counterparts, with
the time domain replaced by \([0,\infty)\). We set
\(
  \Phi_\delta^\infty
  :=
  \Gamma_\Pi^\infty
  \circ
  \Gamma_{Q}^{\delta,\infty}
  \circ
  \Gamma_M^\infty
  :
  \Pi_\infty\to\Pi_\infty .
\)
An infinite-horizon non-stationary \((\delta,\alpha)\)-regularized MFE is a
pair
\(
    (\pmb \pi^{\delta,\infty},\pmb \mu^{\delta,\infty})
    \in
    \Pi_\infty\times\mathcal M_\infty(\mu_0)
\)
such that
\(
    \pmb \pi^{\delta,\infty}=\Phi_\delta^\infty(\pmb \pi^{\delta,\infty})\), \(
    \pmb \mu^{\delta,\infty}=\Gamma_M^\infty(\pmb \pi^{\delta,\infty}).
\)

\section{Horizon-Independent Contraction of the MFE Operator in the Finite-Horizon Case}\label{sect:contraction}

Let \(\mathrm{MFG}_{\delta,T,\alpha}=(\X,\A,c,\delta,\alpha,\Omega,\Lambda,g,\mu_0,T)\) be a finite-horizon \(\delta\)-discounted \(\alpha\)-regularized MFG. In our setting, when a finite-horizon MFG is not discounted, i.e., \(\delta=0\), the corresponding MFE
operator for \(\mathrm{MFG}_{\delta,T,\alpha}\) might not admit a horizon-independent contraction condition
\cite{eich2026approximately}. In contrast, in the regularized setting, the discounted formulation allows
us to work with a Bielecki metric whose exponential weight is independent of the horizon length. In particular, we prove the following theorem for \(\mathrm{MFG}_{\delta,T,\alpha}\).
\begin{theorem}\label{thrm:main}
Let \(\mathrm{MFG}_{\delta,T,\alpha}\) be a finite-horizon discounted regularized MFG. Let $M_c$ and $B_{\Lambda}$ be as in \eqref{eq:constants}.
Suppose that Assumption~\ref{ass:basic-reg} holds.  
If it holds that
\[
  \kappa_{\delta,\eta,\alpha}^T
  :=
  \frac{B_\Lambda}{\eta-B_\Lambda-2L_\Lambda}
  \cdot
  \frac1{\alpha}
  \cdot
  \left(L_c+L_\Lambda \left(\spn(g)+\frac{2M_c}{\delta-B_\Lambda}\right)\right)
  \left(
      1+\frac{B_\Lambda}{\delta-B_\Lambda-\eta}
  \right)
  <1 ,
\]
for some \(\eta\) such that \(
    B_\Lambda+2L_\Lambda<\eta<\delta-B_\Lambda,
\) then, for every finite horizon \(T<\infty\), the MFE
operator
\(
    \Phi_\delta^T:\Pi_T\to\Pi_T
\)
of \(\mathrm{MFG}_{\delta,T,\alpha}\)
is (eventually) contractive under the uniform norm. 
Consequently, for every \(T<\infty\), there exists a unique finite-horizon
\((\delta,\alpha)\)-regularized MFE
\((\pmb \pi^{\delta,T},\pmb \mu^{\delta,T})\). Moreover, for every
\(\pmb \pi^0\in\Pi_T\), the iteration
\(
    \pmb \pi^{k+1}=\Phi_\delta^T(\pmb \pi^k)
\)
converges to \(\pmb \pi^{\delta,T}\) in \(d_{\eta,\Pi}^T\), with
\(\pmb \mu^{\delta,T}=\Gamma_M^T(\pmb \pi^{\delta,T})\).
\end{theorem}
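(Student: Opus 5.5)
We will prove that \(\Phi_\delta^T\) is a contraction in the Bielecki metric \(d_{\eta,\Pi}^T\), with constant \(\kappa_{\delta,\eta,\alpha}^T\), by bounding the Lipschitz constant of each of the three operators in the composition \(\Gamma_\Pi^T\circ\Gamma_Q^{\delta,T}\circ\Gamma_M^T\) separately. All constants will be uniform in \(T\). On a finite interval, \(d_{\eta,\Pi}^T\) is equivalent to the uniform metric. Hence contraction in \(d_{\eta,\Pi}^T\) gives contraction in an equivalent metric, which is what ``eventually contractive under the uniform norm'' means here. Banach's theorem on the complete space \((\Pi_T,d_{\eta,\Pi}^T)\) then yields existence, uniqueness and convergence of the Picard iteration.

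\emph{Step 1 (forward operator).} Let \(\pmb\mu=\Gamma_M^T(\pmb\pi)\) and \(\pmb\nu=\Gamma_M^T(\pmb\pi')\). Write \(e_t=\norm{\mu_t-\nu_t}_1\). Subtracting the two master equations and splitting the difference into three terms, we get
\[
\dot e_t\le B_\Lambda\norm{\pi_t-\pi'_t}_\Pi+(B_\Lambda+2L_\Lambda)e_t,\qquad e_0=0 .
\]
The three terms are the change in \(\pi\), the change in \(\Lambda\) through \(\mu\), and the change in the measure. For the last one, the generator structure gives the \(\ell^1\) contraction, with the \(B_\Lambda\) bound as a safe majorant. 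Grönwall then gives
\[
e_t\le B_\Lambda\int_0^t e^{(B_\Lambda+2L_\Lambda)(t-s)}\norm{\pi_s-\pi'_s}_\Pi\dd s .
\]
Multiplying by \(e^{-\eta t}\) with \(\eta>B_\Lambda+2L_\Lambda\), we obtain
\[
d_{\eta,1}^T(\pmb\mu,\pmb\nu)\le \frac{B_\Lambda}{\eta-B_\Lambda-2L_\Lambda}\,d_{\eta,\Pi}^T(\pmb\pi,\pmb\pi').
\]

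\emph{Step 2 (value and \(Q\) operator).} This is the main obstacle, since it requires span bounds on \(V\) and a backward Grönwall estimate in a forward-weighted metric.

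First, a priori bounds. Because generator rows sum to zero, \(\sum_y\Lambda(x,y,a,\mu)V(y)\) depends only on \(V\) modulo constants and is bounded by \(\tfrac{B_\Lambda}{2}\spn(V)\) in absolute value. Using the integral (mild) form of \eqref{eq:HJB} with discount, we aim to show
\[
\spn(V_t)\le \spn(g)+\frac{2M_c}{\delta-B_\Lambda}.
\]
Since \(\Omega\in[-\log|A|,0]\), the entropy term shifts all states by comparable amounts. If it perturbs the span, we will absorb this; if needed we compare via the optimal policies of each state. The argument is a comparison of \(V_t(x)-V_t(x')\) along the backward ODE, using \(\delta>B_\Lambda\).

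Next, the difference estimate. Set \(w_t=\norm{V_t-W_t}_\infty\). The soft-min Hamiltonian is \(1\)-Lipschitz in \(Z\) in the sup norm. Hence \(|H^{\pmb\mu}_t(x,V)-H^{\pmb\nu}_t(x,W)|\) is at most \(B_\Lambda w_t\) plus the cross terms \(L_c e_t+L_\Lambda\spn(V_t)e_t\). Writing \(K:=L_c+L_\Lambda(\spn(g)+2M_c/(\delta-B_\Lambda))\), the discounted mild form gives
\[
w_t\le\int_t^T e^{-(\delta-B_\Lambda)(s-t)}K e_s\dd s .
\]
Weighting by \(e^{-\eta t}\) gives
\[
d_{\eta,V}^T\le \frac{K}{\delta-B_\Lambda-\eta}\,d_{\eta,1}^T .
\]
This is where \(\eta<\delta-B_\Lambda\) is used. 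Then \eqref{eq:Qdelta} gives
\[
\norm{Q_t-Q'_t}_\infty\le K e_t+B_\Lambda w_t,
\]
so
\[
d_{\eta,Q}^T\le K\Bigl(1+\frac{B_\Lambda}{\delta-B_\Lambda-\eta}\Bigr)d_{\eta,1}^T .
\]

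\emph{Step 3 (soft-min operator).} The softmax with temperature \(\alpha\) satisfies
\[
\sum_a|q^*(a;Z)-q^*(a;Z')|\le \tfrac1\alpha\norm{Z-Z'}_\infty,
\]
since its Jacobian has \(\ell^\infty\to\ell^1\) norm at most \(2\cdot\tfrac{1}{2\alpha}\). Applying this pointwise in \((t,x)\) gives \(d_{\eta,\Pi}^T\le\tfrac1\alpha d_{\eta,Q}^T\). Multiplying the three Lipschitz constants yields exactly \(\kappa_{\delta,\eta,\alpha}^T<1\), independent of \(T\), which completes the argument.
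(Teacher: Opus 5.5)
Your proposal is correct and follows the paper's proof essentially step for step. It uses the same three Bielecki-weighted Lipschitz bounds: forward Grönwall at rate \(a_M=B_\Lambda+2L_\Lambda\), the horizon-free span bound \(S_V^\delta\) feeding a backward Grönwall estimate for the HJB and \(Q\)-function, and the \(1/\alpha\) soft-min bound; these are multiplied to give \(\kappa_{\delta,\eta,\alpha}^T\), and the argument closes with Banach's theorem on \((\Pi_T,d_{\eta,\Pi}^T)\). The one place to tighten is the span step, where nothing needs to be ``absorbed'': because the soft-min \(\ell_\alpha\) is \(1\)-Lipschitz in the sup norm (equivalently, you can use the same \(q\) for both states), the entropy term cancels exactly and \(|H_t^{\pmb\mu}(x,v)-H_t^{\pmb\mu}(z,v)|\le 2M_c+B_\Lambda\spn(v)\). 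This exact cancellation is what makes your constant coincide with \(\kappa_{\delta,\eta,\alpha}^T\).
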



The analysis proceeds in three steps. We first establish Lipschitz bounds for the three components \(\Gamma_M^T\), \(\Gamma_Q^{\delta,T}\), and \(\Gamma_\Pi^T\) of the MFE operator.  We then construct a positive compact operator \(\mathcal A_T^\delta\) majorizing the policy error, whose spectral radius yields a refined contraction criterion for each fixed horizon.
Finally, combining the component-wise estimates under Bielecki metrics proves the horizon-independent contraction result in Theorem~\ref{thrm:main}. The subsequent spectral comparison shows that the optimized Bielecki bound coincides with the large-horizon limit of the spectral radius of \(A^{\delta}_T\). 
The spectral analysis also provides a computable characterization of the \(T\)-dependent contraction rate, although generally not in closed form.


\subsection{Lipschitz bounds for \(\Gamma_M^T, \Gamma_Q^{\delta,T},\) and \(\Gamma_{\Pi}^T\)}
We first show that the forward mean-field function \(F^{\pmb \pi}(t,\cdot)\) induces a Lipschitz continuous operator on \((\mathcal P(X),\|\cdot\|_1)\). 

We will use the notation
\(
    a_M:=B_\Lambda+2L_\Lambda
\) for the rest of the paper.
\begin{lemma}\label{lem:F-lip}
Under Assumption~\ref{ass:basic-reg}, for all admissible
policies \(\pmb \pi,\pmb \pi' \in\Pi_T\), a.e. \(t\in[0,T]\), and all
\(\mu,\nu\in\Pcal(\X)\),
\begin{align}
  \norm{F^{\pmb \pi}(t,\mu)-F^{\pmb \pi}(t,\nu)}_1
  &\le
  a_M\norm{\mu-\nu}_1,
  \label{eq:fm}\\
  \norm{F^{\pmb \pi}(t,\mu)-F^{\pmb \pi'}(t,\mu)}_1
  &\le
  B_\Lambda\norm{\pi_t-\pi'_t}_{\Pi}.
  \label{eq:fpi}
\end{align}
\end{lemma}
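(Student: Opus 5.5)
The plan is to write the forward vector field in matrix form and then split each difference with a triangle inequality. Fix a time \(t\) at which \(\pi_t(\cdot\mid z)\in\Pcal(\A)\) and \(\pi'_t(\cdot\mid z)\in\Pcal(\A)\) for all \(z\); by definition of \(\Pi_T\) this holds for a.e. \(t\in[0,T]\). For \(\mu\in\Pcal(\X)\), define the generator matrix
\[
K^{\pmb\pi}_{t,\mu}(z,x):=\sum_{a\in\A}\Lambda(z,x,a,\mu)\pi_t(a\mid z),
\]
so that \(F^{\pmb\pi}(t,\mu)(x)=\sum_{z}\mu(z)K^{\pmb\pi}_{t,\mu}(z,x)\). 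The basic row bound is
\[
\sum_{x}|K^{\pmb\pi}_{t,\mu}(z,x)|\le\sum_a\pi_t(a\mid z)\sum_x|\Lambda(z,x,a,\mu)|\le B_\Lambda,
\]
which follows from the definition of \(B_\Lambda\) in \eqref{eq:constants} and the fact that \(\pi_t(\cdot\mid z)\) is a probability vector.

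For \eqref{eq:fm}, I will use the decomposition
\[
F^{\pmb\pi}(t,\mu)-F^{\pmb\pi}(t,\nu)=\sum_z(\mu(z)-\nu(z))K^{\pmb\pi}_{t,\mu}(z,\cdot)+\sum_z\nu(z)\bigl(K^{\pmb\pi}_{t,\mu}(z,\cdot)-K^{\pmb\pi}_{t,\nu}(z,\cdot)\bigr).
\]
The first term's \(\ell^1\) norm is at most \(\sum_z|\mu(z)-\nu(z)|\sum_x|K^{\pmb\pi}_{t,\mu}(z,x)|\le B_\Lambda\norm{\mu-\nu}_1\) by the row bound. For the second term, I pull the absolute values inside the sums over \(z\) and \(a\), which gives
\[
\sum_z\nu(z)\sum_a\pi_t(a\mid z)\sum_x|\Lambda(z,x,a,\mu)-\Lambda(z,x,a,\nu)|.
\]
By the Lipschitz assumption on \(\Lambda\) in Assumption~\ref{ass:basic-reg}, the inner sum over \(x\) is at most \(2L_\Lambda\norm{\mu-\nu}_1\); the factor \(2\) appears because the assumption bounds only half of the \(\ell^1\) sum. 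The remaining weights \(\nu(z)\) and \(\pi_t(a\mid z)\) sum to one, since \(\nu\in\Pcal(\X)\) and \(\pi_t(\cdot\mid z)\in\Pcal(\A)\). Adding the two bounds gives the constant \(B_\Lambda+2L_\Lambda=a_M\).

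For \eqref{eq:fpi}, the mean-field argument is the same on both sides, so
\[
F^{\pmb\pi}(t,\mu)(x)-F^{\pmb\pi'}(t,\mu)(x)=\sum_z\mu(z)\sum_a\Lambda(z,x,a,\mu)\bigl(\pi_t(a\mid z)-\pi'_t(a\mid z)\bigr).
\]
Summing absolute values over \(x\) and exchanging the order of summation bounds this by
\[
\sum_z\mu(z)\sum_a|\pi_t(a\mid z)-\pi'_t(a\mid z)|\sum_x|\Lambda(z,x,a,\mu)|\le B_\Lambda\sum_z\mu(z)\sum_a|\pi_t(a\mid z)-\pi'_t(a\mid z)|.
\]
Since \(\mu\) is a probability vector, the right-hand side is at most \(B_\Lambda\max_z\sum_a|\pi_t(a\mid z)-\pi'_t(a\mid z)|=B_\Lambda\norm{\pi_t-\pi'_t}_\Pi\), as required.

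I do not expect a real obstacle here. The main care points are to decompose around \(\mu K_\mu\) rather than \(\nu K_\nu\) (either choice works, since both \(\mu\) and \(\nu\) lie in the simplex), to account for the factor \(1/2\) in the Lipschitz assumption on \(\Lambda\), and to restrict to the full-measure set of \(t\) on which the policies are genuine probability kernels.
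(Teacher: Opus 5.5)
Your proposal is correct and follows the paper's proof. Your splitting of \(F^{\pmb\pi}(t,\mu)-F^{\pmb\pi}(t,\nu)\) around \(\mu K_\mu\) is exactly the paper's add-and-subtract of \(\Lambda(z,x,a,\mu)\pi_t(a\mid z)\nu(z)\), giving \(B_\Lambda+2L_\Lambda=a_M\), and your argument for the policy estimate (bound by \(B_\Lambda\), then average over the probability vector \(\mu\)) is the same as the paper's.
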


\begin{proof}
Adding and subtracting
\(\Lambda(z,x,a,\mu)\pi_t(a\mid z)\nu(z)\), we obtain
\[
\begin{aligned}
F^{\pmb \pi}(t,\mu)(x)-F^{\pmb \pi}(t,\nu)(x)
&=
\sum_{z,a}\Lambda(z,x,a,\mu)\pi_t(a\mid z)(\mu(z)-\nu(z))  \\
&\quad+
\sum_{z,a}
\bigl(\Lambda(z,x,a,\mu)-\Lambda(z,x,a,\nu)\bigr)
\pi_t(a\mid z)\nu(z).
\end{aligned}
\]
Taking the \(1\)-norm over \(x\), using
\(\sum_a\pi_t(a\mid z)=1\), \(\sum_z\nu(z)=1\), and Assumption~\ref{ass:basic-reg}, leads to
\[
\begin{aligned}
\norm{F^{\pmb \pi}(t,\mu)-F^{\pmb \pi}(t,\nu)}_1
&\le
B_\Lambda\norm{\mu-\nu}_1
+
2L_\Lambda\norm{\mu-\nu}_1  
=
a_M\norm{\mu-\nu}_1.
\end{aligned}
\]
This proves \eqref{eq:fm}.

Similarly, for fixed \(\mu\), since it holds that
\[
F^{\pmb \pi}(t,\mu)(x)-F^{\pmb \pi'}(t,\mu)(x)
=
\sum_{z,a}\Lambda(z,x,a,\mu)
\left(\pi_t(a\mid z)-\pi'_t(a\mid z)\right)\mu(z),
\]
we obtain
\begin{align*}
\norm{F^{\pmb \pi}(t,\mu)-F^{\pmb \pi'}(t,\mu)}_1
&\le
\sum_{z\in\X}\mu(z)
\sum_{a\in\A}
|\pi_t(a\mid z)-\pi'_t(a\mid z)|
\sum_{x\in\X}|\Lambda(z,x,a,\mu)| \\
&\le
B_\Lambda
\sum_{z\in\X}\mu(z)
\sum_{a\in\A}
|\pi_t(a\mid z)-\pi'_t(a\mid z)| \le
B_\Lambda\norm{\pi_t-\pi'_t}_{\Pi},
\end{align*}
which proves \eqref{eq:fpi}.
\end{proof}

The next lemma integrates these local estimates over time and gives a Lipschitz bound for the forward mean-field operator \(\Gamma_M^T:\pmb \pi\mapsto \pmb \mu \).
\begin{lemma}\label{lem:forward}
Let \(\pmb \mu=\Gamma_M^T(\pmb \pi)\) and \(\pmb \nu=\Gamma_M^T(\pmb \pi')\).
If \(\eta>a_M\), then
\[
  d_{\eta,1}^T(\pmb \mu,\pmb \nu)
  \le
  \frac{B_\Lambda}{\eta-a_M}
 d_{\eta,\Pi}^T(\pmb \pi,\pmb \pi').
\]
\end{lemma}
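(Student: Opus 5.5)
We will argue directly from the integral form of the forward equation together with Lemma~\ref{lem:F-lip}, and close the estimate with the Bielecki weight. Since \(\pmb\mu\) and \(\pmb\nu\) are absolutely continuous with \(\mu_0=\nu_0\), for every \(t\in[0,T]\) we have
\[
  \mu_t-\nu_t=\int_0^t\bigl(F^{\pmb\pi}(s,\mu_s)-F^{\pmb\pi'}(s,\nu_s)\bigr)\dd s .
\]
Inside the integral we add and subtract \(F^{\pmb\pi}(s,\nu_s)\). The first difference \(F^{\pmb\pi}(s,\mu_s)-F^{\pmb\pi}(s,\nu_s)\) is bounded in \(\norm{\cdot}_1\) by \(a_M\norm{\mu_s-\nu_s}_1\) via \eqref{eq:fm}. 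The second difference \(F^{\pmb\pi}(s,\nu_s)-F^{\pmb\pi'}(s,\nu_s)\) is bounded by \(B_\Lambda\norm{\pi_s-\pi'_s}_\Pi\) via \eqref{eq:fpi}. Both bounds hold for a.e.\ \(s\); we may use them because \(\mu_s,\nu_s\in\Pcal(\X)\) for all \(s\), by the simplex-preservation remark. Writing \(e(t):=\norm{\mu_t-\nu_t}_1\), which is continuous in \(t\), this gives
\[
  e(t)\le \int_0^t \bigl(a_M e(s)+B_\Lambda\norm{\pi_s-\pi'_s}_\Pi\bigr)\dd s .
\]

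Next we insert the exponential weights. Set \(D:=d_{\eta,1}^T(\pmb\mu,\pmb\nu)\) and \(P:=d_{\eta,\Pi}^T(\pmb\pi,\pmb\pi')\). For a.e.\ \(s\) we have \(e(s)\le e^{\eta s}D\) and \(\norm{\pi_s-\pi'_s}_\Pi\le e^{\eta s}P\). Because \(e\) is continuous, the essential supremum defining \(D\) controls \(e\) at every \(s\), and \(D<\infty\) since \(e\le 2\). Hence
\[
  e(t)\le (a_M D+B_\Lambda P)\int_0^t e^{\eta s}\dd s\le \frac{a_M D+B_\Lambda P}{\eta}\,e^{\eta t}.
\]
Multiplying by \(e^{-\eta t}\) and taking the supremum over \(t\) yields \(D\le (a_M D+B_\Lambda P)/\eta\), that is, \((1-a_M/\eta)D\le (B_\Lambda/\eta)P\). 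Since \(\eta>a_M\), we can solve for \(D\):
\[
  D\le \frac{B_\Lambda}{\eta-a_M}\,P,
\]
which is exactly the claim.

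The main obstacle is merely technical: we must justify absorbing \(a_M D\) on the left-hand side, which requires \(D\) to be finite. This is immediate here because \(e\le 2\) on a finite horizon. As an alternative, one could avoid the absorption step and use Grönwall's inequality applied to \(e^{-a_M t}e(t)\). That route gives \(e(t)\le B_\Lambda P\int_0^t e^{a_M(t-s)}e^{\eta s}\dd s\le \frac{B_\Lambda P}{\eta-a_M}e^{\eta t}\), which produces the same constant.
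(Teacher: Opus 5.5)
Your proof is correct, and it shares the paper's setup: the integral form of the forward equation, splitting through $F^{\pmb\pi}(s,\nu_s)$, and the two estimates of Lemma~\ref{lem:F-lip}. The difference is only in how you close the linear integral inequality.

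The paper applies Gr\"onwall pointwise first, obtaining
\[
\norm{\mu_t-\nu_t}_1\le B_\Lambda\int_0^t e^{a_M(t-s)}\norm{\pi_s-\pi'_s}_{\Pi}\dd s,
\]
and inserts the exponential weight only afterwards. That is exactly the route you list as your alternative. Your main route instead inserts the Bielecki weight immediately and absorbs the term $a_M D/\eta$ into the left-hand side. This avoids Gr\"onwall altogether, at the price of needing $D<\infty$ a priori, which you justify correctly via $e\le 2$. Both routes produce the same constant $B_\Lambda/(\eta-a_M)$.

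The paper's ordering buys something extra: a pointwise convolution bound, not just a weighted-sup bound. The paper reuses that bound verbatim in Lemma~\ref{lem:finite-horizon-majorization}, as $\norm{\mu_t-\nu_t}_1\le B_\Lambda (M_T p_{\pmb\pi,\pmb\pi'})(t)$, to build the majorizing operator $\mathcal A_T^\delta$. The absorption argument alone would not supply it.
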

\begin{proof}
Since \(\mu_0=\nu_0\), integrating the corresponding forward equations \eqref{eq:forward} gives 
\begin{align*}
  \mu_t-\nu_t&=\int_0^t\bigl(F^{\pmb \pi}(s,\mu_s)-F^{\pmb \pi'}(s,\nu_s)\bigr)\dd s
  \\&=\int_0^t\bigl(F^{\pmb \pi}(s,\mu_s)-F^{\pmb \pi}(s,\nu_s)\bigr)ds+\int_0^t \bigl(F^{\pmb \pi}(s,\nu_s)-F^{\pmb \pi'}(s,\nu_s)\bigr)ds.
\end{align*}
Using Lemma~\ref{lem:F-lip}, we obtain
\begin{align}
  \norm{\mu_t-\nu_t}_1
  &\le \int_0^t a_M\norm{\mu_s-\nu_s}_1\dd s
     +\int_0^t B_\Lambda\norm{\pi_s-\pi'_s}_{\Pi}\dd s 
    \\& \le B_\Lambda\int_0^t e^{a_M(t-s)}\norm{\pi_s-\pi'_s}_{\Pi}\dd s, \label{eq:mu-it}
\end{align}
where the last line follows from Gr\"onwall's inequality.

Note that \eqref{eq:policy-metric} implies \(\norm{\pi_s-\pi'_s}_{\Pi}\le e^{\eta s}d_{\eta,\Pi}^T(\pmb \pi,\pmb \pi')\) for a.e. \(s\). Multiplying \eqref{eq:mu-it} with \(e^{-\eta t}\), it then follows that for a.e. $t$ we have
\begin{align}
  e^{-\eta t}\norm{\mu_t-\nu_t}_1
  &\le B_\Lambda d_{\eta,\Pi}^T(\pmb \pi,\pmb \pi')
  \int_0^t e^{-(\eta-a_M)(t-s)}\dd s \le \frac{B_\Lambda}{\eta-a_M}d_{\eta,\Pi}^T(\pmb \pi,\pmb \pi'), \label{eq:mu-it-2}
\end{align}
where the last inequality follows from the assumption \(\eta>a_M\), since under this assumption we must have
\[
  \int_0^t e^{-(\eta-a_M)(t-s)}\dd s\le \frac1{\eta-a_M}.
\]
Taking the essential supremum of \eqref{eq:mu-it-2} over \(t\in[0,T]\) proves the desired result.
\end{proof}
To estimate the Lipschitz dependence of \(\Gamma_Q^{\delta,T}\) on the
mean-field flow, we need a uniform bound on the variation of the value function across states.  The natural quantity is \(\spn(V_t)\), since the generator rows have zero total mass and therefore additive constants in \(V_t\) disappear from
\(
    \sum_{y\in\X}\Lambda(x,y,a,\mu)V_t(y).
\)
The next lemma provides a horizon-independent span bound.
\begin{lemma}\label{lem:span}
Suppose that \(\delta>B_\Lambda\).  Then, for all  \(T<\infty\), \(t\in[0,T]\), and every mean-field flow \(\pmb \mu \in \mathcal M_T(\mu_0)\),
\begin{equation}\label{eq:span-est}
  \spn(V_t^{\delta,T,\pmb \mu})
  \le S_V^\delta:=\spn(g)+\frac{2M_c}{\delta-B_\Lambda}.
\end{equation}
\end{lemma}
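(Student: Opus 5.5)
We bound $\spn(V_t)$ through a backward differential inequality for the pairwise differences of $V_t$. The two ingredients are the zero-row-sum structure of the generator and the fact that the discount dominates the jump rates.

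Write $V_t=V_t^{\delta,T,\pmb\mu}$, and let $\pi_t=q^{*,T,\pmb\mu}_t$ be the soft-min optimal policy from \eqref{eq:softmin-Q}. Along this policy the HJB equation \eqref{eq:HJB} becomes linear:
\[
-\dot V_t(x)=\ell_t(x)+\sum_{y\in\X}\bar\Lambda_t(x,y)V_t(y)-\delta V_t(x),
\]
where
\[
\bar\Lambda_t(x,y)=\sum_a\pi_t(a\mid x)\Lambda(x,y,a,\mu_t),\qquad
\ell_t(x)=\sum_a\pi_t(a\mid x)c(x,a,\mu_t)+\alpha\Omega(\pi_t(\cdot\mid x)).
\]
The entropy term is not bounded by $M_c$, so we need a better bound on $\ell_t$. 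Since $H$ is an infimum over $q\in\Pcal(\A)$, and $\alpha\Omega\in[-\alpha\log|\A|,0]$, the value $\ell_t(x)=H_t^{\pmb\mu}(x,V_t)-\sum_y\bar\Lambda_t(x,y)V_t(y)$ is pinned between the $Z_a$-terms. To avoid the $\log|\A|$ constant entirely, we plan to work with the function $h(x)=H_t^{\pmb\mu}(x,V_t)$ directly. The soft-min $-\alpha\log\sum_a e^{-Z_a/\alpha}$ is $1$-Lipschitz in the $\infty$-norm of $Z$ and commutes with adding constants. So for two states $x,x'$, the difference $h(x)-h(x')$ is controlled by comparing the vectors $(Z_a(x))_a$ and $(Z_a(x'))_a$. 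Each of these differs by at most $2M_c$ in cost. The transition parts $\sum_y\Lambda(x,y,a,\mu_t)V_t(y)$ vary by at most $B_\Lambda\,\spn(V_t)$ in total, because zero row sums let us replace $V_t$ by $V_t-\min V_t\in[0,\spn(V_t)]$ and bound each term by $\tfrac{B_\Lambda}{2}\spn(V_t)$. This gives
\[
\spn(H_t^{\pmb\mu}(\cdot,V_t))\le 2M_c+B_\Lambda\spn(V_t).
\]

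Next we use the mild (variation-of-constants) form
\[
V_t=e^{-\delta(T-t)}g+\int_t^T e^{-\delta(s-t)}H_s^{\pmb\mu}(\cdot,V_s)\dd s,
\]
which follows from \eqref{eq:HJB}. Span is subadditive and positively homogeneous, so
\[
\phi(t):=\spn(V_t)\le e^{-\delta(T-t)}\spn(g)+\int_t^T e^{-\delta(s-t)}\bigl(2M_c+B_\Lambda\phi(s)\bigr)\dd s.
\]
Set $\psi(t)=e^{-\delta t}\phi(t)$ and apply a backward Gr\"onwall argument. Its comparison solution $u$ satisfies $-\dot u=2M_c-(\delta-B_\Lambda)u$ with $u(T)=\spn(g)$, and we conclude $\phi(t)\le u(t)$. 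Since $\delta>B_\Lambda$, $u(t)$ is a convex combination of $\spn(g)$ and $2M_c/(\delta-B_\Lambda)$, so
\[
u(t)\le \max\Bigl(\spn(g),\tfrac{2M_c}{\delta-B_\Lambda}\Bigr)\le S_V^\delta,
\]
uniformly in $T$ and $\pmb\mu$. This is the bound \eqref{eq:span-est}.

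The main obstacle is the Lipschitz estimate for the span of the Hamiltonian. We need the constant to be exactly $B_\Lambda$ and not $2B_\Lambda$, and this requires the centering trick. We use that the positive and negative parts of each generator row each have mass at most $B_\Lambda/2$, so that $\bigl|\sum_y\Lambda(x,y,a,\mu)w(y)\bigr|\le\tfrac{B_\Lambda}{2}\spn(w)$. The difference between two states then costs at most $B_\Lambda\spn(w)$. If this sharper bound fails to close, a weaker but still sufficient route is to bound each term by $B_\Lambda\|w-\min w\|_\infty$ and compare only one side. The measurability and absolute continuity needed for the Gr\"onwall step follow from the well-posedness of \eqref{eq:HJB} recorded earlier.
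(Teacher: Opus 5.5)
Your proof is correct and follows essentially the same route as the paper: the $1$-Lipschitz property of the soft-min together with zero-row-sum centering gives $|H_t^{\pmb\mu}(x,v)-H_t^{\pmb\mu}(z,v)|\le 2M_c+B_\Lambda\spn(v)$, and the variation-of-constants formula with a backward Gr\"onwall comparison closes the bound uniformly in $T$ and $\pmb\mu$. Your convex-combination observation gives the slightly sharper bound $\max\bigl(\spn(g),2M_c/(\delta-B_\Lambda)\bigr)$. The opening linearization along the optimal policy and the fallback in your last paragraph can be dropped, since that fallback, without the half-mass observation, only yields the constant $2B_\Lambda$ and would require $\delta>2B_\Lambda$.
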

\begin{proof}
We first record a simple property of the soft-min map.  For
\(z=(z_a)_{a\in\A}\in\R^\A\), define
\[
    \ell_\alpha(z):=-\alpha\log\sum_{a\in\A}e^{-z_a/\alpha}.
\]
Then \(\ell_\alpha\) is \(1\)-Lipschitz with respect to the sup norm.  Indeed,
\[
    \frac{\partial \ell_\alpha(z)}{\partial z_a}
    =
    \frac{e^{-z_a/\alpha}}{\sum_{b\in\A}e^{-z_b/\alpha}},
\text{ 
and hence }
    \sum_{a\in\A}
    \left|
    \frac{\partial \ell_\alpha(z)}{\partial z_a}
    \right|
    =
    1.
\]
By the mean-value theorem,
\(
    |\ell_\alpha(z)-\ell_\alpha(z')|
    \le
    \norm{z-z'}_\infty.
\)
Therefore, for any \(x,z\in\X\) and any \(v:\X\to\R\),
\[
\begin{aligned}
  |H_t^{\pmb \mu}(x,v)-H_t^{\pmb \mu}(z,v)|
  &\le
  \sup_{a\in\A}
  \left|
      c(x,a,\mu_t)-c(z,a,\mu_t)
      +
      \sum_{y\in\X}
      \bigl(\Lambda(x,y,a,\mu_t)-\Lambda(z,y,a,\mu_t)\bigr)v(y)
  \right|\\
  &\le
  2M_c+
  \sup_{a\in\A}
  \left|
      \sum_{y\in\X}
      \bigl(\Lambda(x,y,a,\mu_t)-\Lambda(z,y,a,\mu_t)\bigr)v(y)
  \right|.
\end{aligned}
\]
Fix \(a\in\A\) and set
\( h_y:=\Lambda(x,y,a,\mu_t)-\Lambda(z,y,a,\mu_t). \)
By Assumption~\ref{ass:basic-reg}, it follows that
\(
    \sum_{y\in\X}h_y=0.
\)
Hence, for any constant \(r\in\R\),
\( \sum_{y\in\X}h_yv(y) = \sum_{y\in\X}h_y(v(y)-r). \)
Let
\(
    r:=\frac{\max_{y\in\X}v(y)+\min_{y\in\X}v(y)}{2}.
\)
Then, for all \(y \in X\), it holds that
\(
    |v(y)-r|\le 2^{-1}\spn(v).
\)
Consequently, we obtain that
\[
\left|
    \sum_{y\in\X}h_yv(y)
\right|
=
\left|
    \sum_{y\in\X}h_y(v(y)-r)
\right|
\le
\frac12\sum_{y\in\X}|h_y|\spn(v)
\le
B_\Lambda\spn(v).
\]
Thus, it holds that
\begin{equation}\label{eq:H-span-lip}
  |H_t^{\pmb \mu}(x,v)-H_t^{\pmb \mu}(z,v)|
  \le 2M_c+B_\Lambda\spn(v).
\end{equation}
The variation-of-constants formula for \eqref{eq:HJB} yields
\begin{equation}\label{eq:V-span}
  V_t^{\delta,T,\pmb \mu}(x)=e^{-\delta(T-t)}g(x)
  +\int_t^T e^{-\delta(s-t)}H_s^{\pmb \mu}(x,V_s^{\delta,T,\pmb \mu})\dd s.
\end{equation}
For all \( t \in [0,T]\), we set \(\mathfrak{s}_t:=\spn(V_t^{\delta,T,\pmb \mu}).\)
Since \(\X\) is finite, any terminal cost \(g:\X\to\R\) is bounded, \(\spn(g)<\infty\).
With this notation, \eqref{eq:H-span-lip} and \eqref{eq:V-span} imply
\begin{equation}\label{eq:gron-bc}
  \mathfrak{s}_t\le e^{-\delta(T-t)}\spn(g)
  +\int_t^T e^{-\delta(s-t)}
  \bigl(2M_c+B_\Lambda \mathfrak{s}_s\bigr)\dd s.
\end{equation}
Applying Gr\"onwall's inequality on \([t,T]\) to \eqref{eq:gron-bc} yields
\begin{align*}
  \mathfrak{s}_t
  &\le e^{-(\delta-B_\Lambda)(T-t)}\spn(g)+2M_c\int_t^T e^{-(\delta-B_\Lambda)(s-t)}\dd s\le \spn(g)+\frac{2M_c}{\delta-B_\Lambda}
  =S_V^\delta,
\end{align*}
where the second inequality follows from the assumption that \(\delta>B_\Lambda\) as we must have
\[
  e^{-(\delta-B_\Lambda)(T-t)}\le 1, \text{ and }
  \int_t^T e^{-(\delta-B_\Lambda)(s-t)}\dd s
  \le \frac{1}{\delta-B_\Lambda}.
\]
Thus, we have obtained \eqref{eq:span-est}.
\end{proof}
We next estimate how the discounted HJB solution \(\pmb V^{\delta,T,\pmb \mu}\) depends on the mean-field flow \(\pmb \mu\). 
This Lipschitz bound is the key step in proving the Lipschitz continuity of the \(Q\)-function operator
\(\Gamma_Q^{\delta,T}:\mathcal M_T(\mu_0)\to\mathcal Q_T\).
\begin{lemma}\label{lem:hjb}
Let \(\pmb \mu, \pmb \nu \in \mathcal M_T(\mu_0)\) be mean-field flows.  Assume \(\delta>B_\Lambda\), and define
\begin{equation}\label{eq:Cmu}
  C_\mu^{\delta,T}:=L_c+L_\Lambda S_V^\delta.
\end{equation}
If \(0\le\eta<\delta-B_\Lambda\), then
\begin{equation}\label{eq:V-lip}
  d_{\eta,V}^T(\pmb V^{\delta,T,\pmb \mu},\pmb V^{\delta,T,\pmb \nu})
  \le \frac{C_\mu^{\delta,T}}{\delta-B_\Lambda-\eta}d_{\eta,1}^T(\pmb \mu,\pmb \nu).
\end{equation}
\end{lemma}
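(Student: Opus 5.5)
We will compare the two value flows through the variation-of-constants formula \eqref{eq:V-span}. Writing \(V:=\pmb V^{\delta,T,\pmb \mu}\), \(W:=\pmb V^{\delta,T,\pmb \nu}\), the terminal terms \(e^{-\delta(T-t)}g(x)\) cancel, so
\[
  V_t(x)-W_t(x)=\int_t^T e^{-\delta(s-t)}\bigl(H_s^{\pmb \mu}(x,V_s)-H_s^{\pmb \nu}(x,W_s)\bigr)\dd s .
\]
We split the integrand as
\[
H_s^{\pmb \mu}(x,V_s)-H_s^{\pmb \nu}(x,W_s)=\bigl(H_s^{\pmb \mu}(x,V_s)-H_s^{\pmb \nu}(x,V_s)\bigr)+\bigl(H_s^{\pmb \nu}(x,V_s)-H_s^{\pmb \nu}(x,W_s)\bigr).
\]

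For both pieces we use, as in the proof of Lemma~\ref{lem:span}, that \(H\) is the soft-min \(\ell_\alpha\) applied to the vector \((Z_a)_a\), and \(\ell_\alpha\) is \(1\)-Lipschitz in sup norm. For the first piece, the \(Z\)-vectors differ by
\[
c(x,a,\mu_s)-c(x,a,\nu_s)+\sum_y\bigl(\Lambda(x,y,a,\mu_s)-\Lambda(x,y,a,\nu_s)\bigr)V_s(y).
\]
The cost difference is at most \(L_c\norm{\mu_s-\nu_s}_1\). The rate difference \(h_y\) has zero sum by \eqref{eq:trans}, so the centering trick from Lemma~\ref{lem:span} gives a bound of \(\tfrac12\sum_y|h_y|\,\spn(V_s)\le L_\Lambda\norm{\mu_s-\nu_s}_1\spn(V_s)\). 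Lemma~\ref{lem:span} then gives \(\spn(V_s)\le S_V^\delta\), so the first piece is at most \(C_\mu^{\delta,T}\norm{\mu_s-\nu_s}_1\).

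The second piece is where the main care lies. The \(Z\)-vectors differ by \(\sum_y\Lambda(x,y,a,\nu_s)(V_s(y)-W_s(y))\), which in absolute value is at most \(B_\Lambda\norm{V_s-W_s}_\infty\). This is the crude bound that keeps the rate \(\delta-B_\Lambda\) consistent with the hypothesis \(\eta<\delta-B_\Lambda\). No span argument is needed here, although one could refine it.

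Setting \(e_t:=\norm{V_t-W_t}_\infty\), we obtain
\[
e_t\le \int_t^T e^{-\delta(s-t)}\bigl(C_\mu^{\delta,T}\norm{\mu_s-\nu_s}_1+B_\Lambda e_s\bigr)\dd s .
\]
A backward Gr\"onwall argument, applied exactly as in Lemma~\ref{lem:span}, yields
\[
e_t\le C_\mu^{\delta,T}\int_t^T e^{-(\delta-B_\Lambda)(s-t)}\norm{\mu_s-\nu_s}_1\dd s .
\]
Next we use \(\norm{\mu_s-\nu_s}_1\le e^{\eta s}d_{\eta,1}^T(\pmb\mu,\pmb\nu)\) and multiply by \(e^{-\eta t}\). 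This gives
\[
e^{-\eta t}e_t\le C_\mu^{\delta,T}\,d_{\eta,1}^T(\pmb\mu,\pmb\nu)\int_t^T e^{-(\delta-B_\Lambda-\eta)(s-t)}\dd s\le \frac{C_\mu^{\delta,T}}{\delta-B_\Lambda-\eta}\,d_{\eta,1}^T(\pmb\mu,\pmb\nu).
\]
The final inequality uses \(\eta<\delta-B_\Lambda\). Taking the essential supremum over \(t\) gives \eqref{eq:V-lip}.

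The step needing most attention is the backward Gr\"onwall step with a non-constant forcing term. To handle it cleanly, we would set \(u_t:=e^{-\delta t}e_t\) and convert the inequality into a standard integral inequality \(u_t\le \int_t^T(\cdots)+B_\Lambda\int_t^T u_s\dd s\). Applying the backward Gr\"onwall lemma to this form produces the kernel \(e^{-(\delta-B_\Lambda)(s-t)}\).
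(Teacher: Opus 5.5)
Your proposal is correct and takes essentially the same route as the paper. Both arguments use the variation-of-constants formula with cancelling terminal terms and the $1$-Lipschitz soft-min. The perturbation is split into a cost/rate term, bounded by $C_\mu^{\delta,T}\norm{\mu_s-\nu_s}_1$ via the centering trick and Lemma~\ref{lem:span}, plus a $B_\Lambda\norm{V_s-W_s}_\infty$ term. A backward Gr\"onwall step then gives the kernel $e^{-(\delta-B_\Lambda)(s-t)}$, followed by the Bielecki weighting.

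The only differences are cosmetic. You pair $\Lambda(\cdot,\nu_s)$ with the value difference and use $\spn(V^{\delta,T,\pmb\mu}_s)$, where the paper pairs $\Lambda(\cdot,\mu_s)$ with it and uses $\spn(V^{\delta,T,\pmb\nu}_s)$; both spans are bounded by $S_V^\delta$. You also carry out the Gr\"onwall step explicitly through $u_t=e^{-\delta t}e_t$, and it does yield exactly that kernel.
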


\begin{proof}
Using the variation-of-constants formula for the value functions \(\pmb V^{\delta,T,\pmb \mu}\) and \(\pmb V^{\delta,T,\pmb \nu}\), and noting that their terminal conditions are both \(g\), the terminal terms cancel out and
\[
  \norm{ V^{\delta,T,\pmb \mu}_t-V^{\delta,T,\pmb \nu}_t}_\infty\le \int_t^T e^{-\delta(s-t)}
  \|Q_s^{\delta,T,\pmb \mu}-Q_s^{\delta,T,\pmb \nu}\|\dd s.
\]
Let
\(
  d_t:=
  \norm{
  V_t^{\delta,T,\pmb \mu}
  -
  V_t^{\delta,T,\pmb \nu}
  }_\infty,
\)
and
\(
  m_t:=\norm{\mu_t-\nu_t}_1 .
\)
Using \eqref{eq:Qdelta}, for each \((s,x,a)\),
\begin{align*}
|Q_s^{\delta,T,\pmb \mu}(x,a)-Q_s^{\delta,T,\pmb \nu}(x,a)|&\le \underbrace{|c(x,a,\mu_s)-c(x,a,\nu_s)|}_{(I)}
 +\underbrace{\left|\sum_y\Lambda(x,y,a,\mu_s)( V^{\delta,T,\pmb \mu}_s(y)-V^{\delta,T,\pmb \nu}_s(y))\right|}_{(II)} \\
&\quad +\underbrace{\left|\sum_y\bigl(\Lambda(x,y,a,\mu_s)-\Lambda(x,y,a,\nu_s)\bigr)V^{\delta,T,\pmb \nu}_s(y)\right|}_{(III)}.
\end{align*}
The Assumption \ref{ass:basic-reg} implies that the term \((I)\) is bounded above by \(L_cm_s\) and the term \((II)\) is bounded above by \(B_\Lambda d_s\).  For the term \((III)\), Assumption \ref{ass:basic-reg} implies 
\(
  \sum_{y\in\X}
  \bigl(\Lambda(x,y,a,\mu_s)-\Lambda(x,y,a,\nu_s)\bigr)=0.
\) Hence, for any constant \(r\in\R\),
\(
\sum_y
\bigl(\Lambda(x,y,a,\mu_s)-\Lambda(x,y,a,\nu_s)\bigr)
V_s^{\delta,T,\pmb \nu}(y)
=
\sum_y
\bigl(\Lambda(x,y,a,\mu_s)-\Lambda(x,y,a,\nu_s)\bigr)
\bigl(V_s^{\delta,T,\pmb \nu}(y)-r\bigr).
\)
Choosing
\(
    r=\frac{\max_y V_s^{\delta,T,\pmb \nu}(y)+\min_y V_s^{\delta,T,\pmb \nu}(y)}2,
\)
we obtain
\[
\begin{aligned}
&\left|
\sum_y
\bigl(\Lambda(x,y,a,\mu_s)-\Lambda(x,y,a,\nu_s)\bigr)
V_s^{\delta,T,\pmb \nu}(y)
\right|  \\
&\qquad\le
\frac12
\sum_y
|\Lambda(x,y,a,\mu_s)-\Lambda(x,y,a,\nu_s)|
\spn(V_s^{\delta,T,\pmb \nu}) \le
L_\Lambda S_V^\delta\norm{\mu_s-\nu_s}_1.
\end{aligned}
\]
Consequently,
\[
  \sup_{x,a}|Q_s^{\delta,T,\pmb \mu}(x,a)-Q_s^{\delta,T,\pmb \nu}(x,a)|
  \le
  B_\Lambda\norm{V_s^{\delta,T,\pmb \mu}-V_s^{\delta,T,\pmb \nu}}_\infty
  +
  C_\mu^{\delta,T}\norm{\mu_s-\nu_s}_1.
\]
Therefore, by Gr\"onwall's inequality, it holds that
\begin{equation}\label{eq:d}
   \begin{aligned}
      &\norm{ V^{\delta,T,\pmb \mu}_t-V^{\delta,T,\pmb \nu}_t}_\infty\le \int_t^T e^{-\delta(s-t)}\left(B_\Lambda \norm{ V^{\delta,T,\pmb \mu}_s-V^{\delta,T,\pmb \nu}_s}_\infty+C_\mu^{\delta,T} \norm{\mu_s-\nu_s}_1\right)\dd s 
      \\&\implies \norm{ V^{\delta,T,\pmb \mu}_t-V^{\delta,T,\pmb \nu}_t}_\infty\le C_\mu^{\delta,T}\int_t^T e^{-(\delta-B_\Lambda)(s-t)}\norm{\mu_s-\nu_s}_1\dd s.
    \end{aligned} 
\end{equation}

Since \(\norm{\mu_s-\nu_s}_1\le e^{\eta s}d_{\eta,1}^T(\pmb \mu,\pmb \nu)\), multiplying \eqref{eq:d} by \(e^{-\eta t}\) results in
\begin{equation}\label{eq:d2}
  e^{-\eta t}\norm{ V^{\delta,T,\pmb \mu}_t-V^{\delta,T,\pmb \nu}_t}_\infty
  \le C_\mu^{\delta,T} d_{\eta,1}^T(\pmb \mu,\pmb \nu)
  \int_t^T e^{-(\delta-B_\Lambda-\eta)(s-t)}\dd s.
\end{equation}
Thus, using the assumption \(\delta-B_\Lambda-\eta>0\) we obtain
\(
\int_t^T e^{-(\delta-B_\Lambda-\eta)(s-t)}\dd s \le \frac{1}{\delta-B_\Lambda-\eta},
\)
and taking the essential supremum of \eqref{eq:d2} over \(t\in[0,T]\) proves \eqref{eq:V-lip}, as desired.
\end{proof}
The value function estimate yields a Lipschitz bound for \(\Gamma_Q^{\delta,T}\) and \(\Gamma_\Pi^T\). The latter introduces the dependence on \(\alpha\), with larger \(\alpha\) reducing sensitivity to \(Q\)-function perturbations.
\begin{lemma}\label{lem:q-policy}
Under the assumptions of Lemma~\ref{lem:hjb},
\begin{equation}\label{eq:Q-lip}
  d_{\eta,Q}^T(\pmb Q^{\delta,T,\pmb \mu},\pmb Q^{\delta,T,\pmb \nu})
  \le C_\mu^{\delta,T}\left(1+\frac{B_\Lambda}{\delta-B_\Lambda-\eta}\right)d_{\eta,1}^T(\pmb \mu,\pmb \nu).
\end{equation}
Moreover, for any \(Q\)-function flows \(\pmb Q,\pmb Q'\),
\begin{equation}\label{eq:policy-lip}
  d_{\eta,\Pi}^T(\Gamma_\Pi^T(\pmb Q),\Gamma_\Pi^T(\pmb Q'))
  \le
  \frac1{\alpha}d_{\eta,Q}^T(\pmb Q,\pmb Q').
\end{equation}
\end{lemma}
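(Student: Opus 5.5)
For \eqref{eq:Q-lip}, I will reuse the pointwise estimate obtained inside the proof of Lemma~\ref{lem:hjb}. There, splitting $Q^{\delta,T,\pmb\mu}_s-Q^{\delta,T,\pmb\nu}_s$ into the terms $(I)$, $(II)$, $(III)$ gave, for a.e. $s$,
\[
\sup_{x,a}|Q_s^{\delta,T,\pmb\mu}(x,a)-Q_s^{\delta,T,\pmb\nu}(x,a)|
\le B_\Lambda\norm{V_s^{\delta,T,\pmb\mu}-V_s^{\delta,T,\pmb\nu}}_\infty+C_\mu^{\delta,T}\norm{\mu_s-\nu_s}_1 .
\]
I will multiply this by $e^{-\eta s}$ and take the essential supremum over $s\in[0,T]$. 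This gives
\[
d_{\eta,Q}^T(\pmb Q^{\delta,T,\pmb\mu},\pmb Q^{\delta,T,\pmb\nu})\le B_\Lambda\, d_{\eta,V}^T(\pmb V^{\delta,T,\pmb\mu},\pmb V^{\delta,T,\pmb\nu})+C_\mu^{\delta,T}d_{\eta,1}^T(\pmb\mu,\pmb\nu).
\]
Inserting \eqref{eq:V-lip} from Lemma~\ref{lem:hjb} (valid since $0\le\eta<\delta-B_\Lambda$) and factoring out $C_\mu^{\delta,T}$ yields exactly the factor $1+B_\Lambda/(\delta-B_\Lambda-\eta)$.

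For \eqref{eq:policy-lip}, the argument is pointwise in $(t,x)$ and finite-dimensional. Let $\sigma:\R^\A\to\Pcal(\A)$ be $\sigma(z)_a=e^{-z_a/\alpha}/\sum_b e^{-z_b/\alpha}$. It suffices to show $\norm{\sigma(z)-\sigma(z')}_1\le\alpha^{-1}\norm{z-z'}_\infty$. Once this holds, apply it with $z=Q_t(x,\cdot)$ and $z'=Q'_t(x,\cdot)$, take the maximum over $x$ to get $\norm{\pi_t-\pi'_t}_\Pi\le\alpha^{-1}\norm{Q_t-Q'_t}_\infty$, multiply by $e^{-\eta t}$, and take the essential supremum.

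To prove the bound on $\sigma$, I will use the mean-value theorem along the segment $z_\theta=z'+\theta(z-z')$, with $h:=z-z'$. The Jacobian is
\[
\partial_{z_b}\sigma_a=-\alpha^{-1}\sigma_a(\delta_{ab}-\sigma_b),
\]
so the directional derivative is
\[
\frac{d}{d\theta}\sigma(z_\theta)_a=-\alpha^{-1}\sigma_a\bigl(h_a-\textstyle\sum_b\sigma_bh_b\bigr).
\]
Since $\bar h:=\sum_b\sigma_b h_b$ is a convex combination of the $h_b$, we have $|h_a-\bar h|\le\spn(h)$. Summing over $a$ with weights $\sigma_a$ then gives
\[
\Bigl\|\frac{d}{d\theta}\sigma(z_\theta)\Bigr\|_1\le\alpha^{-1}\spn(h)\le 2\alpha^{-1}\norm{h}_\infty .
\]

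This crude estimate loses a factor $2$, and removing it is the main obstacle. I will instead use a sharper bound on $\sum_a\sigma_a|h_a-\bar h|$. This quantity is the mean absolute deviation of the random variable $h_a$, $a\sim\sigma$. The variable lies in an interval of length $\spn(h)\le2\norm{h}_\infty$, and its mean absolute deviation about the mean is at most half the range. That gives $\sum_a\sigma_a|h_a-\bar h|\le\tfrac12\spn(h)\le\norm h_\infty$.

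To justify the half-range bound, I will write $h_a\in[m,M]$ and $p:=\sum_{a:\,h_a\ge\bar h}\sigma_a$. The mean absolute deviation equals $2\sum_{h_a\ge\bar h}\sigma_a(h_a-\bar h)\le 2p(M-\bar h)$. By symmetry it is also at most $2(1-p)(\bar h-m)$. Writing $u=M-\bar h$ and $v=\bar h-m$, so that $u+v=M-m$, the smaller of $2pu$ and $2(1-p)v$ is at most $2uv/(u+v)\le(M-m)/2$. Integrating the derivative bound over $\theta\in[0,1]$ then gives the claim with constant exactly $1/\alpha$.
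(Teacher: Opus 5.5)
Your proposal is correct and follows the paper's proof essentially step for step: the pointwise $Q$-estimate from Lemma~\ref{lem:hjb} is weighted by $e^{-\eta t}$ and combined with \eqref{eq:V-lip}, and the soft-min bound comes from differentiating along the segment $z_\theta$ and bounding a mean absolute deviation by $\|h\|_\infty$. Your explicit half-range argument via $p$, $u$, $v$ is a valid justification of the step that the paper only asserts.
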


\begin{proof}
The point-wise estimate established in Lemma~\ref{lem:hjb} gives, for a.e. \(t\in[0,T]\),
\begin{equation}\label{eq:ptwise}
  \norm{Q_t^{\delta,T,\pmb \mu}-Q_t^{\delta,T,\pmb \nu}}_\infty
  \le B_\Lambda\norm{V_t^{\delta,T,\pmb \mu}-V_t^{\delta,T,\pmb \nu}}_\infty+C_\mu^{\delta,T}\norm{\mu_t-\nu_t}_1.
\end{equation}
Multiplying by \(e^{-\eta t}\), taking the essential supremum over \(t\), and using Lemma~\ref{lem:hjb} proves \eqref{eq:Q-lip}.

It remains to prove \eqref{eq:policy-lip}. For \(z\in\R^\A\), define
\[
  \sigma_\alpha(z)_i
  =
  \frac{e^{-z_i/\alpha}}{\sum_{j\in\A}e^{-z_j/\alpha}}.
\]
Let \(h=z-z'\) and \(z_\theta=z'+\theta h\), \(\theta\in[0,1]\). A direct
calculation gives
\[
  \frac{d}{d\theta}\sigma_\alpha(z_\theta)_i
  =
  -\frac1\alpha
  \sigma_\alpha(z_\theta)_i
  \left(
      h_i-\sum_{j\in\A}\sigma_\alpha(z_\theta)_j h_j
  \right).
\]
Hence
\[
\begin{aligned}
\norm{\sigma_\alpha(z)-\sigma_\alpha(z')}_1
&\le
\frac1\alpha
\int_0^1
\sum_{i\in\A}
\sigma_\alpha(z_\theta)_i
\left|
      h_i-\sum_{j\in\A}\sigma_\alpha(z_\theta)_j h_j
\right|
\dd\theta .
\end{aligned}
\]
For each fixed \(\theta\), the term inside the sum is the mean absolute
deviation of a random variable taking values \(\{h_i:i\in\A\}\). Since this
random variable is supported on an interval of length at most
\(2\norm{h}_\infty\), its mean absolute deviation is at most
\(\norm{h}_\infty\). Therefore,
\begin{equation}\label{eq:ev-ptwise2}
    \norm{\sigma_\alpha(z)-\sigma_\alpha(z')}_1
  \le
  \frac1\alpha\norm{z-z'}_\infty.
\end{equation}

Applying this point-wise to
\(z=(Q_t(x,a))_{a\in\A}\) and \(z'=(Q'_t(x,a))_{a\in\A}\), then taking the
maximum over \(x\), multiplying by \(e^{-\eta t}\), and taking the essential
supremum over \(t\in[0,T]\), yields
\[d_{\eta,\Pi}^T(\Gamma_\Pi^T(\pmb Q),\Gamma_\Pi^T(\pmb Q'))
  \le
  \frac1{\alpha}d_{\eta,Q}^T(\pmb Q,\pmb Q'),
\]
as desired.
\end{proof}

\subsection{Majorization of \(\Phi_{\delta}^T\) with a positive operator}

The preceding subsection derived Lipschitz estimates for
\(\Gamma_M^T\), \(\Gamma_Q^{\delta,T}\), and \(\Gamma_\Pi^T\). We now use these estimates to analyze the full MFE operator \(\Phi_\delta^T=\Gamma_\Pi^T\circ\Gamma_Q^{\delta,T}\circ\Gamma_M^T\). Instead of collapsing the estimates immediately into a single norm bound, we keep the time-dependent propagation of errors through the forward equation and the discounted HJB equation. This yields a positive integral operator (a nested Volterra operator) that point-wise majorizes the policy error generated by \(\Phi_\delta^T\). Its spectral
radius gives a refined finite-horizon contraction criterion.

\begin{lemma}
\label{lem:finite-horizon-majorization}
Let \(\pmb \pi,\pmb \pi'\in\Pi_T\) be admissible policies, and set
\(
        p_{\pmb \pi,\pmb \pi'}(t):=\|\pi_t-\pi'_t\|_{\Pi}.
\)
Let \(\mathcal A_T^\delta\) be an operator on $(L^{\infty}([0,T]),\|\cdot\|_\infty)$, defined as
\begin{equation}\label{eq:suyosauce}
        (\mathcal A_T^\delta f)(t)
        :=
        c_{\delta,T}
        \left[
            (M_Tf)(t)
            +
            B_\Lambda
            \int_t^T
            e^{-\gamma_\delta(s-t)}
            (M_Tf)(s)\,ds
        \right],
\end{equation}
where
\(\gamma_\delta:=\delta-B_\Lambda,\) \(c_{\delta,T}:=\frac{B_\Lambda C_\mu^{\delta,T}}{\alpha},\)
and \(M_T\) is the multiplication operator
\[
(M_Tf)(t) := \int_0^t e^{a_M(t-s)}f(s)\,ds .
\]
Then, it holds that
\begin{equation}\label{eq:ptwise-ff}
\|\Phi_\delta^T(\pmb \pi)_t-\Phi_\delta^T(\pmb \pi')_t\|_{\Pi}
        \le
        (\mathcal A_T^\delta p_{\pmb \pi,\pmb \pi'})(t),
        \qquad \text{for a.e. } 0\le t\le T .
\end{equation}
\end{lemma}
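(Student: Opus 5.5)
The plan is to redo the three Lipschitz estimates of Lemmas~\ref{lem:forward}, \ref{lem:hjb} and \ref{lem:q-policy} pointwise in time, without the Bielecki weighting, and then compose them. Set \(\pmb\mu=\Gamma_M^T(\pmb\pi)\), \(\pmb\nu=\Gamma_M^T(\pmb\pi')\), and \(m_t:=\|\mu_t-\nu_t\|_1\). We also set \(d_t:=\|V_t^{\delta,T,\pmb\mu}-V_t^{\delta,T,\pmb\nu}\|_\infty\) and \(q_t:=\|Q_t^{\delta,T,\pmb\mu}-Q_t^{\delta,T,\pmb\nu}\|_\infty\).

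\textbf{Forward step.} Inequality \eqref{eq:mu-it} from the proof of Lemma~\ref{lem:forward} already states the pointwise bound
\[
m_t\le B_\Lambda\int_0^t e^{a_M(t-s)}p_{\pmb\pi,\pmb\pi'}(s)\,ds = B_\Lambda (M_Tp_{\pmb\pi,\pmb\pi'})(t).
\]
It comes from Lemma~\ref{lem:F-lip} and Gr\"onwall, and holds before any weighting.

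\textbf{Backward step.} The second line of \eqref{eq:d} in the proof of Lemma~\ref{lem:hjb} gives the pointwise bound
\[
d_t\le C_\mu^{\delta,T}\int_t^T e^{-\gamma_\delta(s-t)}m_s\,ds .
\]
Lemma~\ref{lem:span} supplies the span bound behind \(C_\mu^{\delta,T}\), and \(\delta>B_\Lambda\) is in force. Combining this with the pointwise \(Q\)-estimate \eqref{eq:ptwise} yields
\[
q_t\le C_\mu^{\delta,T}\Big[m_t+B_\Lambda\int_t^T e^{-\gamma_\delta(s-t)}m_s\,ds\Big].
\]

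\textbf{Policy step.} The softmax estimate \eqref{eq:ev-ptwise2} gives, after maximizing over \(x\),
\[
\|\Phi_\delta^T(\pmb\pi)_t-\Phi_\delta^T(\pmb\pi')_t\|_\Pi\le \alpha^{-1}q_t .
\]

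Substituting the forward bound for \(m\) into the bound for \(q\) is legitimate because the kernels \(1\) and \(B_\Lambda e^{-\gamma_\delta(s-t)}\mathbf 1_{s\ge t}\) are nonnegative, so the operation is monotone. The constants then collect as \(\alpha^{-1}C_\mu^{\delta,T}B_\Lambda=c_{\delta,T}\), which is exactly \((\mathcal A_T^\delta p_{\pmb\pi,\pmb\pi'})(t)\).

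\textbf{Main obstacle.} The delicate point is the Gr\"onwall step in the backward inequality. The inequality
\[
d_t\le\int_t^T e^{-\delta(s-t)}\big(B_\Lambda d_s+C_\mu^{\delta,T}m_s\big)\,ds
\]
is a backward integral inequality with a \(t\)-dependent kernel. I would handle it by setting \(u_t:=e^{-\delta t}d_t\), which turns it into the standard backward form
\[
u_t\le \int_t^T \big(B_\Lambda u_s+C_\mu^{\delta,T}e^{-\delta s}m_s\big)\,ds,
\]
and then apply the backward Gr\"onwall lemma to obtain the kernel \(e^{-(\delta-B_\Lambda)(s-t)}\).

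A second point is measurability and the ``a.e.'' qualifiers. All quantities are bounded and measurable, since \(p_{\pmb\pi,\pmb\pi'}\in L^\infty\) and \(m\), \(d\) are continuous, so \(\mathcal A_T^\delta\) acts on \(L^\infty([0,T])\). The final inequality therefore holds for a.e.\ \(t\), inherited from the a.e.\ definition of the policies.
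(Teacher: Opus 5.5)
Your proposal is correct and follows the paper's proof essentially step for step. Like the paper, you take the unweighted pointwise forward bound, the pointwise backward HJB bound and the pointwise $Q$-estimate from the earlier lemmas, compose them using nonnegativity of the kernels, and finish with the $1/\alpha$ soft-min estimate. Your substitution $u_t=e^{-\delta t}d_t$ just spells out the backward Gr\"onwall step that the paper invokes without detail.
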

\begin{proof}
Define $\pmb \mu=\Gamma_M^T(\pmb \pi),$ $\pmb \nu=\Gamma_M^T(\pmb \pi').$
By Lemma \ref{lem:forward}, we must have
\begin{equation}\label{eq:maj-forward}
        \|\mu_t-\nu_t\|_1 
        \le
        B_\Lambda\int_0^t e^{a_M(t-s)}p_{\pmb \pi,\pmb \pi'}(s)\,ds
        =
        B_\Lambda(M_Tp_{\pmb \pi,\pmb \pi'})(t).
\end{equation}
By the same point-wise HJB stability estimate proved in
Lemma~\ref{lem:hjb}, we get
\begin{equation}\label{eq:maj-value}
        \| V^{\delta,T,\pmb \mu}_t-V^{\delta,T,\pmb \nu}_t\|_\infty
        \le
        C_\mu^{\delta,T}
        \int_t^T e^{-(\delta-B_\Lambda)(s-t)}\|\mu_s-\nu_s\|_1\,\dd s .
\end{equation}
For the corresponding \(Q\)-functions to \(\pmb V^{\delta,T,\pmb \mu}\) and \(\pmb V^{\delta,T,\pmb \nu}\), the point-wise estimate \eqref{eq:ptwise} used in
Lemma~\ref{lem:q-policy} gives
\[
        \|Q_t^{\delta,T,\pmb \mu}-Q_t^{\delta,T,\pmb \nu}\|_\infty
        \le
        C_\mu^{\delta,T}\|\mu_t-\nu_t\|_1+B_\Lambda \| V^{\delta,T,\pmb \mu}_t-V^{\delta,T,\pmb \nu}_t\|_\infty.
\]
Using \eqref{eq:maj-value}, we obtain
\begin{equation}\label{eq:maj-Q}
        \|Q_t^{\delta,T,\pmb \mu}-Q_t^{\delta,T,\pmb \nu}\|_\infty
        \le
        C_\mu^{\delta,T}\|\mu_t-\nu_t\|_1
        +
        B_\Lambda C_\mu^{\delta,T}
        \int_t^T e^{-\gamma_\delta(s-t)}\|\mu_s-\nu_s\|_1\,\dd s .
\end{equation}
Substituting the forward bound \eqref{eq:maj-forward} into
\eqref{eq:maj-Q}, we get
\[
\begin{aligned}
        \|Q_t^{\delta,T,\pmb \mu}-Q_t^{\delta,T,\pmb \nu}\|_\infty
        &\le 
        B_\Lambda C_\mu^{\delta,T}
        \left[
            (M_Tp_{\pmb \pi,\pmb \pi'})(t)
            +
            B_\Lambda
            \int_t^T e^{-\gamma_\delta(s-t)}(M_Tp_{\pmb \pi,\pmb \pi'})(s)\,\dd s
        \right].
\end{aligned}
\]
Finally, by the softmin Lipschitz estimate \eqref{eq:ev-ptwise2} in Lemma~\ref{lem:q-policy}, we must have
\[
        \|\Gamma_\Pi^T(\pmb Q^{\delta,T,\pmb \mu})_t
        -
        \Gamma_\Pi^T(\pmb Q^{\delta,T,\pmb \nu})_t\|_{\Pi}
        \le
        \frac{1}{\alpha}
        \|Q_t^{\delta,T,\pmb \mu}-Q_t^{\delta,T,\pmb \nu}\|_\infty .
\]
Since
\( \Phi_\delta^T(\pmb \pi) = \Gamma_\Pi^T(\pmb Q^{\delta,T,\pmb \mu})\), and \( \Phi_\delta^T(\pmb \pi') = \Gamma_\Pi^T(\pmb Q^{\delta,T,\pmb \nu}),
\)
we can conclude that
\begin{align*}
\|\Phi_\delta^T(\pmb \pi)_t-\Phi_\delta^T(\pmb \pi')_t\|_{\Pi}
&\le \frac{B_\Lambda C_\mu^{\delta,T}}{\alpha} \left[ (M_Tp_{\pmb \pi,\pmb \pi'})(t) + B_\Lambda
 \int_t^T e^{-\gamma_\delta(s-t)}(M_Tp_{\pmb \pi,\pmb \pi'})(s)\,\dd s \right]  
 \\&=  (\mathcal A_T^\delta p_{\pmb \pi,\pmb \pi'})(t),
\end{align*}
which proves the desired point-wise majorization \eqref{eq:ptwise-ff}.
\end{proof}

\begin{lemma}\label{lem:kr}
Suppose that \(c_{\delta,T}>0\) and \(B_\Lambda>0\). Then, it holds that
\begin{enumerate}
    \item The operator \(\mathcal A_T^\delta:L^\infty([0,T])\to L^\infty([0,T])\) is a well-defined positive compact operator on
\((L^\infty([0,T]),\|\cdot\|_\infty)\);
    \item Let \(\rho(\mathcal A_T^\delta)\) be the spectral radius of \(\mathcal A_T^\delta\) on \((L^\infty([0,T]),\|\cdot\|_\infty)\). Then, \(\rho(\mathcal A_T^\delta)\) is an eigenvalue of
\(\mathcal A_T^\delta\) with a nonnegative eigenfunction bounded away from \(0\).
This eigenvalue is the \emph{principal eigenvalue}.
\end{enumerate}
\end{lemma}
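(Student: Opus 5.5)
The plan is to verify compactness by showing that $\mathcal A_T^\delta$ maps bounded sets of $L^\infty([0,T])$ into bounded, equicontinuous families of continuous functions, and then to apply the Krein--Rutman theorem on the cone $K:=\{f\in L^\infty([0,T]): f\ge0 \text{ a.e.}\}$. Throughout I take $T>0$; for $T=0$ the space is trivial.

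\textbf{Well-definedness, positivity and compactness.} For $f\in L^\infty([0,T])$ we have $|(M_Tf)(t)|\le T e^{a_MT}\|f\|_\infty$. For $t'<t$,
\[
|(M_Tf)(t)-(M_Tf)(t')|\le \int_{t'}^t e^{a_M(t-s)}|f(s)|\,\dd s+\bigl(e^{a_M(t-t')}-1\bigr)\int_0^{t'}e^{a_M(t'-s)}|f(s)|\,\dd s\le C_T|t-t'|\,\|f\|_\infty,
\]
so $M_T$ maps the unit ball into a uniformly bounded, uniformly Lipschitz family in $C([0,T])$. The map $g\mapsto \int_t^T e^{-\gamma_\delta(s-t)}g(s)\,\dd s$ has the same Lipschitz property on bounded sets. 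Hence $\mathcal A_T^\delta$ maps the unit ball of $L^\infty$ into a bounded equicontinuous subset of $C([0,T])$. By Arzel\`a--Ascoli this set is relatively compact in the sup norm, and the sup norm coincides with the $L^\infty$ norm on continuous functions. So $\mathcal A_T^\delta$ is compact. Positivity is immediate: $c_{\delta,T}>0$, $B_\Lambda>0$, and both kernels are nonnegative, so $f\ge0$ a.e. implies $\mathcal A_T^\delta f\ge0$.

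\textbf{Positivity of the spectral radius.} Let $\mathbf 1$ denote the constant function. Then $w:=\mathcal A_T^\delta\mathbf 1$ is continuous and strictly positive on all of $[0,T]$:
\begin{itemize}
\item for $t>0$, the term $(M_T\mathbf 1)(t)>0$;
\item at $t=0$, the integral term $B_\Lambda\int_0^T e^{-\gamma_\delta s}(M_T\mathbf 1)(s)\,\dd s$ is positive.
\end{itemize}
By compactness of $[0,T]$ there is $\varepsilon>0$ with $\mathcal A_T^\delta\mathbf 1\ge\varepsilon\mathbf 1$. Iterating with positivity gives $(\mathcal A_T^\delta)^n\mathbf 1\ge\varepsilon^n\mathbf 1$, so $\|(\mathcal A_T^\delta)^n\|\ge\varepsilon^n$. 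Gelfand's formula then yields $\rho(\mathcal A_T^\delta)\ge\varepsilon>0$.

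\textbf{Krein--Rutman and the lower bound on the eigenfunction.} The cone $K$ is closed and has nonempty interior in $L^\infty$ (it contains $\mathbf 1$ together with a ball around it), so it is in particular total. The Krein--Rutman theorem, applied to the compact positive operator $\mathcal A_T^\delta$ with $\rho(\mathcal A_T^\delta)>0$, gives a nonzero $\varphi\in K$ with $\mathcal A_T^\delta\varphi=\rho(\mathcal A_T^\delta)\varphi$.

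To show $\varphi$ is bounded away from $0$, I will use the identity $\varphi=\rho^{-1}\mathcal A_T^\delta\varphi$, which makes $\varphi$ a.e. equal to a continuous function. Since $\varphi\ge0$ is nonzero, it is positive on a set of positive measure. Let $t_0<T$ be the essential infimum of its support. Then:
\begin{itemize}
\item $(M_T\varphi)(s)>0$ for all $s\in(t_0,T]$;
\item consequently the integral term $\int_t^T e^{-\gamma_\delta(s-t)}(M_T\varphi)(s)\,\dd s>0$ for every $t<T$;
\item at $t=T$ we still have $(M_T\varphi)(T)>0$.
\end{itemize}
Thus $\mathcal A_T^\delta\varphi$ is continuous and strictly positive on $[0,T]$, hence bounded below by a positive constant, and so is $\varphi$.

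The step I expect to need the most care is the invocation of Krein--Rutman in the nonseparable, non-reflexive space $L^\infty$. I will use the version for a closed cone with nonempty interior (or simply total) and a compact positive operator with positive spectral radius, which requires no reflexivity. If this causes trouble, a fallback is to restrict $\mathcal A_T^\delta$ to $C([0,T])$, where the cone has interior and the range already lies. The nonzero spectra of the two operators agree because $\mathcal A_T^\delta$ maps $L^\infty$ into $C([0,T])$. An eigenfunction found in $C([0,T])$ is then also an eigenfunction in $L^\infty$.
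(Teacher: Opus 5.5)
Your proof is correct and follows the same route as the paper. Both arguments use Arzel\`a--Ascoli compactness into $C([0,T])$, positivity on the cone of a.e.-nonnegative functions, the bound $\mathcal A_T^\delta\mathbf 1\ge\varepsilon\mathbf 1$ combined with Gelfand's formula to get $\rho(\mathcal A_T^\delta)>0$, and then Krein--Rutman.

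You are more careful than the paper in two places.

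First, you note that $(M_T\mathbf 1)(0)=0$ and use the backward integral term to get positivity of $\mathcal A_T^\delta\mathbf 1$ at $t=0$. The paper instead asserts that both pieces are strictly positive on all of $[0,T]$, which fails at the endpoints, although their sum is still positive.

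Second, your support argument shows that $\mathcal A_T^\delta\varphi>0$ everywhere on $[0,T]$. This actually proves that the eigenfunction is bounded away from zero. The paper states this property, and later relies on it in Lemma~\ref{lem:contraction}, but its proof only produces a nonnegative nonzero eigenfunction from Krein--Rutman.
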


\begin{proof}
To prove the compactness of the operator \(\mathcal A_T^\delta\), we show that \(\mathcal A_T^\delta\) maps \((L^{\infty}([0,T]),\|\cdot\|_\infty)\) to a relatively compact subset of \((C([0,T]),\|\cdot\|_\infty)\).
For a given \(f\in L^\infty(0,T)\), define the corresponding weighted multiplication operators
\begin{equation}\label{eq:mult-op1}
    m_f(t):=(M_Tf)(t)
    =
    \int_0^t e^{a_M(t-s)}f(s)\,ds
\end{equation}
and
\begin{equation}\label{eq:mult-op2}
    n_f(t):=
    \int_t^T e^{-\gamma_\delta(s-t)}m_f(s)\,ds.
\end{equation}
Then, it holds that \( \mathcal A_T^\delta f = c_{\delta,T}(m_f+B_\Lambda n_f).\)

First, we show that \(\mathcal A_T^\delta\) is a well-defined compact operator
on \((L^\infty([0,T]),\|\cdot\|_\infty)\). Suppose that \(\|f\|_\infty\leq 1\). Since \(T<\infty\),
there is a constant \(C_m>0\), independent of \(f\), such that
\begin{equation}\label{eq:suyo1}
    |m_f(t)|
    \leq
    \int_0^t e^{a_M(t-s)}|f(s)|\,ds
    \leq
    C_m
    \qquad \text{for all } t\in[0,T].
\end{equation}
Moreover, by definition \(m_f\) is absolutely continuous, and hence differentiable a.e. \cite[Lemma 8.2]{brezis2011functional}. In particular, it holds that
\[
    m_f'(t)=a_M m_f(t)+f(t)
    \qquad \text{for a.e. } t\in(0,T).
\]
Hence, by \eqref{eq:suyo1} we obtain
\[
    |m_f'(t)|
    \leq
    a_M C_m+1
    \qquad \text{for a.e. } t\in(0,T);
\]
and thus the family \(\{m_f:\|f\|_\infty\leq 1\}\) is uniformly bounded and
equicontinuous.
Moreover, since \(m_f\) is uniformly bounded, there exists a constant
\(C_n>0\), independent of the choice \(f\), such that
\[
    |n_f(t)|
    \leq
    \int_t^T e^{|\gamma_\delta|(s-t)}|m_f(s)|\,ds
    \leq
    C_n
    \qquad \text{for all } t\in[0,T].
\]
Similarly, one can show that \(\{n_f:\|f\|_\infty\leq 1\}\) is also uniformly bounded and equicontinuous. It follows that for all \( f \in (L^{\infty}([0,T]).\|\cdot\|_{\infty})\), \(\mathcal A_T^\delta f\) is a continuous and bounded function; hence, \(\mathcal A_T^\delta\) is a well-defined operator.

Similarly, it also follows that the set
\( \tilde{\mathcal A}=\{\mathcal A_T^\delta f:\|f\|_\infty\leq 1, f \in L^{\infty}([0,T])\}\)
is uniformly bounded and equicontinuous in \((C([0,T]),\|\cdot\|_{\infty})\). In particular, by the Arzelà--Ascoli theorem, the set \(\tilde{\mathcal A}\) is relatively compact in \((C([0,T]),\|\cdot\|_\infty)\),
and hence also relatively compact in \((L^\infty([0,T]),\|\cdot\|_\infty)\). Therefore, \(\mathcal A_T^\delta\) is a compact operator on \((L^\infty([0,T]),\|\cdot\|_\infty)\) \cite[Section 6]{brezis2011functional}.

Next, we prove the positivity of \(\mathcal A_T^\delta\). Let
\[
K := \{ f \in L^{\infty}([0,T]): f \ge 0 \text{ a.e.}\}.
\]
For any \( f \in L^{\infty}([0,T])\) we can write \(f := f^+ - f^-\) for some \(f^+, f^- \in K\), and thus it follows that \(K\) is a total cone (Definition \ref{def:a1}).
If \(f \in K \), then
\[
m_f(t)
    =
    \int_0^t e^{a_M(t-s)}f(s)\,ds
    \geq 0
\]
for all \(t\in[0,T]\). Similarly, it also follows that \(n_f(t) \geq 0.\)
Furthermore, since \(c_{\delta,T}\geq 0\) and \(B_\Lambda\geq 0\), we obtain
\(\mathcal A_T^\delta f\geq 0\) everywhere on \([0,T]\).
Thus, \(\mathcal A_T^\delta\) is a positive operator (Definition \ref{def:a3}) on \((L^\infty([0,T]),\|\cdot\|_{\infty})\) as we have \(\mathcal A^{\delta}_T(K)\subset K\).

It remains to discuss the spectral radius. By \(\mathbf 1\), we denote the function that is identically \(1\) on \([0,T]\). Assume now that
\(c_{\delta,T}>0\) and \(B_\Lambda>0\). Then, it trivially holds that
\( m_{\mathbf 1}(t) >0\) and \(n_{\mathbf 1}(t) > 0\) for all \( t \in [0,T] \).
Thus, it holds that
\(
(\mathcal A_T^\delta \mathbf 1)(t) = c_{\delta,T} \bigl(m_{\mathbf 1}(t)+B_\Lambda n_{\mathbf 1}(t)\bigr)>0
\)
is strictly positive for every \(t\in[0,T]\). Since
\(\mathcal A_T^\delta \mathbf 1\) is continuous on \([0,T]\), there
exists \(\beta>0\) such that
\(
    \mathcal A_T^\delta \mathbf 1\geq \beta \mathbf 1.
\)
Using positivity of \(\mathcal A_T^\delta\mathbf 1\), we obtain inductively
\( (\mathcal A_T^\delta)^n\mathbf 1
    \geq
    \beta^n \mathbf 1 \) point-wise on \([0,T]\) for every  \(n\geq 1.\)
Let \(\|\cdot\|_{L^\infty\to L^\infty}\) be the operator norm obtained on \((L^{\infty}([0,T]),\| \cdot \|_\infty)\). Notice that
\begin{equation}\label{eq:suyo2}
    \|(\mathcal A_T^\delta)^n\|_{L^\infty\to L^\infty}
    \geq
    \|(\mathcal A_T^\delta)^n\mathbf 1\|_\infty
    \geq
    \beta^n.
\end{equation}
 By Gelfand's formula, \eqref{eq:suyo2} implies that
\(
    \rho(\mathcal A_T^\delta)
    =
    \lim_{n\to\infty}
    \|(\mathcal A_T^\delta)^n\|_{L^\infty\to L^\infty}^{1/n}
    \geq
    \beta
    >
    0.
\)

Finally, since \(\mathcal A_T^\delta\) is a positive compact operator on
the Banach lattice \((L^\infty([0,T]),\|\cdot\|_\infty)\) and
\(\rho(\mathcal A_T^\delta)>0\), the Krein--Rutman theorem (Theorem \ref{thrm:a5}) implies that
\(\rho(\mathcal A_T^\delta)\) is an eigenvalue of
\(\mathcal A_T^\delta\). Moreover, there exists a nonnegative nonzero
function \(p\in L^\infty(0,T)\) such that
\(
    \mathcal A_T^\delta p
    =
    \rho(\mathcal A_T^\delta)p.
\)
 Since
\(\rho(\mathcal A_T^\delta)>0\), the principal eigenvalue can be taken to be strictly positive by the Krein--Rutman theorem (Theorem \ref{thrm:a5}), as desired.
\end{proof}

\subsection{Contraction under Bielecki metrics}

The preceding spectral analysis provides a sharp finite-horizon contraction
criterion for the positive error-propagation operator \(\mathcal A_T^\delta\).
Its principal eigenfunction induces an adapted weighted metric whose
contraction factor is governed by \(\rho(\mathcal A_T^\delta)\), but this metric generally depends on \(T\).
To obtain a metric that can be chosen uniformly over \(T\), we therefore introduce Bielecki metrics with exponential weight \(e^{-\eta t}\). The following result shows that they provide an explicit horizon-independent contraction factor \(\kappa_{\delta,\eta,\alpha}^T\), and optimizing
over \(\eta\) recovers the large-horizon limit of \(\rho(\mathcal A_T^\delta)\).

\begin{theorem}\label{thm:contraction}
Suppose that Assumption~\ref{ass:basic-reg} holds. Let $a_M$ be as in Lemma \ref{lem:F-lip} and $B_{\Lambda}$ as in \eqref{eq:constants}. Assume that there exists $\eta$ such that
\(a_M<\eta<\delta-B_\Lambda.\)
Then, for every \(T<\infty\),
\begin{equation}\label{eq:Phi-lip}
  d_{\eta,\Pi}^T(\Phi_\delta^T(\pmb \pi),\Phi_\delta^T(\pmb \pi'))
  \le \kappa_{\delta,\eta,\alpha}^T d_{\eta,\Pi}^T(\pmb \pi,\pmb \pi'),
\end{equation}
where
\begin{equation}\label{eq:kappa}
  \kappa_{\delta,\eta,\alpha}^T
  =\frac{B_\Lambda}{\eta-a_M}\cdot\frac1{\alpha}\cdot C_\mu^{\delta,T}
  \left(1+\frac{B_\Lambda}{\delta-B_\Lambda-\eta}\right).
\end{equation}
Consequently, if \(\kappa_{\delta,\eta,\alpha}^T<1\), then \(\Phi_\delta^T\) has a unique fixed point in \(\Pi_T\), and the fixed-point iteration converges to it.  The contraction constant \eqref{eq:kappa} is independent of \(T\).
\end{theorem}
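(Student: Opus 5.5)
The plan is to chain the three component Lipschitz estimates from the preceding subsection, all measured in Bielecki metrics with the same exponent \(\eta\). Fix \(\pmb\pi,\pmb\pi'\in\Pi_T\) and put \(\pmb\mu=\Gamma_M^T(\pmb\pi)\), \(\pmb\nu=\Gamma_M^T(\pmb\pi')\). Since \(\eta>a_M\), Lemma~\ref{lem:forward} gives
\[
d_{\eta,1}^T(\pmb\mu,\pmb\nu)\le \frac{B_\Lambda}{\eta-a_M}\,d_{\eta,\Pi}^T(\pmb\pi,\pmb\pi').
\]
The requirement \(a_M<\eta<\delta-B_\Lambda\) forces \(\delta>B_\Lambda\). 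Hence Lemma~\ref{lem:span} holds, and so do the hypotheses of Lemma~\ref{lem:hjb}, namely \(0\le\eta<\delta-B_\Lambda\). Estimate \eqref{eq:Q-lip} of Lemma~\ref{lem:q-policy} then yields
\[
d_{\eta,Q}^T(\pmb Q^{\delta,T,\pmb\mu},\pmb Q^{\delta,T,\pmb\nu})\le C_\mu^{\delta,T}\Bigl(1+\tfrac{B_\Lambda}{\delta-B_\Lambda-\eta}\Bigr)d_{\eta,1}^T(\pmb\mu,\pmb\nu).
\]
Applying \eqref{eq:policy-lip} to \(\Gamma_\Pi^T\) contributes the factor \(1/\alpha\). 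Multiplying the three constants gives exactly \(\kappa^T_{\delta,\eta,\alpha}\) in \eqref{eq:kappa}, which proves \eqref{eq:Phi-lip}.

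Horizon-independence follows once we note that \(C_\mu^{\delta,T}=L_c+L_\Lambda S_V^\delta\), and \(S_V^\delta=\spn(g)+2M_c/(\delta-B_\Lambda)\) involves no \(T\) (Lemma~\ref{lem:span}). The other two factors depend only on \(B_\Lambda,L_\Lambda,\eta,\delta,\alpha\). The one delicate point is to confirm that each integral bound in Lemmas~\ref{lem:forward} and~\ref{lem:hjb} was closed by the improper integrals \(\int_0^\infty e^{-(\eta-a_M)r}\,dr\) and \(\int_0^\infty e^{-(\delta-B_\Lambda-\eta)r}\,dr\). Those lemmas do exactly this, so no \(T\) appears.

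For the fixed-point conclusion, assume \(\kappa^T_{\delta,\eta,\alpha}<1\). The space \((\Pi_T,d_{\eta,\Pi}^T)\) is complete, as shown in Section~\ref{sect:prelim}: the weighted metric is strongly equivalent to the \(L^\infty\) metric on \([0,T]\), and \(\Pi_T\) is closed. Banach's fixed-point theorem then gives a unique fixed point \(\pmb\pi^{\delta,T}\) and convergence of \(\pmb\pi^{k+1}=\Phi_\delta^T(\pmb\pi^k)\) in \(d_{\eta,\Pi}^T\), at geometric rate \(\kappa^k\). By metric equivalence this is also convergence in the uniform metric. The fixed point corresponds to the MFE through Definition~\ref{def:regularized-mfe}.

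The main thing to check is well-definedness. We need \(\Phi_\delta^T\) to map \(\Pi_T\) into itself, which was established via measurability of the soft-min flow. We also need the estimates to hold for a.e. \(t\), so that taking essential suprema is legitimate. Both points are already covered in the preliminaries, so the proof reduces to the composition above.
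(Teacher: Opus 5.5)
Your proposal is correct and follows the paper's proof essentially step for step. You chain Lemma~\ref{lem:forward}, the $T$-free span bound of Lemma~\ref{lem:span} feeding $C_\mu^{\delta,T}$ into Lemmas~\ref{lem:hjb} and~\ref{lem:q-policy}, and the soft-min estimate \eqref{eq:policy-lip}; you then apply Banach's theorem on the complete space $(\Pi_T,d_{\eta,\Pi}^T)$, which the paper carries out in its proof of Theorem~\ref{thrm:main}.
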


\begin{proof}
Let \(\pmb \pi,\pmb \pi'\in\Pi_T\), and set
\(
    \pmb \mu:=\Gamma_M^T(\pmb \pi),\)
\(
    \pmb \nu:=\Gamma_M^T(\pmb \pi').
\)
The inequality \eqref{eq:Phi-lip} follows by tracking how a perturbation of the input policy
propagates through the three components of the MFE operator
\( \Phi_\delta^T=\Gamma_\Pi^T\circ\Gamma_Q^{\delta,T}\circ\Gamma_M^T.\)

The forward equation converts the policy perturbation into a perturbation of
the mean-field flow.  Since \(\eta>a_M\), Lemma~\ref{lem:forward} gives
\[
    d_{\eta,1}^T(\pmb \mu,\pmb \nu)
    \le
    \frac{B_\Lambda}{\eta-a_M}
   d_{\eta,\Pi}^T(\pmb \pi,\pmb \pi').
\]

Next, the span bound in Lemma~\ref{lem:span} gives a uniform control of
\(\pmb V^{\delta,T,\pmb \mu}\), which enters the mean-field sensitivity constant
\(
    C_\mu^{\delta,T}=L_c+L_\Lambda S_V^\delta.
\)
Using this constant, Lemma~\ref{lem:hjb} controls the value-function perturbation, and Lemma~\ref{lem:q-policy} gives the corresponding
\(Q\)-function estimate:
\[
    d_{\eta,Q}^T(\pmb Q^{\delta,T,\pmb \mu},\pmb Q^{\delta,T,\pmb \nu})
    \le
    C_\mu^{\delta,T}
    \left(
        1+\frac{B_\Lambda}{\delta-B_\Lambda-\eta}
    \right)
    d_{\eta,1}^T(\pmb \mu,\pmb \nu).
\]
Here, the condition \(\delta>B_\Lambda\) is absorbed into the stronger requirement \(a_M<\eta<\delta-B_\Lambda\).
Finally, the soft-min map converts the \(Q\)-function perturbation into a
policy perturbation.  By the Lipschitz estimate for \(\Gamma_\Pi^T\),
\begin{align*}
d_{\eta,\Pi}^T(\Phi_\delta^T(\pmb \pi),\Phi_\delta^T(\pmb \pi'))
\le
\frac1{\alpha}
d_{\eta,Q}^T(\pmb Q^{\delta,T,\pmb \mu},\pmb Q^{\delta,T,\pmb \nu})\le
\frac{B_\Lambda}{\eta-a_M}
\cdot
\frac1{\alpha}
\cdot
C_\mu^{\delta,T}
\left(
1+\frac{B_\Lambda}{\delta-B_\Lambda-\eta}
\right)
d_{\eta,\Pi}^T(\pmb \pi,\pmb \pi') .
\end{align*}
This proves \eqref{eq:Phi-lip} with contraction constant
\(\kappa_{\delta,\eta,\alpha}^T\).
\end{proof}

\begin{remark}\label{rem:eta}
The forward Kolmogorov estimate produces the weighted kernel
\(e^{-(\eta-a_M)(t-s)}, 0\le s\le t,\)
and the backward HJB estimate produces
\( e^{-(\delta-B_\Lambda-\eta)(s-t)}, t\le s\le T.\)
Thus uniform-in-\(T\) control requires
\(a_M<\eta<\delta-B_\Lambda.\) If this interval is empty, this Bielecki-metric estimate does not yield a horizon-independent contraction condition.
\end{remark}

\begin{lemma} \label{lem:contraction}
Suppose that \(\rho(\mathcal A_T^\delta)<1\), and let
\(r_T\in L^\infty([0,T])\) be a strictly positive eigenfunction, bounded away from \(0\) and satisfying
\(
    \mathcal A_T^\delta r_T
    =
    \rho(\mathcal A_T^\delta)r_T,
\)
which exists by Lemma \ref{lem:kr}. Define
\[
    d_{r_T,\Pi}(\pmb \pi,\pmb \pi')
    :=
    \esssup_{0\le t\le T}
    \frac{\norm{\pi_t-\pi'_t}_{\Pi}}{r_T(t)}.
\]
Then, \(\Phi_\delta^T\) is a contraction on
\((\Pi_T,d_{r_T,\Pi})\) with contraction constant
\(\rho(\mathcal A_T^\delta)\).
\end{lemma}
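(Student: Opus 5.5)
The argument combines the pointwise majorization of Lemma~\ref{lem:finite-horizon-majorization} with the positivity (monotonicity) of \(\mathcal A_T^\delta\) and the eigen-relation for \(r_T\).

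Fix \(\pmb \pi,\pmb \pi'\in\Pi_T\) and write \(p:=p_{\pmb \pi,\pmb \pi'}\) and \(D:=d_{r_T,\Pi}(\pmb \pi,\pmb \pi')\). This quantity is finite because \(p\le 2\) and \(r_T\) is bounded away from \(0\), say \(r_T\ge \epsilon>0\). By the definition of the essential supremum,
\[
p(t)\le D\,r_T(t)\quad\text{for a.e. } t\in[0,T].
\]
Next I would use positivity. Lemma~\ref{lem:kr} shows that \(\mathcal A_T^\delta\) maps the cone \(K\) of a.e.\ nonnegative functions into itself. Applying this to \(D r_T-p\in K\) and using linearity gives
\[
\mathcal A_T^\delta p\le D\,\mathcal A_T^\delta r_T=D\rho(\mathcal A_T^\delta)\,r_T\quad\text{a.e.}
\]
One point needs a remark here. \(\mathcal A_T^\delta\) acts on equivalence classes, since it is defined through integrals, so a.e.\ inequalities between inputs suffice.

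Combining this with \eqref{eq:ptwise-ff} yields, for a.e.\ \(t\),
\[
\|\Phi_\delta^T(\pmb \pi)_t-\Phi_\delta^T(\pmb \pi')_t\|_\Pi\le D\rho(\mathcal A_T^\delta)\,r_T(t).
\]
Dividing by \(r_T(t)>0\) and taking the essential supremum gives
\[
d_{r_T,\Pi}(\Phi_\delta^T(\pmb \pi),\Phi_\delta^T(\pmb \pi'))\le \rho(\mathcal A_T^\delta)\,d_{r_T,\Pi}(\pmb \pi,\pmb \pi'),
\]
which is the claimed contraction with constant \(\rho(\mathcal A_T^\delta)<1\).

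It remains to check that \(d_{r_T,\Pi}\) is a genuine complete metric on \(\Pi_T\). Since \(\epsilon\le r_T\le\|r_T\|_\infty\), it is strongly equivalent to \(d_{0,\Pi}^T\), and completeness then follows from the completeness of \((\Pi_T,d_{0,\Pi}^T)\) noted earlier.

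The main obstacle is the hypothesis itself, namely that the Krein--Rutman eigenfunction \(r_T\) is bounded away from zero. Lemma~\ref{lem:kr} asserts this, but if I needed to re-verify it, I would use the eigen-relation \(r_T=\rho^{-1}\mathcal A_T^\delta r_T\). This shows \(r_T\) is continuous. Near \(t=0\), positivity of the backward term \(B_\Lambda\int_t^T e^{-\gamma_\delta(s-t)}(M_Tr_T)(s)\,ds\) keeps \(r_T\) strictly positive, because \(M_Tr_T\not\equiv0\) on \((0,T]\) whenever \(r_T\not\equiv 0\). Continuity on the compact interval \([0,T]\) then gives a positive minimum. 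With \(r_T\) bounded away from zero, the remaining steps are routine.
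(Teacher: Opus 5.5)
Your proof is correct and follows the paper's argument essentially verbatim: bound $p_{\pmb\pi,\pmb\pi'}\le d_{r_T,\Pi}(\pmb\pi,\pmb\pi')\,r_T$ a.e., push this through the majorization \eqref{eq:ptwise-ff} using positivity and linearity of $\mathcal A_T^\delta$ together with the eigen-relation, divide by $r_T$, and take the essential supremum. Completeness comes from the equivalence of $d_{r_T,\Pi}$ with the $L^\infty$ metric, and your closing sketch of why $r_T$ is bounded away from zero goes beyond the paper, which simply takes this from Lemma~\ref{lem:kr}; to finish that sketch, note that at $t=T$ it is the forward term $(M_Tr_T)(T)>0$, not the backward integral, that gives positivity.
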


\begin{proof}
Since \(r_T\) is strictly positive and bounded away from zero, the metric
\(d_{r_T,\Pi}\) is equivalent to the usual \(L^\infty\)-metric on \(\Pi_T\).
Hence \((\Pi_T,d_{r_T,\Pi})\) is complete.

Let \(\pmb \pi,\pmb \pi'\in\Pi_T\) and set
\[
    p_{\pmb \pi,\pmb \pi'}(t):=\norm{\pi_t-\pi'_t}_{\Pi}.
\]
By the definition of \(d_{r_T,\Pi}\), we have
\(
    p_{\pmb \pi,\pmb \pi'}(t)
    \le
    d_{r_T,\Pi}(\pmb \pi,\pmb \pi')\,r_T(t)\)
for a.e. \(t\in[0,T].
\)
Using the point-wise majorization from
Lemma~\ref{lem:finite-horizon-majorization} and positivity of
\(\mathcal A_T^\delta\), we obtain
\[
\norm{\Phi_\delta^T(\pmb \pi)_t-\Phi_\delta^T(\pmb \pi')_t}_{\Pi}
\le
(\mathcal A_T^\delta p_{\pmb \pi,\pmb \pi'})(t)\le
d_{r_T,\Pi}(\pmb \pi,\pmb \pi')
(\mathcal A_T^\delta r_T)(t)=
\rho(\mathcal A_T^\delta)
d_{r_T,\Pi}(\pmb \pi,\pmb \pi')r_T(t).
\]
Dividing by \(r_T(t)\) and taking the essential supremum over
\(t\in[0,T]\), we get
\[
    d_{r_T,\Pi}(\Phi_\delta^T(\pmb \pi),\Phi_\delta^T(\pmb \pi'))
    \le
    \rho(\mathcal A_T^\delta)
    d_{r_T,\Pi}(\pmb \pi,\pmb \pi').
\]
Since \(\rho(\mathcal A_T^\delta)<1\), this proves that \(\Phi_\delta\) is a
contraction on \((\Pi_T,d_{r_T,\Pi})\).
\end{proof}

The next theorem makes this comparison precise. The large-horizon limit of \(\rho(\mathcal A_T^\delta)\) equals the infimum of the contraction factors over the admissible Bielecki weights.

\begin{theorem}\label{thrm:sauce2}
Let Assumption~\ref{ass:basic-reg} hold and $B_{\Lambda}$ be as in \eqref{eq:constants}. 
Suppose that \(\delta-B_\Lambda>a_M.\)
Let \(\mathcal A_T^\delta\) be the point-wise majorization operator defined on \((L^\infty([0,T]),\|\cdot\|_\infty)\) in \eqref{eq:suyosauce}.
By \(\rho(\mathcal A_T^\delta)\) denote the spectral radius of
\(\mathcal A_T^\delta\) on \((L^\infty([0,T]),\|\cdot\|_\infty)\).
Then, it holds that
\[
\lim_{T\to\infty}\rho(\mathcal A_T^\delta) = \inf_{a_M<\eta<\delta-B_\Lambda} \kappa_{\delta,\eta,\alpha}^T = c_{\delta,\infty}\left(\sqrt{\delta-a_M}-\sqrt{B_\Lambda}\right)^{-2}.
\]
\end{theorem}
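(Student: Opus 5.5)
The plan has three steps: evaluate the infimum of \(\kappa_{\delta,\eta,\alpha}^T\) over \(\eta\) in closed form, bound \(\limsup_T\rho(\mathcal A_T^\delta)\) from above by this infimum, and bound \(\liminf_T\rho(\mathcal A_T^\delta)\) from below by the same number.

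\emph{Step 1: closed form of the infimum.} Since \(C_\mu^{\delta,T}=L_c+L_\Lambda S_V^\delta\) does not depend on \(T\), we have \(c_{\delta,T}=c_{\delta,\infty}\). From \eqref{eq:kappa},
\[
\kappa_{\delta,\eta,\alpha}^T=\frac{c_{\delta,\infty}}{\eta-a_M}\cdot\frac{\delta-\eta}{\delta-B_\Lambda-\eta}.
\]
Put \(D:=\delta-a_M\) and \(w:=\delta-\eta\in(B_\Lambda,D)\). Then \(\kappa=c_{\delta,\infty}/f(w)\) with
\[
f(w)=\frac{(D-w)(w-B_\Lambda)}{w}=D+B_\Lambda-w-\frac{DB_\Lambda}{w}.
\]
The function \(f\) is concave and is maximized at \(w^*=\sqrt{DB_\Lambda}\). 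This point lies in \((B_\Lambda,D)\) because \(B_\Lambda<D\) by hypothesis. The maximum value is \((\sqrt D-\sqrt{B_\Lambda})^2\), which gives the second equality. Denote the optimal weight by \(\eta^*=\delta-\sqrt{DB_\Lambda}\).

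\emph{Step 2: upper bound.} For fixed \(T\) and any admissible \(\eta\), introduce the weighted norm \(\|f\|_\eta:=\esssup_t e^{-\eta t}|f(t)|\). It is equivalent to \(\|\cdot\|_\infty\) on \([0,T]\), so it leaves the spectral radius unchanged. Repeating the computations of Lemmas~\ref{lem:forward}, \ref{lem:hjb} and \ref{lem:q-policy} verbatim on scalar nonnegative \(f\) gives the bound
\[
\|\mathcal A_T^\delta f\|_\eta\le\kappa_{\delta,\eta,\alpha}^T\|f\|_\eta .
\]
Hence \(\rho(\mathcal A_T^\delta)\le\kappa^T_{\delta,\eta,\alpha}\) for every \(T\) and \(\eta\), and therefore \(\limsup_T\rho(\mathcal A_T^\delta)\le\inf_\eta\kappa\).

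It is also worth recording that \(T\mapsto\rho(\mathcal A_T^\delta)\) is nondecreasing. To see this, extend \(f\ge0\) on \([0,T]\) by zero to \([0,T']\) with \(T'>T\). The backward integral only grows, so \(\mathcal A_{T'}^\delta\) dominates \(\mathcal A_T^\delta\) on the restriction. The limit therefore exists.

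\emph{Step 3: lower bound (main obstacle).} Here I would use the Collatz--Wielandt inequality for positive operators: if \(0\le f\ne0\) and \(\mathcal A_T^\delta f\ge\lambda f\), then \(\rho(\mathcal A_T^\delta)\ge\lambda\). This follows from iterating and applying Gelfand's formula, exactly as in the proof of Lemma~\ref{lem:kr}.

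The natural test function is the exponential \(e^{\eta^* t}\). A direct computation gives
\[
\mathcal A_T^\delta e^{\eta t}=c_{\delta,\infty}e^{\eta t}\left[\frac{1-e^{-(\eta-a_M)t}}{\eta-a_M}+B_\Lambda\frac{(1-e^{-(\gamma_\delta-\eta)(T-t)})}{(\eta-a_M)(\gamma_\delta-\eta)}+\text{(small)}\right].
\]
Away from the endpoints this equals \(\kappa(\eta)e^{\eta t}\) up to exponentially small errors, but it fails near \(t=0\) and \(t=T\). To handle the boundaries I would try the modulated function
\[
f_T(t)=e^{\eta^* t}\sin\!\bigl(\pi(t-\tau)/L\bigr)\ \text{on }[\tau,\tau+L],\qquad f_T=0 \text{ elsewhere},
\]
with \(1\ll\tau\) and \(L\le T-2\tau\).

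The key observation is that \(\eta^*\) is a critical point of the symbol \(\eta\mapsto\kappa(\eta)\). A slowly varying envelope therefore perturbs the ratio \(\mathcal A f/f\) only at second order, that is by \(O(L^{-2})\), rather than at first order. The zeros of \(\sin\) at the two ends of the support require separate care, because near them the ratio \(\mathcal A f/f\) is actually large: \(\mathcal A f\) is positive there while \(f\to0\), so those points help rather than hurt.

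Making the \(O(L^{-2})\) estimate rigorous uniformly on the support is the delicate part. If it becomes messy, I would fall back on exhibiting an exact exponential-type eigenfunction \(e^{\eta t}\)-combination on \([0,T]\), which is possible because \(\mathcal A_T^\delta\) is the inverse of a constant-coefficient ODE operator, and then reading off that its eigenvalue converges to \(\kappa(\eta^*)\).

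Letting \(T\to\infty\), then \(L\to\infty\), gives \(\liminf_T\rho(\mathcal A_T^\delta)\ge\kappa(\eta^*)\). Combined with Step 2, this proves the theorem.
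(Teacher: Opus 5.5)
Your proposal is correct in outline, and it takes a genuinely different route from the paper.

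\textbf{How the paper argues.} The paper computes the spectrum exactly. Setting $p=\theta_T(m+B_\Lambda n)$, it turns $\mathcal A_T^\delta p=\lambda_T p$ into the constant-coefficient system $(m,n)'=M_\theta(m,n)$ with $m(0)=0$ and $n(T)=0$. The solvability condition $(e^{M_\theta T})_{22}=0$ becomes a hyperbolic/trigonometric root equation. The paper shows there is no root below $\theta_-=(\sqrt{\delta-a_M}-\sqrt{B_\Lambda})^2$, and that the smallest root satisfies $(\theta_T-\theta_-)(\theta_+-\theta_T)\le 4\pi^2T^{-2}$. It then minimizes $\kappa^T_{\delta,\eta,\alpha}$ over $\eta$ by a separate calculus computation; your Step~1 is that same computation in the variable $w=\delta-\eta$.

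\textbf{How you argue instead.} You sandwich $\rho(\mathcal A_T^\delta)$ from both sides.
\begin{itemize}
\item Upper bound: the Bielecki estimate is an operator-norm bound in a norm equivalent to $\|\cdot\|_\infty$. So $\rho(\mathcal A_T^\delta)\le\kappa^T_{\delta,\eta,\alpha}$ holds for every $T$ and every admissible $\eta$, with no root analysis.
\item Existence of the limit: monotonicity in $T$ follows from Collatz--Wielandt applied to the zero-extended Krein--Rutman eigenfunction. This gives the limit independently of any root location.
\item Lower bound: approximate eigenfunctions $e^{\eta^*t}\times$(slow envelope) use only positivity, two-sided exponential decay of the kernel, and $\kappa'(\eta^*)=0$. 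This explains why the optimal Bielecki weight is sharp, and it would carry over to majorants with no finite-dimensional ODE realization.
\end{itemize}

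\textbf{What each route buys.} The paper's route gives an exact characterization of $\rho(\mathcal A_T^\delta)$ at every finite $T$ (Corollary~\ref{cor:3.13}, which produces Table~\ref{tab:spectral-bielecki}) and an explicit $O(T^{-2})$ rate. Your fallback of an exact exponential-type eigenfunction is precisely the paper's proof.

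\textbf{What Step~3 still needs.} Step~3 is still a sketch, and the endpoint remark must be made quantitative. Positivity of $\mathcal A_T^\delta f$ at the zeros of $f$ says nothing by itself about the ratio in the layer where $f$ is small but nonzero. One way to close it:
\begin{enumerate}
\item Reduce to a convolution. Since $f_T$ is supported in $[\tau,\tau+L]\subset(0,T)$, $e^{-\eta^*t}(\mathcal A_T^\delta f_T)(t)$ equals $(k*g)(t)$ minus a truncation loss of at most $Ce^{-(\gamma_\delta-a_M)(T-\tau-L)}$. Here $g(t)=\sin(\omega(t-\tau))\mathbf 1_{[\tau,\tau+L]}(t)$ with $\omega=\pi/L$. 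The kernel is $k(u)=\frac{c_{\delta,T}(\gamma_\delta-a_M+B_\Lambda)}{\gamma_\delta-a_M}e^{-(\eta^*-a_M)u}$ for $u>0$ and $k(u)=\frac{c_{\delta,T}B_\Lambda}{\gamma_\delta-a_M}e^{(\gamma_\delta-\eta^*)u}$ for $u<0$. Its transform is $\int k(u)e^{-i\omega u}\,du=\kappa(\eta^*+i\omega)$.
\item Interior estimate. On $[\tau-L,\tau+2L]$, $g$ dominates the full sine. Hence on the support, $k*g\ge \operatorname{Re}\kappa(\eta^*+i\omega)\,g-|\operatorname{Im}\kappa(\eta^*+i\omega)|-Ce^{-c'L}$. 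Because $\kappa'(\eta^*)=0$, you have $\operatorname{Re}\kappa(\eta^*+i\omega)=\kappa(\eta^*)-O(\omega^2)$ and $\operatorname{Im}\kappa(\eta^*+i\omega)=O(\omega^3)$.
\item Endpoint layers. The $O(\omega^3)$ term has the wrong sign at one endpoint, so it cannot be used there. Within distance $\epsilon_0$ of $\tau$, restrict the convolution to $u\in[-2,-1]$; near $\tau+L$, restrict it to $u\in[1,2]$. This gives $k*g(t)\ge\frac{2\omega}{\pi}\min\{\int_{-2}^{-1}k,\int_1^2k\}$. This uses that $k$ charges both half-lines, and the $u<0$ part comes from the backward term, so $B_\Lambda>0$ matters. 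For small $\epsilon_0$ this exceeds $\kappa(\eta^*)g(t)\le\kappa(\eta^*)\omega\epsilon_0$.
\item Combine. Outside the layers $g\gtrsim\omega\epsilon_0$, so the $O(\omega^3)$ error is $O(\omega^2)g$. Choosing $\tau$ large enough that the truncation loss is $\ll 1/L$ then yields $\mathcal A_T^\delta f_T\ge(\kappa(\eta^*)-CL^{-2})f_T$. With your monotonicity and Collatz--Wielandt steps, the lower bound closes.
\end{enumerate}
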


\begin{proof}
For readability, let
\(a:=a_M,\) \(B:=B_\Lambda,\) \(\gamma:=\delta-B_\Lambda.\)
By the running assumption, we have \(B>0\) and \(\gamma>a\).  Let \(\lambda_T>0\) be the eigenvalue
of \(\mathcal A_T^\delta\) with a nonnegative nonzero eigenfunction \(p\), that is
\( \mathcal A_T^\delta p = \lambda_T p,\) which exists by Lemma \ref{lem:kr}. Notice that \(\lim_{T \to \infty} \lambda_T=\lambda_\infty\) exists and finite.
To save notation, introduce $m :=  m_p$ and $n := n_p$, where \(m_p\) and \(n_p\) are as defined in \eqref{eq:mult-op1} and \eqref{eq:mult-op2}, respectively.
Then
\[
(\mathcal A_T^\delta p)(t) = c_{\delta,T}\bigl(m(t)+Bn(t)\bigr).
\]

Set \( \theta_T:=\frac{c_{\delta,T}}{\lambda_T}.\)
The eigenvalue equation associated with the operator \(\mathcal A_T^\delta\) is therefore
\begin{equation}\label{eq:eig}
        p(t)=\theta_T\bigl(m(t)+Bn(t)\bigr).
\end{equation}
Moreover,
\(
m'(t)=am(t)+p(t),\) \(m(0)=0, \) and 
\(
n'(t)=-m(t)+\gamma n(t),\) \(n(T)=0.
\)
Substituting \(p \equiv \theta_T(m+Bn)\), we obtain the following linear time-invariant dynamical system
\begin{equation}\label{eq:ODE}
        \frac{d}{dt}
        \begin{pmatrix}
            m(t)\\ n(t)
        \end{pmatrix}
        =
        M_{\theta_T}
        \begin{pmatrix}
            m(t)\\ n(t)
        \end{pmatrix},
        \qquad
        M_{\theta_T}
        :=
        \begin{pmatrix}
            a+\theta_T & \theta_T B\\
            -1 & \gamma
        \end{pmatrix},
\end{equation}
with the following Dirichlet boundary conditions (BCs):
\begin{equation}\label{eq:ev-end}
        m(0)=0,
        \qquad
        n(T)=0.
\end{equation}
Note that under the BCs \eqref{eq:ev-end}, the equation \eqref{eq:ODE} has the following closed-form solution:
\begin{equation}\label{eq:ev-eq}
\begin{pmatrix}
    m(T)\\
    n(T)
\end{pmatrix} = e^{M_{\theta_T}T} \begin{pmatrix}
    m(0)\\
    n(0)
\end{pmatrix} = n(0)\begin{pmatrix}
    \bigl(e^{M_{\theta_T} T}\bigr)_{12}\\
    \bigl(e^{M_{\theta_T} T}\bigr)_{22}
\end{pmatrix};
\end{equation}
see \cite[Section 3.2]{basar2020lecture}.
Thus, the BC $n(T) = 0 = n(0)\bigl(e^{M_{\theta_T} T}\bigr)_{22}$ implies that a nontrivial solution to \eqref{eq:ev-eq} exists if, and only if,
\begin{equation}\label{eq:ev-bc}
        \bigl(e^{M_{\theta_T} T}\bigr)_{22}=0.
\end{equation}

Let
\(
        d(\theta):=a+\theta-\gamma,
\)
and
\(
        \Delta(\theta):=
        \sqrt{d(\theta)^2-4B\theta}.
\)
Set \(\tau:=\operatorname{tr}(M_\theta)=a+\theta+\gamma\) and
\(N_\theta:=M_\theta-\frac{\tau}{2}I\).  Then, with this notation, notice that 
\begin{equation}\label{eq:ev-f}
e^{M_\theta T} = e^{\frac{1}{2}(a+\theta+\gamma)T}e^{N_\theta T},
\end{equation}
and we can write
\begin{equation}\label{eq:n}
N_\theta =
\begin{pmatrix}
\frac{d(\theta)}{2} & \theta B\\
-1 & -\frac{d(\theta)}{2}
\end{pmatrix}.
\end{equation}

Let $I$ be the $2\times 2$ dimensional identity matrix. By \(N^m_\theta\), we denote the \(m^{\mathrm{th}}\) power of the matrix \(N_\theta\). Note that \eqref{eq:n} implies
\[
N_\theta^2
=
\frac{d(\theta)^2-4B\theta}{4}I
=
\frac{\Delta(\theta)^2}{4}I \implies 
N_\theta^{2k}
=
\left(\frac{\Delta(\theta)^2}{4}\right)^k I, \text{ and }
N_\theta^{2k+1}
=
\left(\frac{\Delta(\theta)^2}{4}\right)^k N_\theta.
\]
In particular, using the definitions of hyperbolic $\mathrm{sine}$ and $\mathrm{cosine}$, it holds that
\begin{align*}
e^{N_\theta T} &= \sum_{j=0}^{\infty} \frac{(N_\theta T)^j}{j!} = \sum_{k=0}^{\infty} \frac{N_\theta^{2k}T^{2k}}{(2k)!} + \sum_{k=0}^{\infty} \frac{N_\theta^{2k+1}T^{2k+1}}{(2k+1)!}
\\& =
\cosh\left(\frac{\Delta(\theta)T}{2}\right)I +
\frac{2}{\Delta(\theta)}
\sinh\left(\frac{\Delta(\theta)T}{2}\right)N_\theta 
\end{align*}
Consequently, \eqref{eq:ev-f} can be represented as
\[
e^{M_\theta T}
=
e^{\frac{\tau T}{2}}
\left[
\cosh\left(\frac{\Delta(\theta)T}{2}\right)I
+
\frac{2}{\Delta(\theta)}
\sinh\left(\frac{\Delta(\theta)T}{2}\right)N_\theta
\right].
\]
Taking the \((2,2)\)-entry of $e^{M_{\theta} T}$ and using
\( (N_\theta)_{22}=-\frac{d(\theta)}{2}\), we obtain
\[
\bigl(e^{M_\theta T}\bigr)_{22}
=
e^{\frac12(a+\theta+\gamma)T}
\left[
\cosh\left(\frac{\Delta(\theta)T}{2}\right)
-
\frac{d(\theta)}{\Delta(\theta)}
\sinh\left(\frac{\Delta(\theta)T}{2}\right)
\right].
\]
Thus, the nondegeneracy condition \eqref{eq:ev-bc} for this eigenvalue problem is equivalent to the following hyperbolic root problem
\begin{equation}\label{eq:1-ev}
        \cosh\left(\frac{\Delta(\theta)T}{2}\right)
        -
        \frac{a+\theta-\gamma}{\Delta(\theta)}
        \sinh\left(\frac{\Delta(\theta)T}{2}\right)
        =
        0 .
\end{equation}
The spectral radius corresponds to the eigenvalue obtained under the smallest positive root
\(\theta_T\) of \eqref{eq:1-ev}, and hence
\begin{equation}\label{eq:2-ev}
\rho(\mathcal A_T^\delta) = c_{\delta,T}\theta_T^{-1}.
\end{equation}

As a consequence of \eqref{eq:2-ev}, to obtain the asymptote of \(\rho(\mathcal A_T^\delta)\), it suffices to study the asymptotes of the sequence \((\theta_T)_{T \in \mathbb N}\).  Let
\( \beta:=\gamma-a>0.\)
Then, it follows that
\( \Delta(\theta)^2 = (a+\theta-\gamma)^2-4B\theta = (\theta-\beta)^2-4B\theta\), or equivalently,
\( \Delta(\theta)^2 = (\theta-\theta_-)(\theta-\theta_+),\)
where
\[
        \theta_\pm
        =
        \beta+2B\pm 2\sqrt{B(\beta+B)}.
\]
In particular,
\(
        \theta_-
        =
        \beta+2B-2\sqrt{B(\beta+B)}
        =
        \bigl(\sqrt{\beta+B}-\sqrt B\bigr)^2 .
\)
Since
\(
\beta+B = \gamma-a+B = \delta-a,
\)
we have
\begin{equation}\label{eq:3-ev}
\theta_-  = \bigl(\sqrt{\delta-a}-\sqrt B\bigr)^2 .
\end{equation}

If \(0<\theta<\theta_-\), then \(\Delta(\theta)>0\) and
\(
a+\theta-\gamma = \theta-\beta < 0.
\)
Therefore
\[
\cosh\left(\frac{\Delta(\theta)T}{2}\right) -
\frac{\theta-\beta}{\Delta(\theta)}
 \sinh\left(\frac{\Delta(\theta)T}{2}\right)
 > 0,
\]
and thus \eqref{eq:1-ev} has no positive root below \(\theta_-\).  It follows now that \( \theta_T\geq \theta_- .\)

For \(\theta\in(\theta_-,\theta_+)\), write
\(\Delta(\theta)=i\omega(\theta),\) and \(\omega(\theta) :=\sqrt{-(\theta-\theta_-)(\theta-\theta_+)}.\)
Note that, \eqref{eq:1-ev} is equivalent to the trigonometric equation
\begin{equation}\label{eq:5-ev}
\cos\left(\frac{\omega(\theta)T}{2}\right)
 -
 \frac{\theta-\beta}{\omega(\theta)}\sin\left(\frac{\omega(\theta)T}{2}\right)=0 \iff \theta-\beta
        =
        \omega(\theta)
        \cot\left(\frac{\omega(\theta)T}{2}\right)
\end{equation}
when \(\theta\in(\theta_-,\theta_+)\).
The smallest positive root that satisfies either equality in \eqref{eq:5-ev} above for a given \(T\), \(\theta_T\), lies on the first negative
branch of the cotangent. Thus, for all \(T \in \mathbb N\) we must have
\[
\frac{\omega(\theta_T)T}{2} \in \left(\frac{\pi}{2},\pi\right) \iff \omega(\theta_T)\leq 2\pi T^{-1},
\]
which also implies that \(\cot\left(\frac{\omega(\theta_T)T}{2}\right) < 0\). In particular, for all \(T \in \mathbb N\) we have \(\theta_T < \beta\). Note that, since \(\theta_- < \beta < \theta_+\), we cannot have \(\lim_{T \to \infty} \theta_T = \theta_+\).
Using the definition of \(\omega(\theta_T)\), we now obtain
\begin{equation}\label{eq:6-ev}
\omega(\theta)^2 = (\theta-\theta_-)(\theta_+-\theta) \implies  0\le (\theta_T-\theta_-)(\theta_+-\theta_T) \leq 4\pi^2T^{-2}.
\end{equation}
Thus, the following limit holds
\begin{equation}\label{eq:7-ev}
        \lim_{T \to \infty} \theta_T = \theta_-.
\end{equation}
Combining \eqref{eq:2-ev}, \eqref{eq:3-ev}, and \eqref{eq:7-ev}, and using
\(c_T\to c_{\delta,\infty}\), yields
\begin{equation}\label{eq:8-ev}
\lim_{T\to\infty} \rho(\mathcal A_T^\delta) = c_{\delta,\infty}\theta_-^{-1} =
        c_{\delta,\infty}\left(\sqrt{\delta-a}-\sqrt B\right)^{-2} = (\ast).
\end{equation}

It remains to identify $(\ast)$ as a contraction condition obtained from the Bielecki metric.  Let
\( x:=\eta-a,\) and \( \beta:=\gamma-a.\)
Then, \(a<\eta<\gamma\) if, and only if, \(0<x<\beta .\) With this notation, we must have \(\kappa_{\delta,\eta,\alpha}^T = c_{\delta,\infty} f(x)\), where
\[
f(x):= \frac{1}{x} \left(1+\frac{B}{\beta-x}\right).
\]
Now, allowing $x$ to vary, a direct computation gives
\[
        f'(x)
        =
        -\frac{1}{x^2}
        -
        \frac{B(\beta-2x)}{x^2(\beta-x)^2}.
\]
Thus, the unique critical point of \(f\) in the interval \((0,\beta)\) is
\( x_* = \beta+B-\sqrt{B(\beta+B)},\)
which leads to
\(
\inf_{0<x<\beta}f(x) = \left(\sqrt{\beta+B}-\sqrt B\right)^{-2}.
\)
Since \(\beta+B=\delta-a\), by \eqref{eq:8-ev}, we obtain
\[
\inf_{a<\eta<\gamma} \kappa_{\delta,\eta,\alpha}^T =
 c_{\delta,\infty}\left(\sqrt{\delta-a}-\sqrt B\right)^{-2} \implies         \lim_{T\to\infty}\rho(\mathcal A_T^\delta)
        =
        \inf_{a_M<\eta<\delta-B_\Lambda}
        \kappa_{\delta,\eta,\alpha}^T,
\]
as desired.
\end{proof}

The proof of Theorem \ref{thrm:sauce2} provides a quantitative method to calculate a \(T\)-dependent contraction condition. To emphasize this, we rewrite this consequence as a separate result.

\begin{corollary}\label{cor:3.13}
    Let \(\Delta(\theta)^2=(a_M+\theta-\gamma_\delta)^2-4B_\Lambda\theta \). By \(\theta_T\), denote the smallest positive root of
    \[
\cos\!\left(\frac{\Delta(\theta)T}{2}\right)
-
\frac{a_M+\theta-\gamma_\delta}{\Delta(\theta)}
\sin\!\left(\frac{\Delta(\theta)T}{2}\right)=0.
\]
Then, \(\rho(A^{\delta}_T)=c_T\theta_T^{-1}.\)
\end{corollary}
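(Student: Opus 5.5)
The plan is to re-use the reduction from the proof of Theorem~\ref{thrm:sauce2}, keeping $T$ fixed instead of passing to the limit. By Lemma~\ref{lem:kr}, $\rho(\mathcal A_T^\delta)>0$ is an eigenvalue of $\mathcal A_T^\delta$ with a nonnegative, nonzero eigenfunction $p$. Writing $\theta=c_{\delta,T}/\lambda$, any eigenpair $(\lambda,p)$ with $\lambda>0$ satisfies
\[
p=\theta(m_p+B_\Lambda n_p),
\]
which is equivalent to the linear two-point boundary value problem \eqref{eq:ODE}--\eqref{eq:ev-end} for $(m,n)=(m_p,n_p)$. Conversely, a nontrivial solution $(m,n)$ of that boundary value problem gives back $p:=\theta(m+B_\Lambda n)$. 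For this $p$, the variation-of-constants formula shows that $m=M_Tp$ and $n=n_p$, so $p$ is an eigenfunction with eigenvalue $c_{\delta,T}/\theta$. Hence the positive eigenvalues of $\mathcal A_T^\delta$ are in bijection with the values $c_{\delta,T}/\theta$ for which $(e^{M_\theta T})_{22}=0$. The spectral radius is the largest positive eigenvalue, so $\rho(\mathcal A_T^\delta)=c_{\delta,T}\theta_T^{-1}$, where $\theta_T$ is the smallest admissible positive root. (The compactness from Lemma~\ref{lem:kr} means the nonzero spectrum consists only of eigenvalues, so nothing is lost.)

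Next I would reuse the explicit formula for $(e^{M_\theta T})_{22}$ obtained through $N_\theta^2=\tfrac{\Delta(\theta)^2}{4}I$. Dividing by the positive factor $e^{\tau T/2}$ gives exactly \eqref{eq:1-ev}. That expression is an even function of $\Delta$, hence analytic in $\Delta^2$, so it does not matter which branch of the square root is taken. Replacing $\cosh$ by $\cos$ and $\sinh$ by $\sin$ with the same argument $\Delta(\theta)$ gives the displayed equation. This is consistent because $\cosh(ix)=\cos x$ and $\sinh(ix)/(ix)=\sin x/x$, so the two forms agree with $\Delta$ replaced by $i\Delta$. In other words, the corollary's formula is the same entire function of $\Delta^2$ up to relabelling $\Delta^2\mapsto-\Delta^2$. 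I would have to state this convention explicitly, reading $\Delta$ in the corollary as $\omega(\theta)$ on the oscillatory range $(\theta_-,\theta_+)$, as in \eqref{eq:5-ev}.

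The main obstacle is to justify that the \emph{smallest} positive root gives the principal eigenvalue, and that this root lies where the trigonometric form is the meaningful one. For this I would use the sign argument from the theorem's proof. For $0<\theta\le\theta_-$ the hyperbolic expression is strictly positive, with the endpoint case handled by continuity: the value there is $1-(\theta-\beta)T/2>0$. Therefore every root satisfies $\theta>\theta_-$. In that region the trigonometric form \eqref{eq:5-ev} is valid, and its first zero occurs on the first negative branch of the cotangent. By the monotonicity of $\theta\mapsto c_{\delta,T}/\theta$, this first zero gives the largest eigenvalue, which establishes $\rho(\mathcal A_T^\delta)=c_T\theta_T^{-1}$.
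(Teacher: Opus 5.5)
Your first paragraph follows the paper's route: the paper only says the corollary "follows from the proof of Theorem~\ref{thrm:sauce2}". Your version is more careful than that. The converse direction of the eigenvalue/boundary-value correspondence, together with Krein--Rutman and the monotonicity of $\theta\mapsto c_{\delta,T}/\theta$, gives $\rho(\mathcal A_T^\delta)=c_{\delta,T}/\theta_T$ for every $T$, where $\theta_T$ is the smallest positive zero of $(e^{M_\theta T})_{22}$. You are also right that the displayed equation, read literally with the stated $\Delta(\theta)^2$, is the image of \eqref{eq:1-ev} under $\Delta^2\mapsto-\Delta^2$ and is a different equation. Correcting this is necessary: read literally, for $T=5$ and the parameters of Table~\ref{tab:spectral-bielecki} it already has a zero in $(0,\theta_-)$, which would force $\rho>0.712$ instead of $0.391$.

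The gap is in your last paragraph and in the convention it is meant to justify. It is false that every zero above $\theta_-$ lies where the trigonometric form with real $\omega(\theta)$ is valid. It is also false that the first zero sits on the first negative branch of the cotangent.

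The set $\{\theta>\theta_-\}$ contains $[\theta_+,\infty)$, where $\Delta(\theta)^2\ge 0$ again. At $\theta_+$ the left side of \eqref{eq:1-ev} equals $1-(\theta_+-\beta)T/2$, which is nonnegative for $T\le 2/(\theta_+-\beta)$. For small $T$ the smallest zero really does lie beyond $\theta_+$. Take $(a_M,B_\Lambda,\delta)=(0.2,0.3,1.5)$, so $\theta_+\approx 2.85$. For $T=1/2$ the function is positive on $(0,\theta_+]$, and $\theta_T\approx 7.3$. For $T=2$ the smallest zero lies in $(\beta,\theta_+)$, on the positive branch; in general $\theta_T<\beta$ holds if and only if $T>\pi/(2\sqrt{B_\Lambda\beta})$.

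So if you read $\Delta$ as $\omega(\theta)$ only on $(\theta_-,\theta_+)$, your argument, and the corollary itself, fail for small $T$. The fix is to adopt your relabelling globally. Set $\omega(\theta)^2:=4B_\Lambda\theta-(a_M+\theta-\gamma_\delta)^2$, and read $\cos(\omega T/2)$ and $\sin(\omega T/2)/\omega$ as entire functions of $\omega^2$; for $\theta\ge\theta_+$ they then reduce to the $\cosh$/$\sinh$ form. With that convention your first paragraph already proves the statement for every $T$, and the localisation step can be dropped. That localisation is borrowed from the paper's proof of Theorem~\ref{thrm:sauce2}, where it is asserted "for all $T\in\mathbb N$". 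It is harmless there because only $T\to\infty$ matters, but it cannot be used in a fixed-$T$ statement.
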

\begin{proof}
    Note that existence of a positive root of this equation follows from Lemma \ref{lem:kr}. This result directly follows from the proof of Theorem \ref{thrm:sauce2}. 
\end{proof}

We now prove the main theorem in the beginning of this section by combining the contraction analysis and then conclude the spectral comparison above. 

\begin{proof}[Proof of Theorem~\ref{thrm:main}]
Fix \(T<\infty\). By the well-posedness of
\(\Gamma_M^T\), \(\Gamma_Q^{\delta,T}\), and \(\Gamma_\Pi^T\), the composition \( \Phi_\delta^T
    =
    \Gamma_\Pi^T\circ\Gamma_Q^{\delta,T}\circ\Gamma_M^T\)
is a self-map on \(\Pi_T\). By Theorem~\ref{thm:contraction}, for all
\(\pmb \pi,\pmb \pi'\in\Pi_T\),
\(
  d_{\eta,\Pi}^T(\Phi_\delta^T(\pmb \pi),\Phi_\delta^T(\pmb \pi'))
  \le
  \kappa_{\delta,\eta,\alpha}^T
 d_{\eta,\Pi}^T(\pmb \pi,\pmb \pi').
\)
Thus, if \(\kappa_{\delta,\eta,\alpha}^T<1\), \(\Phi_\delta^T\) is a
contraction on the complete metric space \((\Pi_T,d_{\eta,\Pi}^T)\). Banach's
fixed-point theorem leads to a unique fixed point
\(\pmb \pi^{\delta,T}\in\Pi_T\), and the iteration
\(
    \pmb \pi^{k+1}=\Phi_\delta^T(\pmb \pi^k)
\)
converges to \(\pmb \pi^{\delta,T}\) starting from \(\pmb \pi^0\in\Pi_T\). Setting
\(
    \pmb \mu^{\delta,T}:=\Gamma_M^T(\pmb \pi^{\delta,T}), 
\)
 \((\pmb \pi^{\delta,T},\pmb \mu^{\delta,T})\) is the unique finite-horizon \((\delta,\alpha)\)-regularized MFE to \(\mathrm{MFG}_{\delta,T,\alpha}\).
\end{proof}

Thus, the spectral metric is adapted to each fixed horizon, whereas
Bielecki metrics use an exponential weight that is independent of \(T\).
Under the assumptions above, the best contraction factor attainable within the Bielecki family coincides with the large-horizon limit of
\(\rho(\mathcal A_T^\delta)\). Hence, the horizon-independent Bielecki construction is asymptotically sharp relative to the finite-horizon spectral criterion.
\subsection{Numerical verification of the spectral-radius asymptote}

We illustrate Theorem~\ref{thrm:sauce2} with the numerical values 
\[
(a_M,B_\Lambda,\delta,c_\delta)=(0.2,0.3,1.5,0.25),
\qquad
c_\delta:=B_\Lambda C_\mu^{\delta,T}/\alpha .
\]
Then \(\gamma_\delta:=\delta-B_\Lambda=1.2>a_M\), and thus the admissible range
\(a_M<\eta<\delta-B_\Lambda\) is nonempty. The optimized Bielecki rate is
\[
\inf_{a_M<\eta<\delta-B_\Lambda}\kappa_{\delta,\eta,\alpha}^T
=
c_\delta\bigl(\sqrt{\delta-a_M}-\sqrt{B_\Lambda}\bigr)^{-2}
\approx 0.71225 .
\]

With the notation used in the proof of Theorem \ref{thrm:sauce2}, for each \(T\), we compute, and display in Table \ref{tab:spectral-bielecki}, \(\rho(\mathcal A_T^\delta)=c_\delta/\theta_T\),
where \(\theta_T\) is the smallest positive root of
\[
\cos\!\left(\frac{\Delta(\theta)T}{2}\right)
-
\frac{a_M+\theta-\gamma_\delta}{\Delta(\theta)}
\sin\!\left(\frac{\Delta(\theta)T}{2}\right)=0,
\qquad
\Delta(\theta)^2=(a_M+\theta-\gamma_\delta)^2-4B_\Lambda\theta .
\]

\begin{table}[h]
\centering
\small
\begin{tabular}{c|ccccc}
\hline
\(T\) & 5 & 10 & 20 & 50 & 100 \\
\hline
\(\rho(\mathcal A_T^\delta)\) & 0.39107 & 0.55720 & 0.65595 & 0.70102 & 0.70924 \\
\hline
\end{tabular}
\caption{Finite-horizon spectral rates approaching the optimized Bielecki rate \(0.71225\).}
\label{tab:spectral-bielecki}
\end{table}

\section{Infinite-Horizon Contraction and Finite-to-Infinite Error Bounds}
\label{sect:infinite-error}
In this section, we study the contraction properties of infinite-horizon non-stationary \(\delta\)-discounted \(\alpha\)-regularized  MFGs and their relation to the finite-horizon problem. In the discrete-time setting, the majorizing dynamics are represented by matrices, and constructing an appropriate weighted space that captures the desired spectral radius in the infinite-horizon case requires additional restrictions compared to the finite-horizon case \cite{aydin2025approximation}. This gap arises because the majorizing dynamics are no longer compact operators in the infinite-horizon setting. We show that this phenomenon does not occur in continuous time and that the spectral radius that characterizes the contraction in the infinite-horizon case coincides with that of the asymptotics of the spectral radii obtained in the finite-horizon case. We also provide a rate of convergence result between finite- and infinite-horizon \((\delta,\alpha)\)-regularized MFE.

\subsection{Contraction of the infinite-horizon non-stationary MFE operator}
We first derive a contraction condition for infinite-horizon non-stationary model \(\mathrm{MFG}_{\delta,\infty,\alpha}\) introduced in Section~\ref{subsec:infinite-ns}. The
argument largely follows the finite-horizon analysis, since Assumption~\ref{ass:basic-reg} provides pointwise-in-time estimates for the
forward mean-field function, the Hamiltonian, the \(Q\)-function, and the soft-min map, with constants independent of the horizon. The main difference is that the relevant intervals are now taken over \([0,\infty)\), where discounting ensures their finiteness.

The main result of this subsection is the following contraction condition:
\begin{theorem}\label{thm:infty-contraction}
Suppose that Assumption~\ref{ass:basic-reg} holds. Let $a_M$ be as in Lemma \ref{lem:F-lip} and $B_{\Lambda}$ as in \eqref{eq:constants}. Assume that there exists $\eta$ such that
\(
    a_M<\eta<\delta-B_\Lambda
\)
and
\[
  \kappa_{\delta,\eta,\alpha}^{\infty}
  :=
  \frac{B_\Lambda}{\eta-a_M}
  \cdot
  \frac1{\alpha}
  \cdot
  \left(
      L_c
      +
      L_\Lambda
      \frac{2M_c}{\delta-B_\Lambda}
  \right)
  \left(
      1+\frac{B_\Lambda}{\delta-B_\Lambda-\eta}
  \right)
  <1.
\]
Then, the infinite-horizon MFE operator
\(
    \Phi_\delta^\infty:
    (\Pi_\infty,d_{\eta,\Pi}^{\infty})
    \to
    (\Pi_\infty,d_{\eta,\Pi}^{\infty})
\)
corresponding to \(\mathrm{MFG}_{\delta,\infty,\alpha}\)
is a contraction with contraction factor 
\(\kappa_{\delta,\eta,\alpha}^{\infty}\).  Consequently, there exists a unique 
infinite-horizon non-stationary \((\delta,\alpha)\)-regularized MFE \((\pmb \pi^{\delta,\infty},\pmb \mu^{\delta,\infty})\).  Moreover, for every \(\pmb \pi^0\in\Pi_\infty\), the iteration
\(
    \pmb \pi^{k+1}=\Phi_\delta^\infty(\pmb \pi^k)
\)
converges to the unique infinite-horizon non-stationary \((\delta,\alpha)\)-regularized MFE under
\(d_{\eta,\Pi}^{\infty}\), with
\(\pmb \mu^{\delta,\infty}=\Gamma_M^{\infty}(\pmb \pi^{\delta,\infty})\).
\end{theorem}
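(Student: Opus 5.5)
The plan is to repeat the finite-horizon argument of Theorem~\ref{thm:contraction} on $[0,\infty)$. The weights $e^{-\eta t}$ with $a_M<\eta<\delta-B_\Lambda$ make every convolution integral converge, and $g\equiv 0$ removes the $\spn(g)$ term from the span bound. First I will check that $\Phi_\delta^\infty$ is a well-defined self-map of $\Pi_\infty$.

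For the forward part, Carathéodory's theorem on each $[0,T]$ gives a global absolutely continuous solution of \eqref{eq:forward} that stays in the simplex. For the backward part, I will construct the bounded solution of \eqref{eq:HJB-infty-mild} as the fixed point of
\[
V\mapsto \Bigl(\int_t^\infty e^{-\delta(s-t)}H_s^{\pmb\mu}(x,V_s)\,\dd s\Bigr)_{t\ge0}
\]
on $L^\infty([0,\infty);\R^\X)$. Bounding $|H_s^{\pmb\mu}(x,v)|\le M_c+\alpha\log|\A|+B_\Lambda\norm{v}_\infty$ and using that $H$ is $B_\Lambda$-Lipschitz in $v$ (since $\ell_\alpha$ is $1$-Lipschitz), this map is a contraction with factor $B_\Lambda/\delta<1$. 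Verification, or equivalently the limit of finite-horizon values with $g=0$ as $T\to\infty$, identifies this fixed point with $\pmb V^{\delta,\infty,\pmb\mu}$. The resulting $Q$ and soft-min policy are then measurable and bounded.

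Next comes the span bound, the infinite-horizon analogue of Lemma~\ref{lem:span}. From \eqref{eq:H-span-lip} and \eqref{eq:HJB-infty-mild}, $\mathfrak s_t\le\int_t^\infty e^{-\delta(s-t)}(2M_c+B_\Lambda\mathfrak s_s)\,\dd s$, with $\mathfrak s$ bounded. I will not use Grönwall on an infinite interval here. Instead, I take $S:=\sup_t\mathfrak s_t<\infty$, which gives $S\le (2M_c+B_\Lambda S)/\delta$ and hence $S\le 2M_c/(\delta-B_\Lambda)$. This yields the constant $C_\mu^{\delta,\infty}=L_c+L_\Lambda\,2M_c/(\delta-B_\Lambda)$.

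The Lipschitz estimates then follow the finite-horizon template.
\begin{enumerate}
\item Lemma~\ref{lem:forward} holds verbatim on $[0,\infty)$, because the bound $\int_0^t e^{-(\eta-a_M)(t-s)}\,\dd s\le 1/(\eta-a_M)$ is uniform in $t$.
\item For Lemma~\ref{lem:hjb}, the pointwise $Q$-difference bound $B_\Lambda d_s+C_\mu^{\delta,\infty}m_s$ is unchanged. Set $D:=\esssup_t e^{-\eta t}d_t$ and $\mathfrak m:=d_{\eta,1}^\infty(\pmb\mu,\pmb\nu)$. Multiplying $d_t\le\int_t^\infty e^{-\delta(s-t)}(B_\Lambda d_s+C_\mu^{\delta,\infty} m_s)\,\dd s$ by $e^{-\eta t}$ gives $D\le (B_\Lambda D+C_\mu^{\delta,\infty}\mathfrak m)/(\delta-\eta)$.
\item To absorb $D$, I need $D<\infty$. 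This holds because both value functions are bounded and $\eta>0$. Solving then gives $D\le C_\mu^{\delta,\infty}\mathfrak m/(\delta-B_\Lambda-\eta)$.
\item Lemma~\ref{lem:q-policy} carries over unchanged.
\end{enumerate}
Composing the three estimates gives the contraction factor $\kappa^\infty_{\delta,\eta,\alpha}$. Since $(\Pi_\infty,d^\infty_{\eta,\Pi})$ was shown to be complete, Banach's theorem yields the unique fixed point and convergence of the Picard iteration. Setting $\pmb\mu^{\delta,\infty}=\Gamma_M^\infty(\pmb\pi^{\delta,\infty})$ completes the equilibrium pair.

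I expect the main obstacle to be the backward step on the unbounded interval. Two issues need care there. One is selecting the bounded solution of \eqref{eq:HJB-infty}, with no terminal condition, and justifying that it is the value function. The other is replacing Grönwall, which is only valid forward from a boundary value, by the sup-absorption argument above. That argument requires a priori finiteness of the weighted quantities, and boundedness of $\pmb V$ supplies it.
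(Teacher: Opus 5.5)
Your proposal is correct and follows essentially the same route as the paper: the forward Lipschitz bound on $[0,\infty)$, the span bound and the weighted HJB stability estimate both obtained by sup-absorption instead of Gr\"onwall, then the $Q$-function and soft-min estimates, and Banach's theorem on the complete space $(\Pi_\infty,d^{\infty}_{\eta,\Pi})$. Your extra steps fill in details the paper leaves implicit: the $B_\Lambda/\delta$-contraction argument for well-posedness of the mild HJB equation, and the explicit check that the weighted quantity $D$ is finite before you absorb it.
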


The proof follows the same strategy as Theorem~\ref{thrm:main}, with the necessary modifications for integration over \([0,\infty)\). We record the corresponding forward function, HJB, \(Q\)-function, and policy estimates before combining them.

\begin{lemma}\label{lem:forward-infty}
Let \(\pmb \mu=\Gamma_M^\infty(\pmb \pi)\) and
\(\pmb \nu=\Gamma_M^\infty(\pmb \pi')\).  If \(\eta>a_M\), then
\begin{equation}\label{eq:forward-infty-lip}
  d_{\eta,1}^{\infty}(\pmb \mu,\pmb \nu)
  \le
  B_\Lambda(\eta-a_M)^{-1}
  d_{\eta,\Pi}^{\infty}(\pmb \pi,\pmb \pi').
\end{equation}
\end{lemma}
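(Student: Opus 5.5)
The plan is to repeat the proof of Lemma~\ref{lem:forward} on the half-line $[0,\infty)$. Pointwise-in-time estimates never use the horizon, and the weight $e^{-\eta t}$ with $\eta>a_M$ makes the time integrals converge.

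\textbf{Step 1: local Lipschitz bounds.} Fix $\pmb\pi,\pmb\pi'\in\Pi_\infty$. By the Carathéodory argument from the finite-horizon case, applied on every $[0,T]$, the flows $\pmb\mu=\Gamma_M^\infty(\pmb\pi)$ and $\pmb\nu=\Gamma_M^\infty(\pmb\pi')$ are absolutely continuous on each compact interval and stay in $\Pcal(\X)$. Lemma~\ref{lem:F-lip} is pointwise in $t$, so it holds unchanged for a.e.\ $t\ge0$. Since $\mu_0=\nu_0$, for every finite $t$ we write $\mu_t-\nu_t=\int_0^t(F^{\pmb\pi}(s,\mu_s)-F^{\pmb\pi'}(s,\nu_s))\dd s$. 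Splitting the integrand as in Lemma~\ref{lem:forward} gives
\[
\norm{\mu_t-\nu_t}_1\le a_M\int_0^t\norm{\mu_s-\nu_s}_1\dd s+B_\Lambda\int_0^t\norm{\pi_s-\pi'_s}_\Pi\dd s .
\]

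\textbf{Step 2: Gr\"onwall on $[0,t]$.} Gr\"onwall's inequality on the finite interval $[0,t]$ yields $\norm{\mu_t-\nu_t}_1\le B_\Lambda\int_0^te^{a_M(t-s)}\norm{\pi_s-\pi'_s}_\Pi\dd s$. This step needs no boundedness at infinity.

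\textbf{Step 3: insert the weighted metric.} If $d_{\eta,\Pi}^\infty(\pmb\pi,\pmb\pi')=\infty$, the claim is trivial. Otherwise $\norm{\pi_s-\pi'_s}_\Pi\le e^{\eta s}d_{\eta,\Pi}^\infty(\pmb\pi,\pmb\pi')$ for a.e.\ $s\ge0$. Multiplying the bound from Step 2 by $e^{-\eta t}$ gives
\[
e^{-\eta t}\norm{\mu_t-\nu_t}_1\le B_\Lambda d_{\eta,\Pi}^\infty(\pmb\pi,\pmb\pi')\int_0^te^{-(\eta-a_M)(t-s)}\dd s\le \frac{B_\Lambda}{\eta-a_M}d_{\eta,\Pi}^\infty(\pmb\pi,\pmb\pi').
\]
The last inequality uses $\eta>a_M$, and the resulting constant does not depend on $t$. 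Taking the essential supremum over $t\ge0$ then gives \eqref{eq:forward-infty-lip}.

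There is no real obstacle here. The only point to check is well-posedness of $\Gamma_M^\infty$ on the unbounded interval, namely a global absolutely continuous solution that stays in the simplex. This follows from uniqueness on each $[0,T]$ and consistency of the solutions under restriction to smaller intervals. Finiteness of the left-hand side is automatic, since $\norm{\mu_t-\nu_t}_1\le2$.
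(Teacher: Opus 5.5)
Your proposal is correct and matches the paper's proof. Like the paper, you apply the finite-horizon Gr\"onwall bound of Lemma~\ref{lem:forward} on each $[0,t]$, insert $\norm{\pi_s-\pi'_s}_{\Pi}\le e^{\eta s}d_{\eta,\Pi}^{\infty}(\pmb\pi,\pmb\pi')$, bound $\int_0^t e^{-(\eta-a_M)(t-s)}\dd s\le(\eta-a_M)^{-1}$, and take the essential supremum over $t\ge0$. Your remarks on global well-posedness of $\Gamma_M^\infty$ and on the infinite-distance case are harmless; that case is actually vacuous, since policy differences are bounded by $2$ and $\eta>a_M>0$.
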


\begin{proof}
For every \(t\ge0\), the restrictions of \(\pmb \mu\) and \(\pmb \nu\) to \([0,t]\)
solve the corresponding finite-horizon forward equations with the same initial
condition \(\mu_0\).  Hence, by the same argument as in Lemma~\ref{lem:forward},
\[
  \norm{\mu_t-\nu_t}_1
  \le
  B_\Lambda\int_0^t e^{a_M(t-s)}
  \norm{\pi_s-\pi'_s}_{\Pi}\,\dd s \text{ and } \norm{\pi_s-\pi'_s}_{\Pi}
  \le
  e^{\eta s}d_{\eta,\Pi}^{\infty}(\pmb \pi,\pmb \pi')
  \quad\text{for a.e. }s\ge0.
\]
Thus, we obtain
\[
\begin{aligned}
  e^{-\eta t}\norm{\mu_t-\nu_t}_1
  &\le
  B_\Lambda d_{\eta,\Pi}^{\infty}(\pmb \pi,\pmb \pi')
  \int_0^t e^{-(\eta-a_M)(t-s)}\,\dd s \le
  \frac{B_\Lambda}{\eta-a_M}
  d_{\eta,\Pi}^{\infty}(\pmb \pi,\pmb \pi'),
\end{aligned}
\]
where the last inequality uses \(\eta>a_M\).  Taking the essential supremum
over \(t\ge0\) proves \eqref{eq:forward-infty-lip}.
\end{proof}

We next establish the infinite-horizon counterpart of the finite-horizon
span estimate. Since there is no terminal cost, the resulting bound contains
no contribution from \(\spn(g)\).

\begin{lemma}\label{lem:span-infty}
Suppose \(\delta>B_\Lambda\).  Then, for every
\(\pmb \mu\in\mathcal M_\infty(\mu_0)\) and every \(t\ge0\),
\begin{equation}\label{eq:span-infty}
  \spn(V_t^{\delta,\infty,\pmb \mu})
  \le
  S_V^{\delta,\infty}
  :=
  2M_c(\delta-B_\Lambda)^{-1}.
\end{equation}
\end{lemma}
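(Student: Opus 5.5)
The plan is to mirror the proof of Lemma~\ref{lem:span}, replacing the variation-of-constants formula \eqref{eq:V-span} by the mild infinite-horizon representation \eqref{eq:HJB-infty-mild}. The pointwise Hamiltonian estimate \eqref{eq:H-span-lip} involves only a fixed time $t$ and uses only Assumption~\ref{ass:basic-reg} and the $1$-Lipschitz property of the soft-min $\ell_\alpha$, so it carries over verbatim: for all $s\ge0$, $x,z\in\X$ and $v\in\R^\X$,
\[
|H_s^{\pmb\mu}(x,v)-H_s^{\pmb\mu}(z,v)|\le 2M_c+B_\Lambda\spn(v).
\]
Setting $\mathfrak s_t:=\spn(V_t^{\delta,\infty,\pmb\mu})$ and subtracting \eqref{eq:HJB-infty-mild} at two states $x,z$ then gives
\[
\mathfrak s_t\le\int_t^\infty e^{-\delta(s-t)}\bigl(2M_c+B_\Lambda\mathfrak s_s\bigr)\dd s .
\]
There is no terminal contribution, which is why $\spn(g)$ is absent from the bound.

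The main obstacle is that Gr\"onwall's inequality cannot be applied backward from a terminal time here. I would handle it in one of two ways.

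The first route is a supremum argument. Since $\pmb V^{\delta,\infty,\pmb\mu}$ is the bounded solution selected by boundedness, $S:=\sup_{t\ge0}\mathfrak s_t<\infty$; if needed, boundedness follows from $|H|\le M_c+\alpha\log|\A|$ together with the representation by costs. Taking the supremum over $t$ in the inequality above gives
\[
S\le\frac{2M_c+B_\Lambda S}{\delta},
\]
that is, $(\delta-B_\Lambda)S\le 2M_c$. Because $\delta>B_\Lambda$, this yields $S\le 2M_c/(\delta-B_\Lambda)=S_V^{\delta,\infty}$, which is the claim.

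The second route, which avoids relying on a priori finiteness of $S$, uses finite-horizon truncations. For $T>t$, consider the finite-horizon problem with terminal cost $g_T:=V_T^{\delta,\infty,\pmb\mu}$. By uniqueness of the backward ODE \eqref{eq:HJB}, its solution coincides with $V^{\delta,\infty,\pmb\mu}$ on $[0,T]$. The Gr\"onwall step in Lemma~\ref{lem:span} then gives
\[
\mathfrak s_t\le e^{-(\delta-B_\Lambda)(T-t)}\spn(V_T)+\frac{2M_c}{\delta-B_\Lambda}.
\]
Since $\spn(V_T)\le 2\sup_s\|V_s\|_\infty<\infty$ by boundedness of the selected solution, letting $T\to\infty$ kills the first term and yields \eqref{eq:span-infty}.

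I would present the second route as the main proof, since it reuses Lemma~\ref{lem:span} directly. The only point to check is that the boundedness of $\pmb V^{\delta,\infty,\pmb\mu}$ follows from $|c|\le M_c$, $-\alpha\log|\A|\le\alpha\Omega\le0$ and $\delta>0$.
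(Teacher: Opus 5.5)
Your first route is the paper's proof. The paper combines \eqref{eq:HJB-infty-mild} with \eqref{eq:H-span-lip} and sets $R:=\esssup_{s\ge0}\spn(V_s^{\delta,\infty,\pmb\mu})$, which is finite because the selected solution is bounded. It then rearranges $R\le (2M_c+B_\Lambda R)/\delta$.

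Your preferred second route is also correct: truncate at $T$, treat $V_T^{\delta,\infty,\pmb\mu}$ as a terminal cost, reuse the Gr\"onwall comparison from Lemma~\ref{lem:span}, and let $T\to\infty$. You do not need uniqueness for the backward ODE. Splitting the integral in \eqref{eq:HJB-infty-mild} at $T$ gives directly
$V_t^{\delta,\infty,\pmb\mu}(x)=e^{-\delta(T-t)}V_T^{\delta,\infty,\pmb\mu}(x)+\int_t^T e^{-\delta(s-t)}H_s^{\pmb\mu}(x,V_s^{\delta,\infty,\pmb\mu})\dd s$,
and this identity is all the Gr\"onwall step uses.

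What this route really buys is a weaker growth hypothesis. It only needs $e^{-(\delta-B_\Lambda)T}\spn(V_T^{\delta,\infty,\pmb\mu})\to0$ along some sequence $T\to\infty$, whereas the rearrangement needs $R<\infty$ outright. As you wrote it, though, you invoke $\sup_s\norm{V_s^{\delta,\infty,\pmb\mu}}_\infty<\infty$, which already gives $S\le 2\sup_s\norm{V_s^{\delta,\infty,\pmb\mu}}_\infty<\infty$. So the claim that it ``avoids relying on a priori finiteness of $S$'' is overstated. In the paper's bounded setting both routes use the same input, and the paper's is shorter.

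One small slip: $|H_s^{\pmb\mu}(x,v)|\le M_c+\alpha\log|\A|$ is false as stated, because the Hamiltonian contains $\sum_{y}\Lambda(x,y,a,\mu_s)v(y)$. Boundedness should instead be justified by the cost bound $\norm{V_t^{\delta,\infty,\pmb\mu}}_\infty\le M_{\alpha,c}/\delta$, which follows from $|c|\le M_c$ and $-\alpha\log|\A|\le\alpha\Omega\le0$. You do state this correctly at the end.
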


\begin{proof}
Using the representation \eqref{eq:HJB-infty-mild} and the estimate \( |H_t^{\pmb \mu}(x,v)-H_t^{\pmb \mu}(z,v)|
  \le
  2M_c+B_\Lambda\spn(v)\), proved as in Lemma~\ref{lem:span}, we obtain
\[
\begin{aligned}
  \spn(V_t^{\delta,\infty,\pmb \mu})
  &\le
  \int_t^\infty e^{-\delta(s-t)}
  \left(
      2M_c
      +
      B_\Lambda\spn(V_s^{\delta,\infty,\pmb \mu})
  \right)\dd s .
\end{aligned}
\]
The solution \(\pmb V^{\delta,\infty,\pmb \mu}\) is bounded, hence
\(
  R:=\esssup_{s\ge0}\spn(V_s^{\delta,\infty,\pmb \mu})<\infty .
\)
Taking the essential supremum in the previous inequality leads to
\(
  R
  \le
  2M_c \delta^{-1}
  +
  B_\Lambda\delta^{-1}R.
\)
Since \(\delta>B_\Lambda\), rearranging yields
\(
  R\le 2M_c(\delta-B_\Lambda)^{-1}.
\)
This proves \eqref{eq:span-infty}.
\end{proof}

Define
\begin{equation}\label{eq:Cmu-infty}
  C_\mu^{\delta,\infty}
  :=
  L_c+L_\Lambda S_V^{\delta,\infty}
  =
  L_c+L_\Lambda 2M_c(\delta-B_\Lambda)^{-1}.
\end{equation}

Next, we establish the Lipschitz properties of the infinite-horizon discounted HJB equation with respect to the mean-field flow.

\begin{lemma}\label{lem:hjb-infty}
Let \(\pmb \mu,\pmb \nu\in\mathcal M_\infty(\mu_0)\). If \(\eta<\delta-B_\Lambda\), then
\begin{equation}\label{eq:V-infty-lip}
  d_{\eta,V}^{\infty}
  (\pmb V^{\delta,\infty,\pmb \mu},\pmb V^{\delta,\infty,\pmb \nu})
  \le
  C_\mu^{\delta,\infty}
       (\delta-B_\Lambda-\eta)^{-1}
  d_{\eta,1}^{\infty}(\pmb \mu,\pmb \nu).
\end{equation}
\end{lemma}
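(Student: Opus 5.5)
I will mirror the proof of Lemma~\ref{lem:hjb}, now working with the mild formulation \eqref{eq:HJB-infty-mild} in place of the variation-of-constants formula with terminal condition. The plan has three steps: a pointwise bound on the \(Q\)-function gap, an integral inequality for the value gap, and a weighted supremum estimate. The last step replaces the finite-horizon Gr\"onwall argument, which is not directly available on \([0,\infty)\).

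First, subtract the two mild representations. Because \(\ell_\alpha\) is \(1\)-Lipschitz in the sup norm (shown in the proof of Lemma~\ref{lem:span}),
\[
  d_t:=\norm{V_t^{\delta,\infty,\pmb \mu}-V_t^{\delta,\infty,\pmb \nu}}_\infty
  \le \int_t^\infty e^{-\delta(s-t)}\norm{Q_s^{\delta,\infty,\pmb\mu}-Q_s^{\delta,\infty,\pmb\nu}}_\infty\dd s .
\]
Next, split the \(Q\)-gap into the terms \((I)\), \((II)\), \((III)\) exactly as in Lemma~\ref{lem:hjb}. For term \((III)\), use the zero-row-sum centering trick together with the span bound of Lemma~\ref{lem:span-infty}. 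This gives, for a.e. \(s\),
\[
  \norm{Q_s^{\delta,\infty,\pmb\mu}-Q_s^{\delta,\infty,\pmb\nu}}_\infty\le B_\Lambda d_s+C_\mu^{\delta,\infty}m_s,
  \qquad m_s:=\norm{\mu_s-\nu_s}_1 .
\]

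Second, I will not run Gr\"onwall backward from \(+\infty\). Instead I work directly with the weighted quantity \(D:=d^{\infty}_{\eta,V}(\pmb V^{\delta,\infty,\pmb\mu},\pmb V^{\delta,\infty,\pmb\nu})\). Use the bounds \(d_s\le e^{\eta s}D\) and \(m_s\le e^{\eta s}d^\infty_{\eta,1}(\pmb\mu,\pmb\nu)\), multiply the integral inequality by \(e^{-\eta t}\), and integrate with \(\int_t^\infty e^{-(\delta-\eta)(s-t)}\dd s=(\delta-\eta)^{-1}\). This yields
\[
  D\le \frac{B_\Lambda D+C_\mu^{\delta,\infty}d^\infty_{\eta,1}(\pmb\mu,\pmb\nu)}{\delta-\eta}.
\]
Since \(\eta<\delta-B_\Lambda\), the coefficient \(B_\Lambda/(\delta-\eta)\) is less than \(1\), and rearranging gives \(D\le C_\mu^{\delta,\infty}(\delta-B_\Lambda-\eta)^{-1}d^\infty_{\eta,1}(\pmb\mu,\pmb\nu)\), which is \eqref{eq:V-infty-lip}.

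The main obstacle is justifying that rearrangement, which requires \(D<\infty\). For \(\eta\ge0\) this holds because both value functions are bounded, exactly as used in Lemma~\ref{lem:span-infty}: the costs are bounded by \(M_c\) and the entropy term is bounded, so each value function is bounded by a constant over \(\delta\). For \(\eta\ge0\), \(D\le\sup_t d_t<\infty\). If negative \(\eta\) is allowed, boundedness alone may not give finiteness of \(D\). In that case I would first truncate at a large time \(R\), where \(d_R\) is bounded, and apply the finite-horizon estimate of Lemma~\ref{lem:hjb} on \([t,R]\) with the boundary term \(e^{-(\delta-B_\Lambda)(R-t)}d_R\). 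Then I would let \(R\to\infty\), using \(\eta<\delta-B_\Lambda\) to kill the boundary term. This truncation route also gives the pointwise kernel bound \(d_t\le C_\mu^{\delta,\infty}\int_t^\infty e^{-(\delta-B_\Lambda)(s-t)}m_s\dd s\), which is useful later for an infinite-horizon majorization.
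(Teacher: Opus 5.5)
Your proposal is correct and follows the paper's proof essentially step for step: the mild representation \eqref{eq:HJB-infty-mild} with the $1$-Lipschitz soft-min, the $(I)$--$(III)$ splitting of the $Q$-gap using the span bound of Lemma~\ref{lem:span-infty}, the weighted estimate via $\int_t^\infty e^{-(\delta-\eta)(s-t)}\dd s=(\delta-\eta)^{-1}$, and absorbing $B_\Lambda(\delta-\eta)^{-1}D$ into the left-hand side. Your explicit check that $D<\infty$ (from boundedness of both value functions when $\eta\ge0$, the paper's standing convention) supplies the one detail that the paper's ``straightforward rearrangement'' leaves implicit, and the truncation detour for negative $\eta$ is not needed in this setting.
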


\begin{proof}
Let
\(
  d_t^\infty:=
  \norm{
  V_t^{\delta,\infty,\pmb \mu}
  -
  V_t^{\delta,\infty,\pmb \nu}
  }_\infty,
\)
and
\(
  m_t^\infty:=\norm{\mu_t-\nu_t}_1 .
\)
Applying the same Hamiltonian decomposition as in
Lemma~\ref{lem:hjb} to the integral representation
\eqref{eq:HJB-infty-mild} gives
\[
  d_t^\infty
  \le
  \int_t^\infty e^{-\delta(s-t)}
  \left(
      B_\Lambda d_s^\infty
      +
      C_\mu^{\delta,\infty}m_s^\infty
  \right)\dd s .
\]
Multiplying by \(e^{-\eta t}\) and noting that
\(
  d_s^\infty\le e^{\eta s}
  d_{\eta,V}^{\infty}
  (\pmb V^{\delta,\infty,\pmb \mu},\pmb V^{\delta,\infty,\pmb \nu}),
\)
and
\(
  m_s^\infty\le e^{\eta s}d_{\eta,1}^{\infty}(\pmb \mu,\pmb \nu),
\)
\begin{align}\label{eq:quaso:discounted}
  e^{-\eta t}d_t^\infty
  &\le
  \left(
      B_\Lambda
      d_{\eta,V}^{\infty}
      (\pmb V^{\delta,\infty,\pmb \mu},\pmb V^{\delta,\infty,\pmb \nu})
      +
      C_\mu^{ \delta,\infty}
      d_{\eta,1}^{\infty}(\pmb \mu,\pmb \nu)
  \right)
  \int_0^\infty e^{-(\delta-\eta)r}\,\dd r .
\end{align}
Taking the essential supremum of \eqref{eq:quaso:discounted} over \(t\ge0\), we get
\[
  d_{\eta,V}^{\infty}
  (\pmb V^{\delta,\infty,\pmb \mu},\pmb V^{\delta,\infty,\pmb \nu})
  \le
  B_\Lambda(\delta-\eta)^{-1}
  d_{\eta,V}^{\infty}
  (\pmb V^{\delta,\infty,\pmb \mu},\pmb V^{\delta,\infty,\pmb \nu})
  +
  C_\mu^{\delta,\infty}(\delta-\eta)^{-1}
  d_{\eta,1}^{\infty}(\pmb \mu,\pmb \nu).
\]
Since \(\eta<\delta-B_\Lambda\), we have
\(B_\Lambda/(\delta-\eta)<1\). Thus, a straightforward rearrangement proves
\eqref{eq:V-infty-lip}.
\end{proof}

The corresponding \(Q\)-function and policy estimates follow immediately.

\begin{lemma}\label{lem:q-policy-infty}
Under the assumptions of Lemma~\ref{lem:hjb-infty},
\begin{equation}\label{eq:Q-infty-lip}
  d_{\eta,Q}^{\infty}
  (\pmb Q^{\delta,\infty,\pmb \mu},\pmb Q^{\delta,\infty,\pmb \nu})
  \le
  C_\mu^{\delta,\infty}
  \left(
      1+B_\Lambda(\delta-B_\Lambda-\eta)^{-1}
  \right)
  d_{\eta,1}^{\infty}(\pmb \mu,\pmb \nu).
\end{equation}
Moreover, for any \(\pmb Q,\pmb Q'\in\mathcal Q_\infty^\eta\),
\begin{equation}\label{eq:policy-infty-lip}
  d_{\eta,\Pi}^{\infty}
  (\Gamma_\Pi^\infty(\pmb Q),\Gamma_\Pi^\infty(\pmb Q'))
  \le
  \alpha^{-1}
  d_{\eta,Q}^{\infty}(\pmb Q,\pmb Q').
\end{equation}
\end{lemma}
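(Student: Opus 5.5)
The $Q$-function bound \eqref{eq:Q-infty-lip} should follow the finite-horizon argument of Lemma~\ref{lem:q-policy}, now driven by Lemma~\ref{lem:hjb-infty}. The first step is the pointwise estimate
\[
  \norm{Q_t^{\delta,\infty,\pmb \mu}-Q_t^{\delta,\infty,\pmb \nu}}_\infty
  \le B_\Lambda\norm{V_t^{\delta,\infty,\pmb \mu}-V_t^{\delta,\infty,\pmb \nu}}_\infty
  +C_\mu^{\delta,\infty}\norm{\mu_t-\nu_t}_1
\]
for a.e. $t\ge0$. To obtain it, I would expand $Q^{\delta,\infty,\pmb\mu}_t(x,a)-Q^{\delta,\infty,\pmb\nu}_t(x,a)$ into the same three terms (I), (II), (III) used in Lemma~\ref{lem:hjb}. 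Term (I) is bounded by $L_c\norm{\mu_t-\nu_t}_1$ by Assumption~\ref{ass:basic-reg}, and term (II) by $B_\Lambda\norm{V_t^{\delta,\infty,\pmb\mu}-V_t^{\delta,\infty,\pmb\nu}}_\infty$. For term (III), the rows of $\Lambda(x,\cdot,a,\mu_t)-\Lambda(x,\cdot,a,\nu_t)$ sum to zero, so I would recenter $V_t^{\delta,\infty,\pmb\nu}$ at the midpoint of its range. The bound then uses the span estimate of Lemma~\ref{lem:span-infty} in place of Lemma~\ref{lem:span}, which gives $L_\Lambda S_V^{\delta,\infty}\norm{\mu_t-\nu_t}_1$. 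This step needs $\delta>B_\Lambda$, and that holds because $0\le\eta<\delta-B_\Lambda$.

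Next, I would multiply the pointwise estimate by $e^{-\eta t}$ and take the essential supremum over $t\ge0$. This gives
\[
d_{\eta,Q}^{\infty}\le B_\Lambda\, d_{\eta,V}^{\infty}+C_\mu^{\delta,\infty}d_{\eta,1}^{\infty}.
\]
Inserting \eqref{eq:V-infty-lip} from Lemma~\ref{lem:hjb-infty} then yields \eqref{eq:Q-infty-lip} directly.

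The policy bound \eqref{eq:policy-infty-lip} is purely pointwise in time. Inequality \eqref{eq:ev-ptwise2} gives $\norm{\sigma_\alpha(z)-\sigma_\alpha(z')}_1\le\alpha^{-1}\norm{z-z'}_\infty$ for each fixed $(t,x)$, and this does not depend on the horizon. I would apply it with $z=Q_t(x,\cdot)$ and $z'=Q'_t(x,\cdot)$, take the maximum over $x$, multiply by $e^{-\eta t}$, and take the essential supremum over $[0,\infty)$.

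The only point needing care is well-definedness on the unbounded interval. The quantities $d_{\eta,V}^\infty$ and $d_{\eta,Q}^\infty$ must be finite, and the infinite-horizon value functions must be bounded so that Lemma~\ref{lem:span-infty} applies. Both follow from boundedness of $c$, $\Lambda$ and the selected bounded solution of \eqref{eq:HJB-infty-mild}. Measurability of the soft-min flow follows as in the finite-horizon case. I do not expect a real obstacle beyond these checks.
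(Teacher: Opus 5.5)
Your proposal is correct and follows the paper's own route. You first establish the pointwise bound $\norm{Q_t^{\delta,\infty,\pmb\mu}-Q_t^{\delta,\infty,\pmb\nu}}_\infty\le B_\Lambda\norm{V_t^{\delta,\infty,\pmb\mu}-V_t^{\delta,\infty,\pmb\nu}}_\infty+C_\mu^{\delta,\infty}\norm{\mu_t-\nu_t}_1$, which the paper only asserts and you justify via the (I)--(III) decomposition and Lemma~\ref{lem:span-infty}. You then weight by $e^{-\eta t}$, take the essential supremum, insert Lemma~\ref{lem:hjb-infty}, and obtain the policy bound from the pointwise soft-min estimate \eqref{eq:ev-ptwise2}, exactly as the paper does.
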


\begin{proof}
For a.e. \(t\ge0\), the definition of the \(Q\)-function gives
\[
  \norm{
  Q_t^{\delta,\infty,\pmb \mu}
  -
  Q_t^{\delta,\infty,\pmb \nu}
  }_\infty
  \le
  B_\Lambda
  \norm{
  V_t^{\delta,\infty,\pmb \mu}
  -
  V_t^{\delta,\infty,\pmb \nu}
  }_\infty
  +
  C_\mu^{\delta,\infty}\norm{\mu_t-\nu_t}_1 .
\]
Multiplying by \(e^{-\eta t}\), taking the essential supremum over \(t\ge0\),
and using Lemma~\ref{lem:hjb-infty} proves \eqref{eq:Q-infty-lip}.

The proof of \eqref{eq:policy-infty-lip} follows exactly as in the proof of
\eqref{eq:policy-lip} in Lemma~\ref{lem:q-policy}, by applying point-wise in
\((t,x)\) and then taking the essential supremum over \(t\ge0\).
\end{proof}

\begin{proof}[Proof of Theorem \ref{thm:infty-contraction}.]
Let \(\pmb \pi,\pmb \pi'\in\Pi_\infty\), and set
\(
  \pmb \mu=\Gamma_M^\infty(\pmb \pi),
\)
and
\(
  \pmb \nu=\Gamma_M^\infty(\pmb \pi').
\)
Using Lemma~\ref{lem:q-policy-infty} and Lemma~\ref{lem:forward-infty}, we
obtain
\[
\begin{aligned}
d_{\eta,\Pi}^{\infty}
(\Phi_\delta^\infty(\pmb \pi),\Phi_\delta^\infty(\pmb \pi'))
&\le
\alpha^{-1}
d_{\eta,Q}^{\infty}
(\pmb Q^{\delta,\infty,\pmb \mu},\pmb Q^{\delta,\infty,\pmb \nu})                 \\
&\le
\alpha^{-1}
C_\mu^{\delta,\infty}
\left(
    1+\frac{B_\Lambda}{\delta-B_\Lambda-\eta}
\right)
d_{\eta,1}^{\infty}(\pmb \mu,\pmb \nu)                                  \\
&\le
\frac{B_\Lambda}{\eta-a_M}
\cdot
\frac1\alpha
\cdot
C_\mu^{\delta,\infty}
\left(
    1+\frac{B_\Lambda}{\delta-B_\Lambda-\eta}
\right)
d_{\eta,\Pi}^{\infty}(\pmb \pi,\pmb \pi')                                \\
&=
\kappa_{\delta,\eta,\alpha}^{\infty}
d_{\eta,\Pi}^{\infty}(\pmb \pi,\pmb \pi').
\end{aligned}
\]
Thus \(\Phi_\delta^\infty\) is a contraction on the complete metric space \((\Pi_\infty,d^{\infty}_{\eta,\Pi})\) whenever
\(\kappa_{\delta,\eta,\alpha}^{\infty}<1\). Banach's fixed-point theorem therefore yields a unique fixed point and convergence of the Picard iteration of \(\pmb\pi^{k+1}=\Phi_\delta^\infty(\pmb\pi^k)\) from every initial policy. Together with \(\pmb\mu^{\delta,\infty}
=\Gamma_M^\infty(\pmb\pi^{\delta,\infty})\), this fixed point determines the unique infinite-horizon non-stationary \((\delta,\alpha)\)-regularized MFE.
\end{proof}

\begin{remark}
The infinite-horizon contraction condition has the same forward--backward structure as the finite-horizon condition and the only difference is that \(S_V^{\delta,\infty}\) contains no terminal-cost term.
Consequently, the infinite-horizon contraction condition is no stronger than
the corresponding finite-horizon condition obtained with the same Bielecki
weight.
\end{remark}

\subsection{Finite-to-infinite horizon error bounds}\label{subsec:finite-infty-error}
We now compare the finite-horizon equilibrium with the infinite-horizon equilibrium truncated to \([0,T]\). The argument first establishes a
truncation estimate for a fixed mean-field flow and then lifts it to the
equilibrium level using contraction stability.
The Bielecki metric is particularly suited to this comparison because its
exponential weight can be chosen independently of \(T\). In contrast, the
spectral metric used in Lemma~\ref{lem:contraction} depends on a principal
eigenfunction of the majorizing operator, while \eqref{eq:eig} characterizes
this eigenfunction only implicitly and provides no quantitative control.
The explicit structure of the Bielecki metric therefore allows us to derive
error bounds that are uniform across horizon lengths.
Since
\(
  -\log|\A|\le \Omega(q)\le 0,
\)
the regularized running cost \eqref{eq:infinite-discounted-cost} is uniformly bounded in absolute value by
\(
  M_{\alpha,c}:=M_c+\alpha\log|\A|.
\)
Define
\begin{equation}\label{eq:R-delta-alpha}
  R_{\delta,\alpha}
  :=
  M_g + M_{\alpha,c}\delta^{-1}.
\end{equation}

The next lemma gives the truncation error for a fixed mean-field flow.

\begin{lemma}\label{lem:finite-infty-fixed-flow}
Suppose that Assumption~\ref{ass:basic-reg} holds, and
\(\delta>B_\Lambda\). Let
\(\pmb \mu\in\mathcal M_\infty(\mu_0)\), and let
\(\pmb V^{\delta,T,\pmb \mu}\) and \(\pmb V^{\delta,\infty,\pmb \mu}\) be the finite-horizon and
infinite-horizon discounted regularized value functions generated by the same
mean-field flow \(\pmb\mu\).  Then, for a.e. \(t\in[0,T]\),
\begin{equation}\label{eq:fixed-flow-V-tail}
  \norm{
  V_t^{\delta,T,\pmb \mu}
  -
  V_t^{\delta,\infty,\pmb \mu}
  }_\infty
  \le
  R_{\delta,\alpha}
  e^{-(\delta-B_\Lambda)(T-t)} .
\end{equation}
Consequently,
\begin{equation}\label{eq:fixed-flow-Q-tail}
  \norm{
  Q_t^{\delta,T,\pmb \mu}
  -
  Q_t^{\delta,\infty,\pmb \mu}
  }_\infty
  \le
  B_\Lambda R_{\delta,\alpha}
  e^{-(\delta-B_\Lambda)(T-t)} .
\end{equation}
\end{lemma}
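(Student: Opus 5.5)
The idea is to compare the two value functions through their integral (variation-of-constants) representations on $[t,T]$, using the same Grönwall device as in Lemma~\ref{lem:hjb}. We use \eqref{eq:V-span} with the flow $\pmb\mu$ restricted to $[0,T]$. For the infinite-horizon problem, we split the integral in \eqref{eq:HJB-infty-mild} at $T$. This gives, for each $x$,
\[
V_t^{\delta,T,\pmb\mu}(x)-V_t^{\delta,\infty,\pmb\mu}(x)
=e^{-\delta(T-t)}\bigl(g(x)-V_T^{\delta,\infty,\pmb\mu}(x)\bigr)
+\int_t^T e^{-\delta(s-t)}\bigl(H_s^{\pmb\mu}(x,V_s^{\delta,T,\pmb\mu})-H_s^{\pmb\mu}(x,V_s^{\delta,\infty,\pmb\mu})\bigr)\dd s .
\]
The splitting relies on the semigroup identity $V_t^{\delta,\infty}=\int_t^T e^{-\delta(s-t)}H_s(\cdot)\dd s+e^{-\delta(T-t)}V_T^{\delta,\infty}$. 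This identity follows directly from \eqref{eq:HJB-infty-mild} after the change of variables in the tail integral.

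For the terminal term, we bound $\norm{g}_\infty\le M_g$. We bound $\norm{V_T^{\delta,\infty,\pmb\mu}}_\infty\le M_{\alpha,c}/\delta$ directly from the cost representation \eqref{eq:infinite-discounted-cost}: the regularized running cost is bounded in absolute value by $M_{\alpha,c}$, so every policy's cost, and hence the infimum, lies in $[-M_{\alpha,c}/\delta,\,M_{\alpha,c}/\delta]$. This yields a terminal mismatch of at most $M_g+M_{\alpha,c}\delta^{-1}=R_{\delta,\alpha}$.

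For the integrand, we use the $1$-Lipschitz property of the soft-min $\ell_\alpha$ from the proof of Lemma~\ref{lem:span}. This gives $|H_s^{\pmb\mu}(x,v)-H_s^{\pmb\mu}(x,w)|\le\sup_a|\sum_y\Lambda(x,y,a,\mu_s)(v-w)(y)|\le B_\Lambda\norm{v-w}_\infty$, which is the same bound as term (II) in Lemma~\ref{lem:hjb}. Writing $e_t:=\norm{V_t^{\delta,T,\pmb\mu}-V_t^{\delta,\infty,\pmb\mu}}_\infty$, we obtain
\[
e_t\le R_{\delta,\alpha}e^{-\delta(T-t)}+B_\Lambda\int_t^T e^{-\delta(s-t)}e_s\dd s .
\]
Setting $u_t:=e^{\delta(T-t)}e_t$ turns this into $u_t\le R_{\delta,\alpha}+B_\Lambda\int_t^T u_s\dd s$. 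Backward Grönwall then gives $u_t\le R_{\delta,\alpha}e^{B_\Lambda(T-t)}$, which is \eqref{eq:fixed-flow-V-tail}. Because Grönwall requires $e$ to be bounded and measurable, we will note that both value functions are bounded: the finite-horizon one by continuity on a compact interval, and the infinite-horizon one because it is selected as the bounded solution.

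For \eqref{eq:fixed-flow-Q-tail}, the cost terms $c(x,a,\mu_t)$ in \eqref{eq:Qdelta} are identical for the two problems and cancel. What remains is $|\sum_y\Lambda(x,y,a,\mu_t)(V_t^{\delta,T,\pmb\mu}-V_t^{\delta,\infty,\pmb\mu})(y)|\le B_\Lambda e_t$, and this bound gives the claim.

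The only delicate point is the terminal bound on $V_T^{\delta,\infty,\pmb\mu}$. Using the span bound would only control oscillation, so we need the sup bound, which comes from the cost representation rather than from the HJB. Since $R_{\delta,\alpha}$ is defined exactly as $M_g+M_{\alpha,c}\delta^{-1}$, the sup bound is evidently the intended route. The hypothesis $\delta>B_\Lambda$ is used only so that the resulting rate is a decay rate.
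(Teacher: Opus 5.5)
Your proposal is correct and follows essentially the same route as the paper: the sup bound $M_{\alpha,c}/\delta$ on $V^{\delta,\infty,\pmb\mu}$ from the cost representation gives the terminal mismatch $R_{\delta,\alpha}$. You then compare the two variation-of-constants formulas on $[t,T]$ using the $B_\Lambda$-Lipschitz Hamiltonian, apply backward Gr\"onwall, and bound the $Q$-difference by $B_\Lambda$ times the value difference. Your explicit splitting of the infinite-horizon integral at $T$ and the substitution $u_t=e^{\delta(T-t)}e_t$ simply make precise steps that the paper leaves implicit.
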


\begin{proof}
The boundedness of the regularized running cost and the discount factor imply:
\begin{equation}\label{eq:Vinfty-bound}
  \norm{V_t^{\delta,\infty,\pmb \mu}}_\infty
  \le
  M_{\alpha,c}\delta^{-1},
  \qquad t\ge0,
\end{equation}
which is a well-known result and is straightforward to prove.
Since \(V_T^{\delta,T,\pmb \mu}\equiv g\), it follows from \eqref{eq:Vinfty-bound} that
\begin{equation}\label{eq:terminal-tail-bound}
  \norm{
  V_T^{\delta,T,\pmb \mu}
  -
  V_T^{\delta,\infty,\pmb \mu}
  }_\infty
  \le
  M_g +M_{\alpha,c}\delta^{-1}
  =
  R_{\delta,\alpha}.
\end{equation}

We now compare the finite- and infinite- horizon HJB equations on \([0,T]\). Notice that for any \(T \in [1,\infty)\cup \{\infty\}\), it holds that
\[
  V_t^{\delta,T,\pmb \mu}(x)
  =
  e^{-\delta(T-t)}g(x)
  +
  \int_t^T
  e^{-\delta(s-t)}
  H_s^{\pmb \mu}(x,V_s^{\delta,T,\pmb \mu})
  \dd s.
\]
Subtracting the \(T < \infty\) and \(T = \infty\) cases from each other, it holds that
\[
\begin{aligned}
&
\left|
V_t^{\delta,T,\pmb \mu}(x)
-
V_t^{\delta,\infty,\pmb \mu}(x)
\right|                                                   \\
&\quad\le
e^{-\delta(T-t)}
\norm{
V_T^{\delta,T,\pmb \mu}
-
V_T^{\delta,\infty,\pmb \mu}
}_\infty
+
\int_t^T
e^{-\delta(s-t)}
\left|
H_s^{\pmb \mu}(x,V_s^{\delta,T,\pmb \mu})
-
H_s^{\pmb \mu}(x,V_s^{\delta,\infty,\pmb \mu})
\right|
\dd s .
\end{aligned}
\]
The Hamiltonian is \(B_\Lambda\)-Lipschitz in the value variable. 
Thus, by Gr\"onwall's inequality, one obtains
\begin{align}\label{eq:tail-gronwall-pre}
  \norm{
  V_t^{\delta,T,\pmb \mu}
  -
  V_t^{\delta,\infty,\pmb \mu}
  }_\infty
  &\le
  e^{-\delta(T-t)}\norm{
  V_T^{\delta,T,\pmb \mu}
  -
  V_T^{\delta,\infty,\pmb \mu}
  }_\infty
  +
  B_\Lambda
  \int_t^T e^{-\delta(s-t)}\norm{
  V_s^{\delta,T,\pmb \mu}
  -
  V_s^{\delta,\infty,\pmb \mu}
  }_\infty\,\dd s 
  \\&\le \norm{
  V_T^{\delta,T,\pmb \mu}
  -
  V_T^{\delta,\infty,\pmb \mu}
  }_\infty e^{-(\delta-B_\Lambda)(T-t)}.
\end{align} 
Using \eqref{eq:terminal-tail-bound}, we directly obtain \eqref{eq:fixed-flow-V-tail}.

Finally, for the same flow \(\mu\), the \(Q\)-functions differ only through
their value functions:
\begin{equation*}
\left|
Q_t^{\delta,T,\pmb \mu}(x,a)
-
Q_t^{\delta,\infty,\pmb \mu}(x,a)
\right| =
\left|
\sum_{y\in\X}
\Lambda(x,y,a,\mu_t)
\left(
V_t^{\delta,T,\pmb \mu}(y)
-
V_t^{\delta,\infty,\pmb \mu}(y)
\right)\right| \le
B_\Lambda
\norm{
V_t^{\delta,T,\pmb \mu}
-
V_t^{\delta,\infty,\pmb \mu}
}_\infty .
\end{equation*}
Combining this with \eqref{eq:fixed-flow-V-tail} proves
\eqref{eq:fixed-flow-Q-tail}.
\end{proof}

We now lift the fixed-flow truncation estimates presented in Lemma \ref{lem:finite-infty-fixed-flow} to the equilibrium level.  
The following theorem
quantifies the convergence of the finite-horizon MFE to the infinite-horizon
MFE on \([0,T]\), which is the main result of this section.

\begin{theorem}\label{thm:finite-infty-error}
Suppose that Assumption~\ref{ass:basic-reg} holds.  Assume that there exists \(\eta\) such that
\(
  a_M<\eta<\delta-B_\Lambda
\)
and
\(
\kappa_{\delta,\eta,\alpha}^T <1.
\)
Then, both the finite- and infinite-horizon \((\delta,\alpha)\)-regularized MFEs exist and are unique, and for every \(T<\infty\),
\begin{equation}\label{eq:policy-finite-infty-error}
  d_{\eta,\Pi}^{T}
  \bigl(
      \pmb \pi^{\delta,T},
      \pmb \pi^{\delta,\infty}|_{[0,T]}
  \bigr)
  \le
  \frac{B_\Lambda R_{\delta,\alpha}}
       {\alpha(1-\kappa_{\delta,\eta,\alpha}^T)}
  e^{-\eta T}.
\end{equation}
Moreover,
\begin{equation}\label{eq:mf-finite-infty-error}
  d_{\eta,1}^{T}
  \bigl(
      \pmb \mu^{\delta,T},
      \pmb \mu^{\delta,\infty}|_{[0,T]}
  \bigr)
  \le
  \frac{B_\Lambda}{\eta-a_M}
  \frac{B_\Lambda R_{\delta,\alpha}}
       {\alpha(1-\kappa_{\delta,\eta,\alpha}^T)}
  e^{-\eta T}.
\end{equation}
Equivalently, for a.e. \(0\le t\le T\),
\begin{align}
  \norm{
      \pi_t^{\delta,T}
      -
      \pi_t^{\delta,\infty}
  }_{\Pi}
  &\le
  \frac{B_\Lambda R_{\delta,\alpha}}
       {\alpha(1-\kappa_{\delta,\eta,\alpha}^T)}
  e^{-\eta(T-t)},\label{eq:policy-pointwise-finite-infty}\\
  \norm{
      \mu_t^{\delta,T}
      -
      \mu_t^{\delta,\infty}
  }_1
  &\le
  \frac{B_\Lambda}{\eta-a_M}
  \frac{B_\Lambda R_{\delta,\alpha}}
       {\alpha(1-\kappa_{\delta,\eta,\alpha}^T)}
  e^{-\eta(T-t)}.\label{eq:mf-pointwise-finite-infty}
\end{align}
\end{theorem}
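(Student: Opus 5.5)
Existence and uniqueness come first. Since $S_V^{\delta,\infty}\le S_V^\delta$, we have $C_\mu^{\delta,\infty}\le C_\mu^{\delta,T}$ and hence $\kappa_{\delta,\eta,\alpha}^\infty\le\kappa_{\delta,\eta,\alpha}^T<1$. Theorem~\ref{thrm:main} then gives the finite-horizon MFE $(\pmb\pi^{\delta,T},\pmb\mu^{\delta,T})$, and Theorem~\ref{thm:infty-contraction} gives the infinite-horizon one $(\pmb\pi^{\delta,\infty},\pmb\mu^{\delta,\infty})$.

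The error bound will come from the standard stability estimate for contractions. Write $\bar{\pmb\pi}:=\pmb\pi^{\delta,\infty}|_{[0,T]}\in\Pi_T$. By the triangle inequality and Theorem~\ref{thm:contraction},
\[
d^T_{\eta,\Pi}(\pmb\pi^{\delta,T},\bar{\pmb\pi})\le d^T_{\eta,\Pi}(\Phi^T_\delta(\pmb\pi^{\delta,T}),\Phi_\delta^T(\bar{\pmb\pi}))+d^T_{\eta,\Pi}(\Phi_\delta^T(\bar{\pmb\pi}),\bar{\pmb\pi})\le \kappa^T_{\delta,\eta,\alpha}\, d^T_{\eta,\Pi}(\pmb\pi^{\delta,T},\bar{\pmb\pi})+d^T_{\eta,\Pi}(\Phi_\delta^T(\bar{\pmb\pi}),\bar{\pmb\pi}).
\]
It therefore suffices to show that the residual satisfies $d^T_{\eta,\Pi}(\Phi_\delta^T(\bar{\pmb\pi}),\bar{\pmb\pi})\le \alpha^{-1}B_\Lambda R_{\delta,\alpha}e^{-\eta T}$; dividing by $1-\kappa^T_{\delta,\eta,\alpha}$ then yields \eqref{eq:policy-finite-infty-error}.

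The residual is the step that needs care, and I will handle it by comparing the two best responses to the same flow. By causality of the forward equation, $\Gamma_M^T(\bar{\pmb\pi})=\pmb\mu^{\delta,\infty}|_{[0,T]}$, because restricting to $[0,T]$ commutes with the Carathéodory solution. Consequently $\Phi^T_\delta(\bar{\pmb\pi})=\Gamma_\Pi^T(\pmb Q^{\delta,T,\pmb\mu^{\delta,\infty}})$, while $\bar{\pmb\pi}=\Gamma_\Pi^\infty(\pmb Q^{\delta,\infty,\pmb\mu^{\delta,\infty}})|_{[0,T]}$, so both are soft-min responses to the same mean-field flow $\pmb\mu^{\delta,\infty}$. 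The finite-horizon value $\pmb V^{\delta,T,\pmb\mu}$ only involves $\pmb\mu$ on $[0,T]$, so Lemma~\ref{lem:finite-infty-fixed-flow} applies directly with $\pmb\mu=\pmb\mu^{\delta,\infty}$. Combining it with the pointwise soft-min bound \eqref{eq:ev-ptwise2} gives, for a.e.\ $t\le T$,
\[
\|\Phi_\delta^T(\bar{\pmb\pi})_t-\bar\pi_t\|_\Pi\le \alpha^{-1}B_\Lambda R_{\delta,\alpha}e^{-(\delta-B_\Lambda)(T-t)}.
\]
Multiplying by $e^{-\eta t}$ turns the right-hand side into $\alpha^{-1}B_\Lambda R_{\delta,\alpha}e^{-\eta T}e^{-(\delta-B_\Lambda-\eta)(T-t)}$. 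Since $\eta<\delta-B_\Lambda$, the last factor is at most $1$, and taking the essential supremum over $t$ gives the claimed residual bound.

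The mean-field estimate \eqref{eq:mf-finite-infty-error} follows by applying Lemma~\ref{lem:forward} to $\pmb\mu^{\delta,T}=\Gamma^T_M(\pmb\pi^{\delta,T})$ and $\pmb\mu^{\delta,\infty}|_{[0,T]}=\Gamma_M^T(\bar{\pmb\pi})$, which multiplies the policy bound by $B_\Lambda/(\eta-a_M)$. For the pointwise bounds \eqref{eq:policy-pointwise-finite-infty} and \eqref{eq:mf-pointwise-finite-infty}, note that $e^{-\eta t}\|\cdot\|\le d^T_{\eta,\cdot}$ for a.e.\ $t$. Multiplying the metric bounds through by $e^{\eta t}$ converts $e^{-\eta T}$ into $e^{-\eta(T-t)}$, and the statements follow.
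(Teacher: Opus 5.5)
Your proposal is correct and follows essentially the same route as the paper's proof. Existence and uniqueness come from $\kappa^{\infty}_{\delta,\eta,\alpha}\le\kappa^{T}_{\delta,\eta,\alpha}<1$; the residual of the restricted infinite-horizon policy is bounded using causality of $\Gamma_M^T$, Lemma~\ref{lem:finite-infty-fixed-flow}, the soft-min Lipschitz estimate, and $\eta<\delta-B_\Lambda$; then the contraction-stability rearrangement and Lemma~\ref{lem:forward} give the result, with your factorization $e^{-\eta T}e^{-(\delta-B_\Lambda-\eta)(T-t)}$ simply making explicit a step the paper states tersely.
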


\begin{proof}
The condition
\(\kappa_{\delta,\eta,\alpha}^T<1\) implies the finite-horizon contraction condition from Theorem~\ref{thrm:main}.  Since the infinite-horizon span
constant does not contain the terminal term \(\spn(g)\), the same condition also implies the infinite-horizon contraction condition in Theorem~\ref{thm:infty-contraction}.  Hence both equilibria are uniquely defined.

Set
\(
  \bar{\pmb \pi}^T:=\pmb \pi^{\delta,\infty}|_{[0,T]},
\)
and
\(
  \bar{\pmb \mu}^T:=\pmb \mu^{\delta,\infty}|_{[0,T]}.
\)
Since the forward equation is causal and both systems start from the same
\(\mu_0\), the finite-horizon forward operator applied to \(\bar{\pmb \pi}^T\)
produces exactly \(\bar{\pmb \mu}^T\).  Thus
\(
  \Gamma_M^T(\bar{\pmb \pi}^T)=\bar{\pmb \mu}^T.
\)

We first estimate how far \(\bar{\pmb \pi}^T\) is from being a fixed point of the
finite-horizon operator \(\Phi_\delta^T\).  
For a.e. \(t\in[0,T]\), since \(\pmb \pi^{\delta,\infty}\) is a fixed point of
\(\Phi_\delta^\infty\),
\(
  \bar\pi_t^T
  =
  \Gamma_\Pi^T
  \bigl(
      \pmb Q^{\delta,\infty,\pmb \mu^{\delta,\infty}}
  \bigr)_t.
\)
On the other hand, by the definition of the finite-horizon MFE operator
\(\Phi_\delta^T\),
\(
  \Phi_\delta^T(\bar{\pmb \pi}^T)_t
  =
  \Gamma_\Pi^T
  \bigl(
      \pmb Q^{\delta,T,\bar{\pmb \mu}^T}
  \bigr)_t.
\)
The two \(Q\)-functions above are generated by the same flow on \([0,T]\);
they differ only because one comes from the finite-horizon HJB equation and
the other from the infinite-horizon HJB equation.  By
Lemma~\ref{lem:finite-infty-fixed-flow},
\[
  \norm{
  Q_t^{\delta,T,\bar{\pmb \mu}^T}
  -
  Q_t^{\delta,\infty,\pmb \mu^{\delta,\infty}}
  }_\infty
  \le
  B_\Lambda R_{\delta,\alpha}
  e^{-(\delta-B_\Lambda)(T-t)} .
\]
Using the Lipschitz estimate for the soft-min map gives
\[
\begin{aligned}
  \norm{
  \Phi_\delta^T(\bar{\pmb \pi}^T)_t
  -
  \bar\pi_t^T
  }_{\Pi}
  &\le
  \alpha^{-1}
  \norm{
  Q_t^{\delta,T,\bar{\pmb \mu}^T}
  -
  Q_t^{\delta,\infty,\pmb \mu^{\delta,\infty}}
  }_\infty \le
  B_\Lambda R_{\delta,\alpha}\alpha^{-1}
  e^{-(\delta-B_\Lambda)(T-t)} .
\end{aligned}
\]
Multiplying by \(e^{-\eta t}\) yields
\[
  e^{-\eta t}
  \norm{
  \Phi_\delta^T(\bar{\pmb \pi}^T)_t
  -
  \bar\pi_t^T
  }_{\Pi}
  \le
  \frac{B_\Lambda R_{\delta,\alpha}}{\alpha}
  e^{-\eta t}
  e^{-(\delta-B_\Lambda)(T-t)} \le \frac{B_\Lambda R_{\delta,\alpha}}{\alpha}e^{-\eta T},
\]
where the last inequality follows from \(\eta<\delta-B_\Lambda\).
Taking the essential supremum over \(t\in[0,T]\), we obtain
\begin{equation}\label{eq:finite-residual}
  d_{\eta,\Pi}^{T}
  \bigl(
      \Phi_\delta^T(\bar{\pmb \pi}^T),
      \bar{\pmb \pi}^T
  \bigr)
  \le
  B_\Lambda R_{\delta,\alpha} \alpha^{-1}
  e^{-\eta T}.
\end{equation}
Now use fixed-point stability.  Since
\(\pmb \pi^{\delta,T}\) is a fixed point of \(\Phi_\delta^T\), we have
\(
  \pmb \pi^{\delta,T}=\Phi_\delta^T(\pmb \pi^{\delta,T}).
\)
Therefore,
\[
\begin{aligned}
  d_{\eta,\Pi}^{T}
  (
      \pmb \pi^{\delta,T},
      \bar{\pmb \pi}^T
  )
  &=
  d_{\eta,\Pi}^{T}
  (
      \Phi_\delta^T(\pmb \pi^{\delta,T}),
      \bar{\pmb \pi}^T
  )                                                     \\
  &\le
  d_{\eta,\Pi}^{T}
  (
      \Phi_\delta^T(\pmb \pi^{\delta,T}),
      \Phi_\delta^T(\bar{\pmb \pi}^T)
  )
  +
  d_{\eta,\Pi}^{T}
  (
      \Phi_\delta^T(\bar{\pmb \pi}^T),
      \bar{\pmb \pi}^T
  )                                                     \\
  &\le
  \kappa_{\delta,\eta,\alpha}^T
  d_{\eta,\Pi}^{T}
  (
      \pmb \pi^{\delta,T},
      \bar{\pmb \pi}^T
  )
  +
  B_\Lambda R_{\delta,\alpha}\alpha^{-1}
  e^{-\eta T},
\end{aligned}
\]
where the last line uses the contraction estimate for \(\Phi_\delta^T\) and
\eqref{eq:finite-residual}. Rearranging leads to \eqref{eq:policy-finite-infty-error}.

It remains to estimate the mean-field flows.  Applying the forward Lipschitz
estimate from Lemma~\ref{lem:forward} to
\(\pmb \pi^{\delta,T}\) and \(\bar{\pmb \pi}^T\), we get
\[
\begin{aligned}
  d_{\eta,1}^{T}
  \bigl(
      \pmb \mu^{\delta,T},
      \bar{\pmb \mu}^T
  \bigr)
  &=
  d_{\eta,1}^{T}
  \bigl(
      \Gamma_M^T(\pmb \pi^{\delta,T}),
      \Gamma_M^T(\bar{\pmb \pi}^T)
  \bigr) \le
  \frac{B_\Lambda}{\eta-a_M}
  d_{\eta,\Pi}^{T}
  \bigl(
      \pmb \pi^{\delta,T},
      \bar{\pmb \pi}^T
  \bigr).
\end{aligned}
\]
Combining this with \eqref{eq:policy-finite-infty-error} proves
\eqref{eq:mf-finite-infty-error}.

Finally, the point-wise estimates follow directly from the definitions of the
weighted metrics. Indeed,
\[
  e^{-\eta t}
  \norm{
      \pi_t^{\delta,T}
      -
      \pi_t^{\delta,\infty}
  }_{\Pi}
  \le
  d_{\eta,\Pi}^{T}
  \bigl(
      \pmb \pi^{\delta,T},
      \pmb \pi^{\delta,\infty}|_{[0,T]}
  \bigr),
\]which, together with \eqref{eq:policy-finite-infty-error}, gives
\eqref{eq:policy-pointwise-finite-infty}.
The proof of \eqref{eq:mf-pointwise-finite-infty} is identical.
\end{proof}

\begin{remark}
The error bound is strongest away from the terminal time.  For each fixed
\(t\), the difference between the finite-horizon and infinite-horizon MFE
decays exponentially as \(T\to\infty\).  Near \(t=T\), however, the
finite-horizon problem remains influenced by the imposed terminal cost \(g\). An unweighted uniform-in-time error bound over the entire interval \([0,T]\) is therefore not expected in general.
\end{remark}

\section{Discounted--Undiscounted Error Bounds for Finite-Horizon MFGs}\label{sect:undisc}

Throughout this section, we fix a horizon length \(T<\infty\). We study the perturbation introduced by discounting in finite-horizon regularized MFGs, where the undiscounted model is obtained by setting \(\delta=0\). We first bound the distance between the discounted and undiscounted best-response maps.
Under a contraction condition for the discounted MFE operator, this bound is then lifted to an MFE-level error estimate between the discounted MFE and any undiscounted MFE, if such an equilibrium exists.

For a fixed \(\pmb \mu\in\mathcal M_T(\mu_0)\), and for 
\((t,x)\in[0,T]\times\X\), we similarly define the undiscounted regularized cost by
\begin{equation}\label{eq:undisc-cost-functional}
  J^{0,T}_\alpha(t,x;\hat{\pmb \pi},\pmb \mu)
  :=
  \mathbb E^{\hat{\pmb \pi}}_{t,x}\bigg[
      \int_t^T
      \left(
      \sum_{a\in\A}\hat\pi_s(a\mid X_s)c(X_s,a,\mu_s)
      +\alpha\Omega(\hat\pi_s(\cdot\mid X_s))
      \right)\dd s
      +g(X_T)
  \bigg],
\end{equation}
whose corresponding undiscounted value function is
\begin{equation}\label{eq:undisc-value}
  V_t^{0,T,\pmb \mu}(x)
:=\inf_{\hat{\pmb \pi}\in\Pi_T}J^{0,T}_\alpha(t,x;\hat{\pmb \pi},\pmb \mu).
\end{equation}

We will use the undiscounted regularized finite-horizon operators, obtained by setting \(\delta=0\) in the corresponding finite-horizon regularized control problem in Subsection~\ref{subsec:infinite-ns}. They are denoted by
 \(\Gamma_\Pi^T, \Gamma_Q^{0,T}, \Gamma_M^T\) and are defined on the same spaces
\(\Pi_T,\mathcal M_T(\mu_0),\mathcal V_T,\mathcal Q_T\). We define
\(
  \Phi_0^T
  :=
  \Gamma_\Pi^T\circ\Gamma_Q^{0,T}\circ\Gamma_M^T
\). An 
 finite-horizon undiscounted regularized MFE is a pair \((\pmb \pi^{0,T},\pmb \mu^{0,T})\in\Pi_T\times\mathcal M_T(\mu_0)\) such that
\(
    \pmb \pi^{0,T}=\Phi_0^T(\pmb \pi^{0,T}),\)
and \(\pmb \mu^{0,T}=\Gamma_M^T(\pmb \pi^{0,T}).\)

The next proposition provides a uniform bound on the difference between the
discounted and undiscounted MFE operators that depends on the horizon length \(T\), which is the main result of this section.

\begin{proposition}\label{prop:map-gap}
Suppose that Assumption~\ref{ass:basic-reg} holds.
For every \(\eta\ge0\),
\begin{equation}\label{eq:map-gap-weighted}
  \sup_{\pmb \pi\in\Pi_T}
  d_{\eta,\Pi}^T(\Phi_\delta^T(\pmb \pi),\Phi_0^T(\pmb \pi))
  \le
  \frac{B_\Lambda}{\alpha}
  \delta
  \left(
      \frac{M_{\alpha,c}}{2}T^2+M_gT
  \right).
\end{equation}
\end{proposition}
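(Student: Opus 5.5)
The plan is to fix an arbitrary \(\pmb\pi\in\Pi_T\) and observe that both \(\Phi_\delta^T(\pmb\pi)\) and \(\Phi_0^T(\pmb\pi)\) are built from the \emph{same} mean-field flow \(\pmb\mu:=\Gamma_M^T(\pmb\pi)\). This holds because the forward operator \(\Gamma_M^T\) does not depend on \(\delta\). Hence the two policies differ only through the value functions \(\pmb V^{\delta,T,\pmb\mu}\) and \(\pmb V^{0,T,\pmb\mu}\) that enter the \(Q\)-functions.

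In \eqref{eq:Qdelta}, the cost terms \(c(x,a,\mu_t)\) cancel, and so
\[
\norm{Q_t^{\delta,T,\pmb\mu}-Q_t^{0,T,\pmb\mu}}_\infty\le B_\Lambda\norm{V_t^{\delta,T,\pmb\mu}-V_t^{0,T,\pmb\mu}}_\infty .
\]
The soft-min Lipschitz estimate \eqref{eq:ev-ptwise2} from Lemma~\ref{lem:q-policy} then gives, for a.e.\ \(t\),
\[
\norm{\Phi_\delta^T(\pmb\pi)_t-\Phi_0^T(\pmb\pi)_t}_\Pi\le \frac{B_\Lambda}{\alpha}\norm{V_t^{\delta,T,\pmb\mu}-V_t^{0,T,\pmb\mu}}_\infty .
\]
So everything reduces to a fixed-flow bound on the value gap.

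For the value gap I will avoid Grönwall and compare the cost functionals directly. Both values are infima over the same admissible class \(\Pi_T\), by \eqref{eq:discounted-value} and \eqref{eq:undisc-value}. Therefore
\[
|V_t^{\delta,T,\pmb\mu}(x)-V_t^{0,T,\pmb\mu}(x)|\le\sup_{\hat{\pmb\pi}\in\Pi_T}|J^{\delta,T}_\alpha(t,x;\hat{\pmb\pi},\pmb\mu)-J^{0,T}_\alpha(t,x;\hat{\pmb\pi},\pmb\mu)| .
\]
For a fixed \(\hat{\pmb\pi}\), the integrands of \eqref{eq:discounted-cost} and \eqref{eq:undisc-cost-functional} differ by the factor \(1-e^{-\delta(s-t)}\) on the running cost and by \(1-e^{-\delta(T-t)}\) on the terminal cost. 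The regularized running cost is bounded in absolute value by \(M_{\alpha,c}=M_c+\alpha\log|\A|\), since \(-\log|\A|\le\Omega\le0\), and \(|g|\le M_g\). Using \(1-e^{-y}\le y\) yields
\[
|J^{\delta,T}_\alpha-J^{0,T}_\alpha|\le M_{\alpha,c}\int_t^T\delta(s-t)\dd s+\delta(T-t)M_g=\delta\Bigl(\tfrac{M_{\alpha,c}}2(T-t)^2+M_g(T-t)\Bigr),
\]
uniformly in \(\hat{\pmb\pi}\), \(x\) and \(\pmb\mu\).

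Combining the steps, I multiply by \(e^{-\eta t}\le1\) (valid since \(\eta\ge0\)) and bound \((T-t)\le T\). Taking the essential supremum over \(t\) and then the supremum over \(\pmb\pi\) gives \eqref{eq:map-gap-weighted}.

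The only delicate point is the identification of \(V^{\delta,T,\pmb\mu}\) and \(V^{0,T,\pmb\mu}\) as infima over a common policy class, including the case \(\delta=0\). This is supplied by the dynamic programming and verification results cited in Section~\ref{sect:prelim}, which apply verbatim for \(\delta=0\). The inf-versus-inf comparison is then elementary, and no condition such as \(\delta>B_\Lambda\) is needed.
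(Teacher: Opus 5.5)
Your proposal is correct and follows the paper's proof essentially step for step. You fix the common flow \(\pmb\mu=\Gamma_M^T(\pmb\pi)\) and bound the value gap by comparing \(J^{\delta,T}_\alpha\) and \(J^{0,T}_\alpha\) uniformly over deviating policies with \(1-e^{-y}\le y\). You then pass to \(Q\)-functions with the factor \(B_\Lambda\), to policies with the soft-min factor \(1/\alpha\), and finish with \(e^{-\eta t}\le 1\) and \(T-t\le T\). The only difference is the order of presentation; the paper's argument also uses no Gr\"onwall step here.
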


\begin{proof}
Fix \(\pmb \pi\in\Pi_T\), and set
\(
    \pmb \mu:=\Gamma_M^T(\pmb \pi).
\)
We first compare the discounted and undiscounted control problems under this
fixed mean-field flow.  For any admissible deviating policy
\(\hat{\pmb \pi}\in\Pi_T\), define the regularized running cost along the controlled
trajectory by
\(
  \ell_s^{\hat{\pmb \pi},\pmb \mu}
  :=
  \sum_{a\in\A}\hat\pi_s(a\mid X_s)c(X_s,a,\mu_s)
  +
  \alpha\Omega(\hat\pi_s(\cdot\mid X_s)).
\)
By the definition of \(\Omega\), it holds that
\(
    |\ell_s^{\hat{\pmb \pi},\pmb \mu}|
    \le
    M_c+\alpha\log|\A|
    =
    M_{\alpha,c}.
\)
Moreover, \(|g(X_T)|\le M_g\) almost surely. For fixed \((\hat{\pmb \pi},\pmb \mu)\), the controlled
state dynamics in the discounted and undiscounted problems are the same; only
the weights in the objectives are different.  Therefore,
\begin{align*}
\left|
J^{\delta,T}_\alpha(t,x;\hat{\pmb \pi},\pmb \mu)
-
J^{0,T}_\alpha(t,x;\hat{\pmb \pi},\pmb \mu)
\right| & =
\left|
\mathbb E^{\hat{\pmb \pi}}_{t,x}
\left[
\int_t^T
\bigl(e^{-\delta(s-t)}-1\bigr)
\ell_s^{\hat{\pmb \pi},\pmb \mu}\dd s
+
\bigl(e^{-\delta(T-t)}-1\bigr)g(X_T)
\right]
\right|                                                        \\
&\quad \le
M_{\alpha,c}
\int_t^T
\bigl(1-e^{-\delta(s-t)}\bigr)\dd s
+
M_g\bigl(1-e^{-\delta(T-t)}\bigr).
\end{align*}
The right-hand side is independent of \(\hat{\pmb \pi}\).  Hence, it holds that
\begin{equation}\label{eq:value-discount-gap}
\norm{V_t^{\delta,T,\pmb \mu}-V_t^{0,T,\pmb \mu}}_\infty
\le
M_{\alpha,c}
\int_t^T
\bigl(1-e^{-\delta(s-t)}\bigr)\dd s
+
M_g\bigl(1-e^{-\delta(T-t)}\bigr).
\end{equation}

We now estimate the two terms on the right-hand side.  Let
\(\tau:=T-t\).  Then
\(
\int_t^T
\bigl(1-e^{-\delta(s-t)}\bigr)\dd s
=
\int_0^\tau
\bigl(1-e^{-\delta r}\bigr)\dd r .
\)
Using \(1-e^{-\delta r}\le \delta r\), we have the upper bound
\(
\int_t^T
\bigl(1-e^{-\delta(s-t)}\bigr)\dd s
\le
\int_0^\tau \delta r\,\dd r
=
\frac{\delta}{2}(T-t)^2.
\)
Similarly,
\(
    1-e^{-\delta(T-t)}
    \le
    \delta(T-t).
\)
Substituting these two estimates into \eqref{eq:value-discount-gap} leads to
\begin{equation*}
\norm{V_t^{\delta,T,\pmb \mu}-V_t^{0,T,\pmb \mu}}_\infty
\le
\delta
\left(
    \frac{M_{\alpha,c}}{2}(T-t)^2
    +
    M_g(T-t)
\right).
\end{equation*}

We next pass from values to \(Q\)-functions.  Since the discounted and
undiscounted \(Q\)-functions have the same running-cost term, they differ only
through their value functions:
\begin{align*}
|Q_t^{\delta,T,\pmb \mu}(x,a)-Q_t^{0,T,\pmb \mu}(x,a)|
&=
\left|
\sum_{y\in\X}
\Lambda(x,y,a,\mu_t)
\bigl(V_t^{\delta,T,\pmb \mu}(y)-V_t^{0,T,\pmb \mu}(y)\bigr)
\right| \le
B_\Lambda
\norm{V_t^{\delta,T,\pmb \mu}-V_t^{0,T,\pmb \mu}}_\infty.
\end{align*}
Therefore,
\(
\norm{Q_t^{\delta,T,\pmb \mu}-Q_t^{0,T,\pmb \mu}}_\infty
\le
B_\Lambda
\delta
\left(
    \frac{M_{\alpha,c}}{2}(T-t)^2
    +
    M_g(T-t)
\right).
\)
Finally, applying an analogue of soft-min Lipschitz estimate from
Lemma~\ref{lem:q-policy} yields
\begin{align*}
\norm{\Phi_\delta^T(\pmb \pi)_t-\Phi_0^T(\pmb \pi)_t}_{\Pi}
=
\norm{
\Gamma_\Pi^T(\pmb Q^{\delta,T,\pmb \mu})_t
-
\Gamma_\Pi^T(\pmb Q^{0,T,\pmb \mu})_t
}_{\Pi} &\le
\frac{1}{\alpha}
\norm{Q_t^{\delta,T,\pmb \mu}-Q_t^{0,T,\pmb \mu}}_\infty                     \\
&\le
\frac{B_\Lambda}{\alpha}
\delta
\left(
    \frac{M_{\alpha,c}}{2}(T-t)^2
    +
    M_g(T-t)
\right).
\end{align*}
Since \(\eta\ge0\), we have \(e^{-\eta t}\le1\).  Hence
\begin{align*}
d_{\eta,\Pi}^T(\Phi_\delta^T(\pmb \pi),\Phi_0^T(\pmb \pi))
=
\esssup_{0\le t\le T}
e^{-\eta t}
\norm{\Phi_\delta^T(\pmb \pi)_t-\Phi_0^T(\pmb \pi)_t}_{\Pi} \le
\frac{B_\Lambda}{\alpha}
\delta
\left(
    \frac{M_{\alpha,c}}{2}T^2+M_gT
\right).
\end{align*}
Taking the essential supremum over \(\pmb \pi\in\Pi_T\) proves \eqref{eq:map-gap-weighted}.
\end{proof}

The following corollary lifts the map perturbation bound to the MFE level by
using the contraction property of the discounted MFE operator.

\begin{corollary}\label{cor:mfe-gap}
Suppose that there exists an undiscounted regularized MFE \((\pmb \pi^{0,T},\pmb\mu^{0,T})\).
Then, under the conditions of Theorem~\ref{thrm:main},
\begin{equation}\label{eq:mfe-policy-gap}
  d_{\eta,\Pi}^T(\pmb \pi^{\delta,T},\pmb \pi^{0,T})
  \le
  \frac{B_\Lambda}{\alpha(1-\kappa_{\delta,\eta,\alpha}^T)}
  \delta
  \left(
      \frac{M_{\alpha,c}}{2}T^2+M_gT
  \right).
\end{equation}
Moreover,
\begin{equation}\label{eq:mfe-measure-gap}
  d_{\eta,1}^T(\pmb \mu^{\delta,T},\pmb \mu^{0,T})
  \le
  \frac{B_\Lambda}{\eta-a_M}
  \frac{B_\Lambda}{\alpha(1-\kappa_{\delta,\eta,\alpha}^T)}
  \delta
  \left(
      \frac{M_{\alpha,c}}{2}T^2+M_gT
  \right).
\end{equation}
\end{corollary}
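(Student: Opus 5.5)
The argument mirrors the proof of Theorem~\ref{thm:finite-infty-error}: view \(\pmb \pi^{0,T}\) as an approximate fixed point of the discounted operator \(\Phi_\delta^T\), with residual controlled by Proposition~\ref{prop:map-gap}, and then use the contraction of \(\Phi_\delta^T\) to absorb the error. Under the conditions of Theorem~\ref{thrm:main}, Theorem~\ref{thm:contraction} shows that \(\Phi_\delta^T\) is \(\kappa_{\delta,\eta,\alpha}^T\)-Lipschitz on \((\Pi_T,d_{\eta,\Pi}^T)\) with \(\kappa_{\delta,\eta,\alpha}^T<1\). It also gives existence and uniqueness of the discounted MFE \(\pmb \pi^{\delta,T}=\Phi_\delta^T(\pmb \pi^{\delta,T})\).

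Since \(\pmb \pi^{0,T}=\Phi_0^T(\pmb \pi^{0,T})\), the triangle inequality gives
\[
d_{\eta,\Pi}^T(\pmb \pi^{\delta,T},\pmb \pi^{0,T})
\le d_{\eta,\Pi}^T\bigl(\Phi_\delta^T(\pmb \pi^{\delta,T}),\Phi_\delta^T(\pmb \pi^{0,T})\bigr)
+ d_{\eta,\Pi}^T\bigl(\Phi_\delta^T(\pmb \pi^{0,T}),\Phi_0^T(\pmb \pi^{0,T})\bigr).
\]
The contraction estimate bounds the first term by \(\kappa_{\delta,\eta,\alpha}^T\, d_{\eta,\Pi}^T(\pmb \pi^{\delta,T},\pmb \pi^{0,T})\). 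Proposition~\ref{prop:map-gap}, applied at \(\pmb \pi=\pmb \pi^{0,T}\), bounds the second term by \(\frac{B_\Lambda}{\alpha}\delta\bigl(\frac{M_{\alpha,c}}{2}T^2+M_gT\bigr)\); this is valid because \(\eta>a_M\ge0\). The left-hand side is finite, since policy differences are bounded by \(2\) in \(\|\cdot\|_\Pi\). We can therefore move the \(\kappa\) term to the left and divide by \(1-\kappa_{\delta,\eta,\alpha}^T>0\), which gives \eqref{eq:mfe-policy-gap}.

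For \eqref{eq:mfe-measure-gap}, note that \(\pmb \mu^{\delta,T}=\Gamma_M^T(\pmb \pi^{\delta,T})\) and \(\pmb \mu^{0,T}=\Gamma_M^T(\pmb \pi^{0,T})\). The forward operator is the same in both models, because discounting enters only the cost. Lemma~\ref{lem:forward}, which applies since \(\eta>a_M\), then gives \(d_{\eta,1}^T(\pmb \mu^{\delta,T},\pmb \mu^{0,T})\le \frac{B_\Lambda}{\eta-a_M}\,d_{\eta,\Pi}^T(\pmb \pi^{\delta,T},\pmb \pi^{0,T})\). Substituting \eqref{eq:mfe-policy-gap} finishes the proof.

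No step is a real obstacle. The only points to check are that \(\Phi_0^T\) and \(\Phi_\delta^T\) share \(\Gamma_M^T\) and \(\Gamma_\Pi^T\), so that Proposition~\ref{prop:map-gap} compares the two operators at the same input, and that the uniform bound in Proposition~\ref{prop:map-gap} holds for every \(\eta\ge0\), which covers the \(\eta\) fixed in Theorem~\ref{thrm:main}.
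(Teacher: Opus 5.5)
Your proposal is correct and follows the paper's proof step for step. You use the same triangle inequality through \(\Phi_\delta^T(\pmb\pi^{0,T})\), the contraction bound from Theorem~\ref{thm:contraction} for the first term, Proposition~\ref{prop:map-gap} at \(\pmb\pi=\pmb\pi^{0,T}\) for the second term, rearrangement, and then Lemma~\ref{lem:forward} for the mean-field estimate. Your extra checks are justifications the paper leaves implicit: finiteness of the left-hand side before rearranging, \(\eta>a_M\ge 0\) so the proposition applies, and the shared \(\Gamma_M^T\) and \(\Gamma_\Pi^T\).
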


\begin{proof}
By definition, it holds that
\( \pmb \pi^{\delta,T}=\Phi_\delta^T(\pmb \pi^{\delta,T}),\) and \( \pmb \pi^{0,T}=\Phi_0^T(\pmb \pi^{0,T}).\)
Therefore, by the triangle inequality, we obtain
\begin{equation}\label{eq:suyo10}
d_{\eta,\Pi}^T(\pmb \pi^{\delta,T},\pmb \pi^{0,T}) \le
d_{\eta,\Pi}^T
\left(
\Phi_\delta^T(\pmb \pi^{\delta,T}),
\Phi_\delta^T(\pmb \pi^{0,T})
\right)
+
d_{\eta,\Pi}^T
\left(
\Phi_\delta^T(\pmb \pi^{0,T}),
\Phi_0^T(\pmb \pi^{0,T})
\right).
\end{equation}
By Theorem~\ref{thm:contraction}, the discounted operator \(\Phi_\delta^T\) is a
contraction under \(d_{\eta,\Pi}^T\) with contraction constant
\(\kappa_{\delta,\eta,\alpha}^T\).  Hence
\begin{equation}\label{eq:suyo11}
d_{\eta,\Pi}^T
\left(
\Phi_\delta^T(\pmb \pi^{\delta,T}),
\Phi_\delta^T(\pmb \pi^{0,T})
\right)
\le
\kappa_{\delta,\eta,\alpha}^T
d_{\eta,\Pi}^T(\pmb \pi^{\delta,T},\pmb \pi^{0,T}).
\end{equation}
Thus, \eqref{eq:suyo10} and \eqref{eq:suyo11} yield
\[
d_{\eta,\Pi}^T(\pmb \pi^{\delta,T},\pmb \pi^{0,T})
\le
\kappa_{\delta,\eta,\alpha}^T
d_{\eta,\Pi}^T(\pmb \pi^{\delta,T},\pmb \pi^{0,T})
+
\sup_{\pmb \pi\in\Pi_T}
d_{\eta,\Pi}^T(\Phi_\delta^T(\pmb \pi),\Phi_0^T(\pmb \pi)).
\]
Rearranging and applying Proposition~\ref{prop:map-gap} gives
\eqref{eq:mfe-policy-gap}.
Finally, since it holds that
\( \pmb \mu^{\delta,T}=\Gamma_M^T(\pmb \pi^{\delta,T}),\) and  \( \pmb \mu^{0,T}=\Gamma_M^T(\pmb \pi^{0,T}),\)
Lemma~\ref{lem:forward} implies that
\[
  d_{\eta,1}^T(\pmb \mu^{\delta,T},\pmb \mu^{0,T})
  \le
  B_\Lambda(\eta-a_M)^{-1}
  d_{\eta,\Pi}^T(\pmb \pi^{\delta,T},\pmb \pi^{0,T}).
\]
Combining this estimate with \eqref{eq:mfe-policy-gap} proves
\eqref{eq:mfe-measure-gap}.
\end{proof}
\begin{remark}
This corollary shows that the discounted MFE provides a controlled
approximation of a finite-horizon undiscounted MFE. 
Notably, the undiscounted operator \(\Phi_0^T\) need not be contractive; the argument only
requires existence of an undiscounted MFE and contraction of the discounted operator.
\end{remark}
\section{Conclusion}
In this work, we have studied the contraction properties of continuous-time non-stationary discounted regularized MFGs in which the state process evolves according to an explicit Markov kernel. In the presence of a discount rate, we used the theory of positive operators to develop a quantitative method for calculating contraction rates in the finite-horizon setting. Furthermore, we have demonstrated that the asymptotic contraction condition in the finite-horizon case coincides with that of the infinite-horizon case.

In contrast to the discrete-time case, the stationary continuous-time case presents additional difficulties. In particular, one must solve \(F^{\pmb \pi}(\pmb \mu)=0\) (where \(\pmb \pi\) and \(\pmb \mu\) are time-homogeneous) at each step and then establish Lipschitz properties of the mean-field terms that satisfy this equation in terms of \(\pmb \pi\), which introduces further technical challenges. Thus, the stationary case requires new insights, which we leave as a future work.
\appendix
\section{A Primer on Positive Operators}
In this section, we will provide the definitions and statements that are used throughout the text related to positive operators. Let \((B,\|\cdot\|_B)\) be a Banach space over the real or the complex field.
\begin{definition}\label{def:a1}
     We say that \(K\subset B\) is a \emph{cone} if it is a closed convex set such that \(\lambda K \subset K\) for all \(\lambda \ge 0\) and \(K\cap (-K) = \{0\}\). If, in addition, \(K-K\) is dense in \(B\), then we say that \(K\) is a \emph{total cone}.
\end{definition}

\begin{definition}\label{def:a2}
    Let \(T:B \to B\) be an operator. We say that \(T\) is \emph{compact} if it maps bounded subsets of \(B\) to relatively compact subsets of \(B\).
\end{definition}

\begin{definition}\label{def:a3}
    Let \(K\) be a total cone of \(B\). Suppose that \(T:B\to B\) is a compact operator (not necessarily linear). We say that \(T\) is a \emph{positive operator} on \(B\) if \(T(K)\subset K\).
\end{definition}

\begin{definition}\label{def:a4}
    Let \(T:B\to B\) be an operator. The spectral radius of \(T\), \(\rho(T)\), is defined as 
    \(
    \rho(T) := \lim_{n \to \infty} \| T^n\|^{1/n}_{B \to B},
    \)
    where \(\| \cdot \|_{B \to B}\) is the operator norm induced by \(\|\cdot \|_B\), which is also known as Gelfand's formula
\end{definition}

\begin{theorem}[Krein--Rutman]\label{thrm:a5}
Let \(T:B\to B\) be a positive (non-zero) compact operator. If \(\rho(T)>0\), then the spectral radius of \(T\) on \(B\) is an eigenvalue of \(T\). Furthermore, there exists an eigenvector \(u \in K-\{0\}\) such that \(Tu=\rho(T)u\).
\end{theorem}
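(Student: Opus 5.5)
The statement is the Krein--Rutman theorem, quoted in the appendix as a standard tool; I would prove it in the classical way via resolvent growth and a perturbed fixed-point argument. Two preliminaries make the statement sound. Linearity of \(T\) is implicitly required, since spectral radius and eigenvalues are otherwise not meaningful. Positivity is understood relative to a total cone \(K\), so that \(T(K)\subset K\).

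First, I would show that \(\rho:=\rho(T)\) lies in the spectrum. For \(\lambda>\rho\) the Neumann series gives
\[
R(\lambda):=(\lambda I-T)^{-1}=\sum_{n\ge0}\lambda^{-n-1}T^n ,
\]
and this is a positive operator. Since \(T\) is compact and \(\rho>0\), the spectrum off the origin consists of eigenvalues of finite multiplicity, and some \(\mu\) with \(|\mu|=\rho\) is among them. The key claim is \(\|R(\lambda)\|\to\infty\) as \(\lambda\downarrow\rho\). Suppose instead that \(\|R(\lambda)\|\) stays bounded on \((\rho,\rho+1)\). Because \(K\) is total, every \(x\) can be approximated by differences \(k_1-k_2\) with \(k_i\in K\). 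For \(k\in K\) and complex \(\zeta\) with \(|\zeta|=\lambda\), the domination \(|R(\zeta)k|\le R(\lambda)k\) holds in the order sense. Through the Neumann series this yields boundedness of the resolvent near the circle \(|\zeta|=\rho\). That contradicts the existence of an eigenvalue \(\mu\) on this circle, since \(\|R(\zeta)\|\ge 1/\operatorname{dist}(\zeta,\sigma(T))\).

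\textbf{Main obstacle.} In a general ordered Banach space, passing this order-domination into a norm bound requires the cone to be normal, or else some argument that avoids norms. In the paper's application on \(L^\infty([0,T])\) the cone is normal, so I would invoke normality. For full generality I would instead use the more robust Schaefer-type route: fix \(u_0\in K\) with \(R(\lambda_n)u_0\) unbounded, which exists by the uniform boundedness principle and totality. Then normalize
\[
v_n:=\frac{R(\lambda_n)u_0}{\|R(\lambda_n)u_0\|}\in K .
\]

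Second, I would extract the eigenvector. The resolvent identity gives
\[
\lambda_n v_n - T v_n=\frac{u_0}{\|R(\lambda_n)u_0\|}\to 0 .
\]
Compactness of \(T\) gives a subsequence along which \(Tv_n\to w\). Hence \(v_n=\lambda_n^{-1}(Tv_n+o(1))\to \rho^{-1}w=:u\). Since \(K\) is closed, \(u\in K\), and \(\|u\|=1\), so \(u\neq0\). Passing to the limit yields \(Tu=\rho u\). This shows both that \(\rho\) is an eigenvalue and that it has an eigenvector in \(K\setminus\{0\}\). The whole argument thus reduces to the resolvent blow-up along the positive reals, which is the only delicate point.
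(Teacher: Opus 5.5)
Your second step is correct and is the standard one, essentially the argument in the Deimling reference the paper cites in place of a proof. Once some $u_0\in K$ satisfies $\|R(\lambda_n)u_0\|\to\infty$ for $\lambda_n\downarrow\rho$, normalization, compactness of $T$ and closedness of $K$ give $Tu=\rho u$ with $u\in K\setminus\{0\}$.

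On $L^\infty([0,T])$, the only space where the paper applies the theorem, your first step also works. There the complex modulus exists and $|R(\zeta)f|\le R(|\zeta|)|f|$ pointwise. Also $\|R(\lambda)\|=\|R(\lambda)\mathbf 1\|_\infty$, so $u_0=\mathbf 1$ works.

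For the statement as posed, however (an arbitrary Banach space with a total cone), the first step has a genuine gap:
\begin{itemize}
\item The domination $|R(\zeta)k|\le R(\lambda)k$ has no meaning without a lattice modulus.
\item Even for a normal cone, density of $K-K$ gives no norm-controlled splitting $x=k_1-k_2$, so no bound on $\|R(\zeta)\|$ follows.
\item Your ``Schaefer-type'' fallback is circular. It presupposes $\|R(\lambda_n)\|\to\infty$ along the reals, which is exactly the claim $\rho\in\sigma(T)$ that the first step was supposed to prove.
\item Even granting that, the uniform boundedness principle plus totality does not yield $u_0\in K$. Pointwise boundedness of $R(\lambda_n)$ on the dense subspace $K-K$ does not by itself contradict $\|R(\lambda_n)\|\to\infty$.
\end{itemize}

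The missing ingredient is to combine the pole structure of the resolvent of a compact operator with positive functionals, which avoids norms altogether.
\begin{itemize}
\item Take $\mu=\rho\omega\in\sigma(T)$ with $|\omega|=1$. By Riesz--Schauder theory, $\mu$ is a pole of $R$ of some order $k\ge1$, so $Q:=\lim_{t\downarrow\rho}(t-\rho)^kR(t\omega)$ is a nonzero bounded operator on the complexification of $B$.
\item By continuity of $Q$ and totality of $K$, there is $u_0\in K$ with $Qu_0\neq0$.
\item Since $K$ is closed with $K\cap(-K)=\{0\}$, the Hahn--Banach theorem provides $x^*\in K^*$ with $x^*(Qu_0)\neq0$.
\item As $x^*(T^nu_0)\ge0$ for every $n$,
\[
|x^*(R(t\omega)u_0)|\le\sum_{n\ge0}t^{-n-1}x^*(T^nu_0)=x^*(R(t)u_0),\qquad t>\rho .
\]
Hence $x^*(R(t)u_0)\ge c\,(t-\rho)^{-k}$ for $t$ near $\rho$, with some $c>0$.
\end{itemize}
This proves $\rho\in\sigma(T)$ and at the same time produces the vector $u_0\in K$ your second step needs, with no normality assumption.
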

\begin{proof}
    See \cite[Theorem 19.3]{deimling2013nonlinear}.
\end{proof}
\printbibliography
\end{document}